\newtheorem{theorem}{Theorem} 
\newtheorem{lemma}{Lemma}
\newtheorem{claim}{Claim}
\newtheorem{corollary}{Corollary}
\newtheorem{conjecture}{Conjecture}
\newtheorem{proposition}{Proposition}
\newtheorem{definition}{Definition}
\newcommand{\etal}{{\em et al.}}
\newcommand{\container}{covering\xspace}
\newcommand{\containers}{coverings\xspace}
\newcommand{\containment}{covering\xspace}
\newcommand{\brimful}{brimful\xspace}
\renewcommand{\phi}{\varphi}
\newcommand{\chull}{\ensuremath{\Phi}}
\newcommand{\segset}{\ensuremath{S_{\textsf{seg}}}}
\newcommand{\symset}{\ensuremath{S_{\textsf{sym}}}}
\newcommand{\wormset}{\ensuremath{S_{\textsf{worm}}}}
\newcommand{\ccset}{\ensuremath{S_{\textsf{c}}}}
\newcommand{\triset}{\ensuremath{S_{\textsf{t}}}}
\newcommand{\gt}{\ensuremath{\Gamma_{\textsf{t}}}}
\newcommand{\gcc}{\ensuremath{\Gamma_3}}
\newcommand{\peri}[1]{\ensuremath{\ell(#1)}}
\DeclareRobustCommand{\irtriangle}{%
  \begingroup
  \setlength{\unitlength}{1ex}%
  \begin{picture}(1,1)
  \polyline(1.5,0)(0,0)(1.5,1.5)(1.5,0)(.5,0)
  \end{picture}%
  \endgroup
}
\title{Universal convex \container problems under translation and discrete rotations\thanks{Work by M. K. Jung and H.-K. Ahn were supported by the Institute of Information \& communications Technology Planning \& Evaluation(IITP) grant funded by the Korea government(MSIT) (No. 2017-0-00905, Software Star Lab (Optimal Data Structure and Algorithmic Applications in Dynamic Geometric Environment)) and (No. 2019-0-01906, Artificial Intelligence Graduate School Program(POSTECH)). S.D.Yoon was supported by ``Cooperative Research Program for Agriculture Science \& Technology Development (Project No. PJ015269032022)'' Rural Development Administration, Republic of Korea. 
Work by T. Tokuyama was partially supported by MEXT JSPS Kakenhi 20H04143.}
}
\author{Mook Kwon Jung\thanks{Department of Computer Science and Engineering, Pohang University of Science and Technology, Pohang, Korea. \texttt{jmg1032@postech.ac.kr}}
\and Sang Duk Yoon\thanks{Department of Service and Design Engineering, SungShin Women's University, Seoul, Korea. \texttt{sangduk.yoon@sungshin.ac.kr}}
\and Hee-Kap Ahn\thanks{Graduate School of Artificial Intelligence, Department of Computer Science and Engineering, Pohang University of Science and Technology, Pohang, Korea. \texttt{heekap@postech.ac.kr}}
\and Takeshi Tokuyama\thanks{Department of Computer Science, School of Engineering, Kwansei Gakuin University, Sanda, Japan. \texttt{tokuyama@kwansei.ac.jp}}}
\begin{document}
\date{}
\maketitle

\begin{abstract}
We consider the smallest-area universal \container of planar objects of perimeter $2$ (or equivalently closed curves of length $2$) allowing translation and discrete rotations.
In particular, we show that the solution is an equilateral triangle of height $1$ when translation and discrete rotation of $\pi$ are allowed. 
Our proof is purely geometric and elementary.
We also give convex \containers of closed curves of length $2$ under translation 
and discrete rotations of multiples of $\pi/2$ and $2\pi/3$. 
We show a minimality of the \container for discrete rotation of multiples of $\pi/2$, 
which is an equilateral triangle of height smaller than $1$, and
conjecture that the \container  is the smallest-area convex \container. 
Finally, we give the smallest-area convex \containers of all unit segments under translation and discrete rotations $2\pi/k$ for all integers $k\ge 3$.
\end{abstract}

\section{Introduction}
Given a (possibly infinite) set $S$ of planar objects and a group $G$ 
of geometric transformations, a $G$-\container $K$ of $S$ is  a region such that 
every object in $S$ can be contained in $K$ by transforming the object with a suitable transformation $g \in G$.
Equivalently, every object of $S$ is contained in $g^{-1}K$
for a suitable transformation $g \in G$.
That is,
\[ \forall \gamma  \in  S,\; \exists g  \in G \:\text{such that}\:  g \gamma  \subseteq K.\]
We denote the group of planar translation by $T$ and that of planar translation and rotation by $TR$.  Mathematically, $TR = T \rtimes R$ is the semidirect product of $T$ and the rotation group $R  = SO(2,\mathbb{R})$.
We often call \containers for $G$-\containers if  $G$ is known from the context.

The problem of finding a smallest-area \container  is  a classical problem in mathematics,
and such a covering is often called a {\em universal \container}.
In the literature, the cases where $G = T$ or $G=TR$ have been widely studied.

The universal \container problem has attracted many mathematicians. 
Henri Lebesgue (in his letter to J. P\'{a}l in 1914) proposed a problem to find 
the smallest-area convex $TR$-\container of all objects of unit diameter (see \cite{BMP2005,BBG2015,G2018} for its history).
Soichi Kakeya considered in 1917 the $T$-\container  of the set $\segset$ of all unit line segments (called needles)~\cite{Kakeya}.
Precisely, his formulation is to find the smallest-area region in which 
a unit-length needle can be turned round, but it is equivalent to the \containment problem if the \container is convex~\cite{Bae2018}.
Originally, Kakeya considered the convex \container, and Fujiwara conjectured that the equilateral triangle of height 1 is the solution.  The conjecture was affirmatively solved by P\'{a}l in 1920 \cite{Pal}.
For the nonconvex \container, Besicovitch~\cite{B1928} gave a construction such that the area can be arbitrarily small.

Generalizing P\'{a}l's result, for any set of $n$ segments, there is a triangle to be the smallest-area convex $T$-\container of the set, and the triangle can be computed 
efficiently in $O( n \log n )$ time~\cite{Ahn2014}.
It is further conjectured that the smallest-area convex $TR$-\container of a family of triangles is a triangle,
which is shown to be true for some families~\cite{Park2021}. 

The problem of  finding  the smallest-area \container and convex \container  of the set of all curves of unit length and $G =TR$ was given by Leo Moser as an open problem in 1966 \cite{Moser66}.   
The problem is still unsolved, and the best lower bound of the smallest area of the convex \container is 
slightly larger than $0.23$, while the best upper bound was about  $0.27$ for long time~\cite{CFG1991,BMP2005}.
Wetzel informally conjectured (formally published in~\cite{Wetzel1}) in 1970 that the $30^\circ$ circular fan of unit radius, which has an area $\pi/12 \approx 0.2618$, is a convex $TR$-\container of all unit-length curves, 
and it is recently proved by Paraksa and Wichiramala~\cite{PW2021}.
However,  when only translations are allowed, 
the equilateral triangle of height $1$ is the smallest-area convex \container (Corollary~\ref{cor:t.container.worm}), which  is the same as the case of 
considering the unit line segments. 
Corollary~\ref{cor:t.container.worm} is folklore and the authors cannot find its concrete proof in the literature.
The fact can be confirmed analytically, since it suffices to consider polyline worms with two legs. Also, we can directly prove it by applying a reflecting argument similar to the proof of Theorem~\ref{thm:G2.main}, which is given in Appendix.

There are many variants of Moser's problem, and they are called {\it  Moser's worm problem}.  
The history of progress on the topic can be found in an  article~\cite{Moser91} by 
William Moser (Leo's younger brother), in Chapter D18 in~\cite{CFG1991},
and in Chapter 11.4 in~\cite{BMP2005}.
It is interesting to find a new case of Moser's worm problem with a clean 
mathematical solution. 

From now on, we focus on the convex $G$-\container of  the set $\ccset$ of all closed curves $\gamma$  of  length $2$. Here, we follow the tradition of previous works on this problem to consider length $2$ instead of $1$, since a unit line segment 
can be considered as a degenerate convex closed curve of length $2$. 
Since the boundary curve of the convex hull $C(\gamma)$ of any closed curve $\gamma$ 
is not longer than $\gamma$, it suffices to consider only convex curves.

This problem is known to be an interesting but hard variant of Moser's worm problem, and remains unsolved for $T$ and $TR$ despite of substantial efforts in the literature~\cite{FW2011,Wetzel1,Wetzel2,CFG1991,BMP2005}.
As far as the authors know, the smallest-area
convex $TR$-\container known so far is a hexagon obtained by clipping two corners of a rectangle
given by Wichiramala, and its area is slightly less than $0.441$~\cite{W2018}.  It is also shown that the smallest area is at least $0.39$~\cite{GS2020a}, which has been recently improved to  $0.4$~\cite{GS2020b} with help of computer programs. 
The smallest area of the convex $T$-\container is known to be between $0.620$ and $0.657$~\cite{BMP2005}.

There are some works on restricted shapes of \container. 
Especially, if we consider triangular \containers, Wetzel~\cite{Wetzel2,Wetzel3} gave a
complete description, and it is shown that an acute triangle with side lengths $a$, $b$, $c$ and area $X$ becomes a $T$-\container (resp. $TR$-\container) of $\ccset$ 
if and only if $2 \le \frac{8X^2}{abc}$ (resp. $2 \le \frac{2\pi X}{a+b+c}$).  As a consequence, the equilateral triangle of side length $4/3$ (resp. $\frac{2\sqrt{3}}{\pi}$) 
is the smallest triangular $T$-\container (resp. $TR$-\container) of $\ccset$.  Unfortunately, their areas are larger than those of the known smallest-area convex \containers.

If $H$ is a subgroup of $G$, an $H$-\container is a $G$-\container.  
Since $T \subset TR$, it is quite reasonable to consider groups $G$ 
lying between them, that is $T \subset G \subset TR$. 
The group $R = SO(2, \mathbb{R})$ is an abelian group, and its finite subgroups 
are $Z_k = \{ e^{2i\pi \sqrt{-1} /k} \mid 0 \le i \le k-1 \}$ for  $k=1,2,\ldots,$ where $e^{\theta \sqrt{-1} }$ means the rotation of angle $\theta$.

In this paper, we consider the \containers  under the action of the group $G _k= T \rtimes Z_k$.
We show that  the smallest-area  convex $G_2$-\container of $\ccset$ is the equilateral triangle of height $1$, whose area is $\frac{\sqrt{3}}{3}\approx 0.577.$ 
A nice feature is that the proof is purely geometric and elementary, assuming Pal's result mentioned above. 

Then, we show that the equilateral triangle with height $\beta = \cos (\pi/12) \approx 0.966$ is a  $G_4$-\container of $\ccset$.
Its area is $\frac{2\sqrt{3} +3 }{12} \approx  0.538675$, and we 
conjecture that it is the smallest-area  convex $G_4$-\container.
It is a pleasant surprise that the equilateral triangular \container becomes the optimal convex \container 
if we consider rotation by $\pi$ ($G_2$-\container),  
and it also seems to be
true if we consider rotations by $\pi/2$ ($G_4$-\container).

However, the minimum area convex $G_3$-\container is no longer an equilateral triangle. 
We give a convex $G_3$-\container of $\ccset$ 
which has area not larger than $0.568$.
Our $G_3$-\container has areas strictly smaller than 
the area of the $G_2$-\container $\triangle_1$, and a bit larger than the area of 
the $G_4$-\container $\triangle_\beta$. Unlike the $G_2$- and $G_4$-\containers, 
the $G_3$-\container is not \emph{regular} under rotation.
We show that any convex $G_3$-\container of $\ccset$ which is regular 
under rotation by $\pi/2$ or $2\pi/3$ has area strictly larger than 
the area of our $G_3$-\container. 
For triangles of perimeter $2$, we give a smaller convex $G_3$-\container
of them that has area $0.563$.

We also determine the set of all
smallest-area convex $G_k$-\containers of $\segset$, which are all triangles.
\medskip

We use $\peri{C}$ to denote the perimeter of a compact set $C$,
and $\peri{\gamma}$ to denote the length of a curve $\gamma$.
The \emph{slope} of a line is the angle swept from the $x$-axis in a counterclockwise direction
to the line, and it is thus in $[0,\pi)$. The slope of a segment is the slope of the line that
extends it. 
For two points $p$ and $q$, we use $pq$ to denote
the line segment connecting $p$ and $q$, and by $|pq|$ the length of $pq$.
We use $\ell_{pq}$ to denote the line passing through $p$ and $q$.

\section{Covering under rotation by 180 degrees}
In this section, we show that the smallest-area convex $G_2$-\container of the set $\ccset$
of all closed curves of length $2$  is the equilateral triangle of  height $1$, denoted by $\triangle_1$, 
whose area is $\sqrt{3}/3$. 
\subsection{The smallest-area \container and related results}
First, we recall a famous result mentioned in the introduction.
\begin{theorem} [Pal's theorem for the convex Kakeya problem]
\label{thm:pal.kakeya}
The equilateral triangle $\triangle_1$ is the smallest-area convex $T$-\container of the set 
of all unit line segments.
\end{theorem}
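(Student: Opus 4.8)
The statement has two halves: that $\triangle_1$ is a $T$-\container of $\segset$, and that no convex $T$-\container of $\segset$ has smaller area. For the first half, recall that a convex body $C$ admits a translate of a segment of direction $u$ and length $\ell$ if and only if $C$ has a chord parallel to $u$ of length at least $\ell$; so it suffices to check that, for $\triangle_1$, the shortest such maximal chord over all directions has length $1$. Sweeping a line of fixed direction $u$ across the triangle, the chord length is a tent-shaped function of the offset, maximised when the line passes through the middle vertex (in the order induced by $u^{\perp}$); hence the longest chord of $\triangle_1$ in direction $u$ runs from that vertex to the opposite side. When $u$ is perpendicular to a side this chord is precisely the corresponding altitude, of length exactly $1$, and for every other direction a short computation shows it is strictly longer (it can be slid to run nearly along a side, of length $2/\sqrt3$). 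Thus $\triangle_1$ contains a unit segment in every direction, and is a $T$-\container of $\segset$.

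For the lower bound, let $K$ be any convex $T$-\container of $\segset$. For each direction $u$, $K$ contains a chord parallel to $u$ of length at least $1$; projecting that chord orthogonally onto a line of direction $u$ shows that the width of $K$ in direction $u$ is at least $1$, so $K$ has minimal width at least $1$. The theorem then follows from the classical minimal-width area inequality of P\'al~\cite{Pal}: a planar convex body of minimal width $\omega$ has area at least $\omega^{2}/\sqrt3$, with equality only for the equilateral triangle of height $\omega$. Applying this with $\omega\ge 1$ gives $\mathrm{area}(K)\ge 1/\sqrt3=\sqrt3/3=\mathrm{area}(\triangle_1)$, and the equality case identifies $\triangle_1$ as the unique minimiser.

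It remains to recall how one proves that minimal-width inequality. Normalise so that the minimal width $\omega$ is attained vertically: $K\subseteq\{0\le y\le\omega\}$ with both horizontal lines supporting $K$. Since $\omega$ is a minimum of the (continuous, almost-everywhere differentiable) width function, the two contact sets admit a common normal, so $K$ contains a vertical chord of length $\omega$, which after a translation runs from $(0,0)$ to $(0,\omega)$. If $a\le 0\le b$ are the extreme abscissae of $K$, the quadrilateral spanned by $(0,0)$, $(0,\omega)$ and the two extreme points lies in $K$ and has area at least $\tfrac12\,\omega\,(b-a)$, which already beats $\omega^{2}/\sqrt3$ as soon as $b-a\ge 2\omega/\sqrt3$. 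The main obstacle is the complementary ``round'' regime $\omega\le b-a<2\omega/\sqrt3$: here the crude inscribed quadrilateral only yields $\mathrm{area}(K)\ge\tfrac12\omega^{2}$, and to reach the sharp constant $1/\sqrt3$ one must exploit the width bound in the near-horizontal directions, which forces $K$ to bulge towards the corners of its $\omega\times(b-a)$ bounding box; tracking this trade-off, and the equality analysis, is the technical heart of P\'al's argument. An alternative that sidesteps the width reduction is a reflection/symmetrisation argument in the spirit of the proof of Theorem~\ref{thm:G2.main}: reflect $K$ through the midpoint of its vertical common-normal chord and compare $K$ with the reflected copy to drive the extremal shape to a triangle.
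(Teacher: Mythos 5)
The paper does not prove this statement at all: it is recalled as P\'al's classical theorem and attributed to \cite{Pal}, so there is no in-paper argument to compare yours against. Your first half is complete and correct: the maximal chord of a triangle in a fixed direction does pass through the middle vertex, and for $\triangle_1$ the minimum over all directions of that maximal chord length is exactly $1$, attained in the directions perpendicular to the sides. Your reduction of the lower bound is also the classically correct route: a convex $T$-\container of $\segset$ has breadth at least $1$ in every direction, hence minimal width at least $1$, and the sharp inequality $\mathrm{area}\ge\omega^{2}/\sqrt3$ for convex bodies of minimal width $\omega$ (with equality only for the equilateral triangle) finishes the job. Note that you only need the forward implication (a unit segment of direction $u$ inside $K$ forces breadth at least $1$ in direction $u$); the converse is false for convex bodies, but you never use it.

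The gap is in your last paragraph. What you present there is not a proof of the width--area inequality: you correctly set up the double-normal chord and the inscribed quadrilateral, observe that the easy estimate $\tfrac12\,\omega\,(b-a)$ only yields $\tfrac12\omega^{2}$ in the regime $\omega\le b-a<2\omega/\sqrt3$, and then explicitly defer ``the technical heart'' of P\'al's argument; the proposed reflection alternative is likewise only a gesture, and unlike the situation in Theorem~\ref{thm:G2.main} there is no obvious single reflection that closes the round case. If you are permitted to cite P\'al's width--area inequality as known --- which is consistent with how the paper treats the entire theorem as a citation --- then your write-up is a valid and clean derivation of the convex Kakeya statement from it. If a self-contained proof is intended, the essential content, namely the sharp constant $1/\sqrt3$ and the equality analysis, remains unproven, and that is precisely the nontrivial part of the theorem.
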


\begin{corollary}\label{cor:G2.area.bound}
The area of a  convex  $G_2$-\container of $\ccset$ is at least $\sqrt{3}/3$.
\end{corollary}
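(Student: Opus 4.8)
The plan is to deduce this directly from P\'al's theorem (Theorem~\ref{thm:pal.kakeya}), by exhibiting unit segments inside $\ccset$ and by observing that the rotation available in $Z_2$ is of no help when the object to be covered is a segment.

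First I would note that every unit line segment, traversed from one endpoint to the other and back, is a closed curve of length exactly $2$; hence $\segset$ may be viewed as a subset of $\ccset$. Consequently, any convex $G_2$-\container $K$ of $\ccset$ is in particular a convex $G_2$-\container of $\segset$: for each unit segment $\sigma$ there is some $g\in G_2$ with $g\sigma\subseteq K$.

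Next I would observe that, for a line segment, the nontrivial element of $Z_2$ (rotation by $\pi$) maps $\sigma$ to a segment of the same slope, which is merely a translate of $\sigma$. Hence $g\sigma\subseteq K$ for some $g\in G_2 = T\rtimes Z_2$ if and only if $t\sigma\subseteq K$ for some $t\in T$; that is, a convex $G_2$-\container of $\segset$ is exactly a convex $T$-\container of $\segset$. Combining the two observations, $K$ is a convex $T$-\container of the set of all unit line segments, so by Theorem~\ref{thm:pal.kakeya} its area is at least the area of $\triangle_1$, namely $\sqrt{3}/3$.

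There is essentially no obstacle here, since the statement is a corollary; the only points that deserve to be made explicit are that degenerate (back-and-forth) closed curves are admissible elements of $\ccset$, and that $Z_2$ acts trivially on line segments modulo translation. Both are immediate from the definitions, so the argument above is complete once they are stated.
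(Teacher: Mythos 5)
Your argument is correct and is essentially identical to the paper's proof: both view unit segments as degenerate closed curves of length $2$, note that rotation by $\pi$ acts on a segment as a translation so that a convex $G_2$-covering of $\ccset$ is in particular a $T$-covering of all unit segments, and then invoke P\'al's theorem. You merely make the two observations more explicit than the paper does.
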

\begin{proof}
Observe that all unit line segments are in $\ccset$, and 
line segments are stable under the action of rotation by $\pi$.  Thus, any convex
$G_2$-\container of $\ccset$  must be a  $T$-\container  of all unit line segments, and the corollary follows from Theorem~\ref{thm:pal.kakeya} (Pal's theorem).
\end{proof}

From Corollary~\ref{cor:G2.area.bound},  it suffices to show that any closed curve $\gamma$ in $\ccset$ can be contained in $\triangle_1$ by applying an action of $G_2$ to prove the following main theorem in this section.

\begin{theorem}
\label{thm:G2.main}
The equilateral triangle $\triangle_1$ is the smallest-area convex $G_2$-\container of $\ccset$.
\end{theorem}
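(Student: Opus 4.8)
The plan is to supply the one inclusion that is missing: that $\triangle_1$ is a $G_2$-\container of $\ccset$, i.e.\ every closed curve of length $2$ can be placed inside $\triangle_1$ by a translation, possibly preceded by a rotation by $\pi$; the matching area lower bound is already Corollary~\ref{cor:G2.area.bound}. As the excerpt observes, it suffices to treat a convex body $K$ with $\peri{K}\le 2$: replacing $\gamma$ by the boundary of its convex hull $C(\gamma)$ only shortens the curve, and a placement of $C(\gamma)$ inside $\triangle_1$ carries $\gamma$ along. So the goal becomes: find a translate of $K$, or of $-K$ (the image of $K$ under rotation by $\pi$), inside $\triangle_1$.

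\textbf{Turning containment into a scalar inequality.} Let $u_1,u_2,u_3$ be the outward unit normals of the three sides of $\triangle_1$; they are pairwise $120^\circ$ apart, so $u_1+u_2+u_3=0$, and with the incenter at the origin one has $\triangle_1=\{x:\langle x,u_i\rangle\le\tfrac13,\ i=1,2,3\}$, the inradius of a height-$1$ equilateral triangle being $\tfrac13$. Writing $h_K$ for the support function of $K$, a translate $K+t$ lies in $\triangle_1$ exactly when $\langle t,u_i\rangle\le\tfrac13-h_K(u_i)$ for all $i$; summing these three inequalities and using $\sum_i u_i=0$ shows that such a $t$ exists \emph{iff} $h_K(u_1)+h_K(u_2)+h_K(u_3)\le 1$. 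The same reasoning applied to $-\triangle_1$ (whose side normals are $-u_1,-u_2,-u_3$) shows that a $\pi$-rotated translate of $K$ fits in $\triangle_1$ iff $h_K(-u_1)+h_K(-u_2)+h_K(-u_3)\le 1$. Since $h_K(u)+h_K(-u)$ equals the width $w(u)$ of $K$ in direction $u$, the whole theorem reduces to the single inequality $w(u_1)+w(u_2)+w(u_3)\le 2$.

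\textbf{The width inequality.} I would prove the stronger statement $w(u_1)+w(u_2)+w(u_3)\le\peri{K}$ for any three directions $120^\circ$ apart. Parametrize $\partial K$ by arc length $s$. For a convex loop the projection $s\mapsto\langle p(s),u_i\rangle$ rises and then falls as $s$ runs once around, so its total variation is $2w(u_i)$, i.e.\ $2w(u_i)=\oint_{\partial K}|\langle p'(s),u_i\rangle|\,ds$. Adding over $i$,
\[
2\bigl(w(u_1)+w(u_2)+w(u_3)\bigr)=\oint_{\partial K}\Bigl(\,\sum_{i=1}^{3}|\langle p'(s),u_i\rangle|\,\Bigr)\,ds .
\]
Since $p'(s)$ is a unit vector, the integrand equals $f(\psi(s))$, up to a fixed shift of the argument, where $\psi(s)$ is the tangent angle and $f(\psi)=|\cos\psi|+|\cos(\psi-120^\circ)|+|\cos(\psi+120^\circ)|$. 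An elementary computation shows $f$ has period $60^\circ$ and, on $[0^\circ,60^\circ]$, equals $2\cos\psi$ on $[0^\circ,30^\circ]$ and $2\sin(\psi+30^\circ)$ on $[30^\circ,60^\circ]$; hence $f\le 2$. Therefore the left-hand side is at most $2\oint_{\partial K}ds=2\peri{K}\le 4$, giving $w(u_1)+w(u_2)+w(u_3)\le 2$, and the theorem follows.

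The one genuinely clever move is seeing that the problem collapses to the scalar inequality $\sum_i w(u_i)\le 2$; after that, both the support-function bookkeeping and the width inequality are routine. Two loose ends remain to be tied: if $C(\gamma)$ degenerates to a segment (or a point), the curve fits in $\triangle_1$ trivially, or directly by Theorem~\ref{thm:pal.kakeya}; and the $\pi$-rotation is genuinely necessary, since for the equilateral triangle of perimeter $2$ placed in the orientation opposite to $\triangle_1$ one gets $\sum_i h_K(u_i)=\tfrac{2}{\sqrt3}>1$, so $\triangle_1$ is not a $T$-\container of $\ccset$ and the ``$K$ or $-K$'' dichotomy cannot be avoided.
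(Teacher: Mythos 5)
Your proof is correct, and while it rests on the same pivotal inequality as the paper's --- that the sum of the three widths of the curve in directions $120^\circ$ apart is at most $\peri{\gamma}\le 2$ (the paper's Lemma~\ref{lem:tiling}, your $w(u_1)+w(u_2)+w(u_3)\le\peri{K}$) --- both of your component arguments are genuinely different. For the width inequality the paper unfolds the curve by successive reflections in the sides of the hexagon $L_0\cap L_{\pi/3}\cap L_{2\pi/3}$ and reads the bound off a straight-line distance in the resulting tiling; you instead integrate $\sum_i|\langle p'(s),u_i\rangle|$ over $\partial K$ and bound the integrand pointwise by $2$, a Cauchy-formula-style computation. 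For the deduction, the paper argues with the parallelogram $L_{\pi/3}\cap L_{2\pi/3}$ and two explicit horizontal lines, whereas your support-function reformulation ($K+t\subseteq\triangle_1$ for some $t$ iff $h_K(u_1)+h_K(u_2)+h_K(u_3)\le 1$) turns the ``either $K$ or $-K$ fits'' dichotomy into a one-line averaging argument; this is arguably cleaner and generalizes immediately to any triangle presented by its outward normals. One small point worth writing out: the ``iff'' in that step needs the sufficiency direction, namely that three half-planes with unit normals summing to zero and $c_1+c_2+c_3\ge 0$ have nonempty intersection --- take the intersection point of the first two boundary lines and verify the third constraint there. Your closing observation that the oppositely oriented equilateral triangle of perimeter $2$ gives $\sum_i h_K(u_i)=2/\sqrt3>1$ is a nice confirmation that the $\pi$-rotation cannot be dispensed with.
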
 

Before proving the theorem, we give some  direct implications of it.
An object $P$ is centrally symmetric if  $-P = P$,  where $-P = \{-x\mid x \in P\}$.  Let $\symset$ be the set of all centrally symmetric closed curves of length $2$. 

\begin{corollary}
\label{cor:equi.tcover.sym}
The equilateral triangle $\triangle_1$ is the smallest-area convex $T$-\container of $\symset$.
\end{corollary}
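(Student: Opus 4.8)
The plan is to pin down the area of every convex $T$-\container of $\symset$ between $\sqrt{3}/3$ from below and $\sqrt{3}/3$ from above: the lower bound comes from P\'al's theorem, and the upper bound from Theorem~\ref{thm:G2.main}. The observation that makes the upper bound go through is that the only nontrivial element of $Z_2$, the rotation by $\pi$, acts on a centrally symmetric curve exactly as a translation does, so that a $G_2$-\container of $\symset$ is automatically a $T$-\container of $\symset$.

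For the upper bound I would argue as follows. Let $\gamma\in\symset$ be centrally symmetric, say $\gamma = 2c-\gamma$ where $c$ is its center. Since $\symset\subseteq\ccset$, Theorem~\ref{thm:G2.main} gives an element $g\in G_2$ with $g\gamma\subseteq\triangle_1$; write $g$ as a rotation $r$ by $0$ or $\pi$ followed by a translation. If $r$ is the rotation by $\pi$ about some point $p$, then $r\gamma = 2p-\gamma = 2p-(2c-\gamma) = \gamma + 2(p-c)$, which is merely a translate of $\gamma$; and if $r$ is the identity this is trivially true. Hence in all cases $g\gamma$ is a translate of $\gamma$, so some translate of $\gamma$ lies in $\triangle_1$. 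As $\gamma$ was arbitrary, $\triangle_1$ is a convex $T$-\container of $\symset$, of area $\sqrt{3}/3$.

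For the lower bound, observe that every unit line segment, regarded as a closed curve of length $2$ traversed forth and back, is centrally symmetric about its midpoint, so $\segset\subseteq\symset$. Therefore any convex $T$-\container of $\symset$ is in particular a convex $T$-\container of all unit segments, and by Theorem~\ref{thm:pal.kakeya} its area is at least that of $\triangle_1$. Combining the two bounds proves the corollary. There is no genuine obstacle here: the only points that need care are checking that a $\pi$-rotation of a centrally symmetric set is a translate of that set, and that $\segset\subseteq\symset$; all the substantive work is already packaged in Theorem~\ref{thm:G2.main} and P\'al's theorem, which we are free to invoke.
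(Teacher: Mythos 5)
Your proposal is correct and follows essentially the same route as the paper: the lower bound via $\segset\subseteq\symset$ and P\'al's theorem, and the upper bound by noting that Theorem~\ref{thm:G2.main} places $\gamma$ in $\triangle_1$ under $G_2$ and that a $\pi$-rotation of a centrally symmetric curve is just a translate of it. Your explicit computation $r\gamma = 2p-(2c-\gamma)=\gamma+2(p-c)$ merely spells out the paper's remark that centrally symmetric objects are stable under the $Z_2$-action.
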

\begin{proof}
$\symset$ contains all unit segments. Thus from Theorem~\ref{thm:pal.kakeya} (Pal's theorem),
the smallest convex $T$-\container of $\symset$ has area at least 
the area of $\triangle_1$ as in the proof of Corollary~\ref{cor:G2.area.bound}.
On the other hand, $\symset \subset \ccset$, and by Theorem~\ref{thm:G2.main}, any curve in $\symset$ can be contained in $\triangle_1$ by applying a suitable transformation in $G_2$.
However, a centrally symmetric object is stable under the action of $Z_2$, and hence it can be contained in $\triangle_1$ by applying a  transformation in $T$. Thus, $\triangle_1$ is the
smallest-area convex $T$-\container of $\symset$. 
\end{proof}
 
We can consider two special cases shown below. 
Corollary~\ref{cor:equi.tcover.sym} implies that $\triangle_1$ is a $T$-\container of
all rectangles of perimeter $2$, and also of all parallelograms of perimeter $2$.
Since a unit line segment is a degenerate rectangle and a degenerate parallelogram of perimeter $2$, we have the following corollary.
\begin{corollary}
The equilateral triangle $\triangle_1$ is the smallest-area convex $T$-\container of the set of all rectangles of perimeter $2$, and also of the set of all parallelograms of perimeter $2$.
\end{corollary}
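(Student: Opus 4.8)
The plan is to obtain this as an immediate consequence of Corollary~\ref{cor:equi.tcover.sym} together with Pal's theorem (Theorem~\ref{thm:pal.kakeya}). First I would establish containment: the boundary of any rectangle or parallelogram of perimeter $2$ is a closed curve of length $2$, and it is centrally symmetric about the center of the figure; hence, after translating that center to the origin, this boundary curve belongs to $\symset$. By Corollary~\ref{cor:equi.tcover.sym}, $\triangle_1$ is a convex $T$-\container of $\symset$, so in particular $\triangle_1$ is a convex $T$-\container of the set of all rectangles of perimeter $2$ and of the set of all parallelograms of perimeter $2$.

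For the matching lower bound, I would observe that a unit line segment can be regarded as a degenerate rectangle with side lengths $1$ and $0$, and likewise as a degenerate parallelogram, each of perimeter $2$ and belonging to the respective family. Therefore every convex $T$-\container of either family is in particular a convex $T$-\container of all unit line segments, and by Theorem~\ref{thm:pal.kakeya} its area is at least $\sqrt{3}/3$, the area of $\triangle_1$. Since $\triangle_1$ itself realizes this area and, by the first paragraph, is a $T$-\container of both families, it is a smallest-area convex $T$-\container of each.

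There is essentially no hard step: the statement falls out of Corollary~\ref{cor:equi.tcover.sym} and Pal's theorem by the same argument pattern used for Corollary~\ref{cor:G2.area.bound} and Corollary~\ref{cor:equi.tcover.sym}. The only points that require a line of care are the bookkeeping ones — checking that the boundary of a perimeter-$2$ parallelogram is genuinely a length-$2$ centrally symmetric closed curve (so that it sits inside $\symset$), and that the degenerate unit segment is admitted as a member of both families so that the area lower bound is inherited.
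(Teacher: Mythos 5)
Your proposal is correct and follows exactly the paper's argument: containment comes from Corollary~\ref{cor:equi.tcover.sym} because the boundary of a perimeter-$2$ rectangle or parallelogram is a centrally symmetric closed curve of length $2$, and the area lower bound comes from viewing the unit segment as a degenerate member of each family and invoking Theorem~\ref{thm:pal.kakeya}. Nothing further is needed.
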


We can also obtain the following well-known result about the \containment of 
the set $\wormset$ of all curves of length $1$ (often called {\it worms}) mentioned in the introduction as a corollary.

\begin{corollary} \label{cor:t.container.worm}
The equilateral triangle $\triangle_1$ is the smallest convex $T$-\container of $\wormset$.
\end{corollary}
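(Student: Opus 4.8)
The plan is to derive Corollary~\ref{cor:t.container.worm} from Corollary~\ref{cor:equi.tcover.sym} by a symmetrization (doubling) argument. Given any worm $\gamma$ of length $1$ with endpoints $p$ and $q$, I would form the centrally symmetric closed curve $\widehat{\gamma}$ obtained by concatenating $\gamma$ with its point reflection $-\gamma'$ through the midpoint $m$ of $pq$, where $\gamma'$ is a translate of $\gamma$ positioned so that the two copies join up into a single closed curve. Concretely, translate $\gamma$ so that $m$ is at the origin, and let $\widehat{\gamma} = \gamma \cup (-\gamma)$; this is a closed curve of length $2$ that is invariant under $Z_2$, hence $\widehat{\gamma}\in\symset$. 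By Corollary~\ref{cor:equi.tcover.sym}, a translate of $\widehat{\gamma}$ fits inside $\triangle_1$, and since $\gamma\subseteq\widehat{\gamma}$, that same translate carries $\gamma$ into $\triangle_1$. Thus $\triangle_1$ is a convex $T$-\container of $\wormset$.

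For the matching lower bound, I would observe that a unit segment is a worm of length $1$ (traverse a segment of length $1/2$ and return), so $\wormset$ effectively contains all unit line segments in the sense that covering $\wormset$ forces covering all unit segments; alternatively, and more cleanly, the doubled curve of a segment of length $1/2$ traversed back and forth is itself a degenerate centrally symmetric curve, but the simplest route is: any $T$-\container of $\wormset$ must contain a translate of every worm, in particular of the path that goes straight for distance $1$, which is a unit segment. Hence by Theorem~\ref{thm:pal.kakeya} (Pal's theorem) any convex $T$-\container of $\wormset$ has area at least that of $\triangle_1$. Combining the two bounds, $\triangle_1$ is the smallest convex $T$-\container of $\wormset$.

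The one subtlety worth handling carefully is the claim ``$\gamma\subseteq\widehat\gamma$ and the reflected copy closes up into a length-$2$ curve.'' I need the concatenation $\gamma$ followed by $-\gamma$ to be a genuine closed curve: if $\gamma$ runs from $p$ to $q$ and I place $m=(p+q)/2$ at the origin, then $-\gamma$ runs from $-p=q$ (reflected endpoint) back to $-q=p$, so the two arcs share endpoints and form a closed loop of total length $1+1=2$. Central symmetry is immediate since applying $-(\cdot)$ swaps the two arcs. So this step is routine once set up correctly; I expect the main (minor) obstacle to be just stating the doubling construction precisely enough that the containment $\gamma\subseteq\widehat\gamma$ and the membership $\widehat\gamma\in\symset$ are both transparent. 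Everything else is a direct invocation of Corollary~\ref{cor:equi.tcover.sym} and Theorem~\ref{thm:pal.kakeya}.
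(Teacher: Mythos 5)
Your proof is correct and follows essentially the same route as the paper: double the worm with its point-reflected copy through the midpoint of its endpoints to obtain a centrally symmetric closed curve of length $2$, apply Corollary~\ref{cor:equi.tcover.sym} for the upper bound, and use Pal's theorem (the straight worm is a unit segment) for the lower bound. The only blemish is the throwaway parenthetical ``traverse a segment of length $1/2$ and return,'' which describes a different worm, but you immediately give the correct justification, so nothing is lost.
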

\begin{proof}
Given a curve $\zeta$ in $\wormset$,  we consider the rotated copy  $-\zeta$ by angle $\pi$.  Suitably translated, we can form a closed curve $\gamma(\zeta)$ by connecting them at their
endpoints.
Then $\gamma(\zeta)\in \symset$, and can be contained in $\triangle_1$ 
under translation.  Therefore, $\zeta$ is also can be contained there. The corollary follows from Theorem~\ref{thm:pal.kakeya} (Pal's theorem).
\end{proof}
We can prove Corollary~\ref{cor:t.container.worm} directly by applying a reflecting argument similar to the proof of Theorem~\ref{thm:G2.main}, which is given in Appendix.

	\begin{figure}[bh]
		\centering
		\includegraphics[scale=.8]{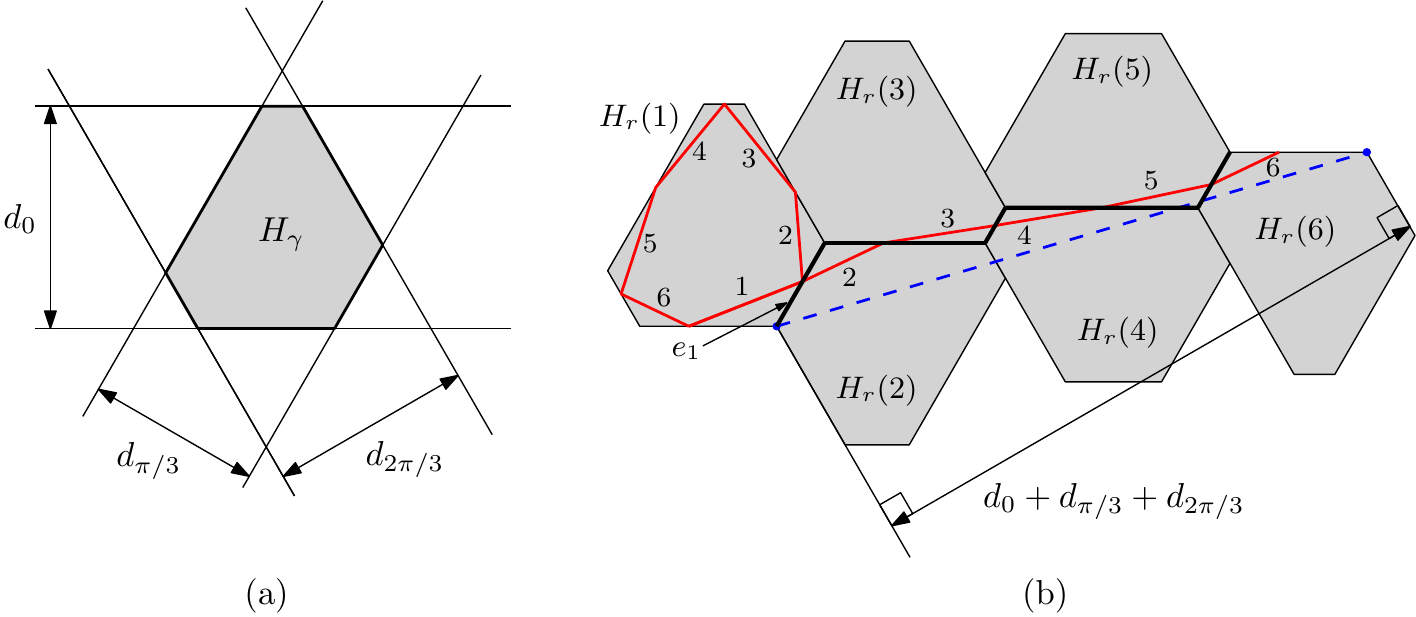}
		\caption{Tiling argument. (a) $H_\gamma=L_0\cap L_{\pi/3}\cap L_{2\pi/3}$. (b) A tiling of six copies $H_\gamma(1),\ldots,H_\gamma(6)$ of $H_\gamma$. Any closed curve that touches every side of $H_\gamma$ is not shorter than the dashed (blue) segment of length at least $d_0+d_{\pi/3}+d_{2\pi/3}$.}
		\label{fig:tiling}
	\end{figure}

\subsection{Proof of Theorem~\ref{thm:G2.main}}
A slab is the region bounded by two parallel lines in the plane, and
its width is the distance between the two bounding lines.
For a closed curve $\gamma$ of $\ccset$, let $L_\theta$ 
be the minimum-width slab of orientation $\theta$ with
$0\leq\theta<\pi$ containing $\gamma$.  We denote the width of $L_\theta$ by $d_\theta$.

\begin{lemma}
	\label{lem:tiling}
	For a closed curve $\gamma$ of $\ccset$, $d_0+d_{\pi/3}+d_{2\pi/3}\leq 2$
	for slabs $L_0$, $L_{\pi/3}$, and $L_{2\pi/3}$ 
	of $\gamma$.
\end{lemma}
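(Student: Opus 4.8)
The plan is to bound each of $d_0,d_{\pi/3},d_{2\pi/3}$ by how far $\gamma$ must travel along the corresponding normal direction, and then add the three bounds using a pointwise inequality that exploits the fact that the three normals are pairwise $60^\circ$ apart.

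First I would reduce to a convenient setting: reparametrize the (rectifiable) closed curve by arc length as $\gamma\colon[0,2]\to\mathbb{R}^2$ with $\gamma(0)=\gamma(2)$ and $|\dot\gamma|=1$ almost everywhere. For $\theta\in\{0,\pi/3,2\pi/3\}$ let $u_\theta=(-\sin\theta,\cos\theta)$ denote the unit normal to lines of slope $\theta$. Because $L_\theta$ is the \emph{minimum}-width slab of orientation $\theta$ containing $\gamma$, its width equals the extent of $\gamma$ in direction $u_\theta$, that is, $d_\theta=\max_t\gamma(t)\cdot u_\theta-\min_t\gamma(t)\cdot u_\theta$. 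Since the scalar function $t\mapsto\gamma(t)\cdot u_\theta$ is periodic and attains both its maximum and its minimum, one traversal of $\gamma$ runs from a maximum point to a minimum point and back, so its total variation satisfies $\int_0^2|\dot\gamma(t)\cdot u_\theta|\,dt\ge 2d_\theta$.

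The crux is the elementary pointwise estimate
\[
 |w\cdot u_0|+|w\cdot u_{\pi/3}|+|w\cdot u_{2\pi/3}|\le 2\,|w|\qquad\text{for every }w\in\mathbb{R}^2 .
\]
By homogeneity it suffices to check it for $w=(\cos\phi,\sin\phi)$, where the left-hand side becomes $|\sin\phi|+|\sin(\phi-\pi/3)|+|\sin(\phi-2\pi/3)|$, a function of period $\pi$. The points $0,\pi/3,2\pi/3$ split a period into three intervals on which the three sines keep constant signs; a product-to-sum simplification collapses the sum on each interval to $2|\sin(\phi-c)|$ for a suitable constant $c$ (for instance it equals $2\sin\phi$ on $[\pi/3,2\pi/3]$), so it never exceeds $2$. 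Applying this to $w=\dot\gamma(t)$, integrating, and using $|\dot\gamma|\equiv 1$,
\[
 2\,(d_0+d_{\pi/3}+d_{2\pi/3})\ \le\ \int_0^2\!\Big(|\dot\gamma\cdot u_0|+|\dot\gamma\cdot u_{\pi/3}|+|\dot\gamma\cdot u_{2\pi/3}|\Big)\,dt\ \le\ \int_0^2 2\,dt\ =\ 4 ,
\]
which gives $d_0+d_{\pi/3}+d_{2\pi/3}\le 2$, as claimed. (Equality holds for the regular hexagon of perimeter $2$ whose sides have slopes $\pi/6,\pi/2,5\pi/6$.)

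There is no genuine obstacle here — the only computation is the pointwise inequality, which is routine. A purely geometric alternative, apparently the one behind Figure~\ref{fig:tiling}, is to use that, since each $L_\theta$ is a minimum-width slab, $\gamma$ must touch all six sides of the equiangular hexagon $H_\gamma=L_0\cap L_{\pi/3}\cap L_{2\pi/3}$, then reflect $H_\gamma$ across its sides to produce the chain of six tiles $H_\gamma(1),\dots,H_\gamma(6)$ and unfold $\gamma$ along these reflections; the closed curve of length $2$ becomes a path of length $2$ joining a point to its image under the composite (orientation-preserving) isometry, and one verifies that this isometry is a translation of length at least $d_0+d_{\pi/3}+d_{2\pi/3}$, the total cross-distance through the chain. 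The slightly delicate point in that version is computing the translation length, which is where the geometry of the $60^\circ$-skew hexagon is actually used.
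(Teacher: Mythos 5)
Your argument is correct, and it takes a genuinely different route from the paper. The paper's proof is the unfolding/tiling argument you sketch at the end: since each $L_\theta$ is a minimum-width slab, $\gamma$ touches all six sides of the hexagon $H_\gamma=L_0\cap L_{\pi/3}\cap L_{2\pi/3}$; reflecting $H_\gamma$ successively across its sides straightens any such inscribed closed circuit into a path joining two points at distance at least $d_0+d_{\pi/3}+d_{2\pi/3}$, whence $2=\peri{\gamma}\ge d_0+d_{\pi/3}+d_{2\pi/3}$. Your main proof instead is an integral-geometric (Cauchy--Crofton style) estimate: the identity $w\cdot u_\theta=|w|\sin(\phi-\theta)$ reduces the pointwise bound to checking $|\sin\phi|+|\sin(\phi-\pi/3)|+|\sin(\phi-2\pi/3)|\le 2$, which is exactly as you say (e.g.\ it equals $2\sin(\phi-\pi/3)$ on $[2\pi/3,\pi]$ and $2\sin\phi$ on $[\pi/3,2\pi/3]$), and the total-variation bound $\int_0^2|\dot\gamma\cdot u_\theta|\,dt\ge 2d_\theta$ for a periodic Lipschitz function is sound. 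Your approach is more computational but also more flexible: it generalizes immediately to any finite set of directions (with the constant $\max_\phi\sum_j|\sin(\phi-\theta_j)|$ in place of $2$), and it makes the equality case transparent via the tightness of the pointwise inequality at $\phi=\pi/2$ (tangent directions $\pi/6,\pi/2,5\pi/6$, i.e.\ your regular hexagon). What the paper's reflection argument buys is that it stays entirely elementary and picture-based, in keeping with the authors' stated aim, and the same unfolding device is reused later (in Lemma~\ref{lem:G4.12gon} and in Theorem~\ref{theorem:curve.2.contain}), so the paper gets several results from one geometric template.
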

\begin{proof}
	Let $H_\gamma$ be the hexagon which is the intersection of $L_0$, $L_{\pi/3}$ and $L_{2\pi/3}$ of $\gamma$. See Figure~\ref{fig:tiling}(a). 
	Let $e_1,\ldots, e_6$ be
	the edges of $H_\gamma$ in counterclockwise order. We can obtain
	a tiling of six copies $H_\gamma(1),\ldots, H_\gamma(6)$ 
	of $H_\gamma$ such that $H_\gamma(1)=H_\gamma$ and $H_\gamma(k+1)$ is the copy of 
	$H_\gamma(k)$ reflected about $e_k$ of $H_\gamma(k)$ 
	for $k$ from $1$ to $5$
	as shown in Figure~\ref{fig:tiling}(b). 
	We observe that the length of a closed curve 
	that touches every side of $H_\gamma$ is at least $d_0+d_{\pi/3}+d_{2\pi/3}$. See Figure~\ref{fig:tiling}(b).
	Since $\gamma$ touches every side of $H_\gamma$ 
	and the length of $\gamma$ is $2$ from the definition of $\ccset$, 
	the lemma follows.
\end{proof}
	\begin{figure}[h]
		\centering
		\includegraphics[scale=.8]{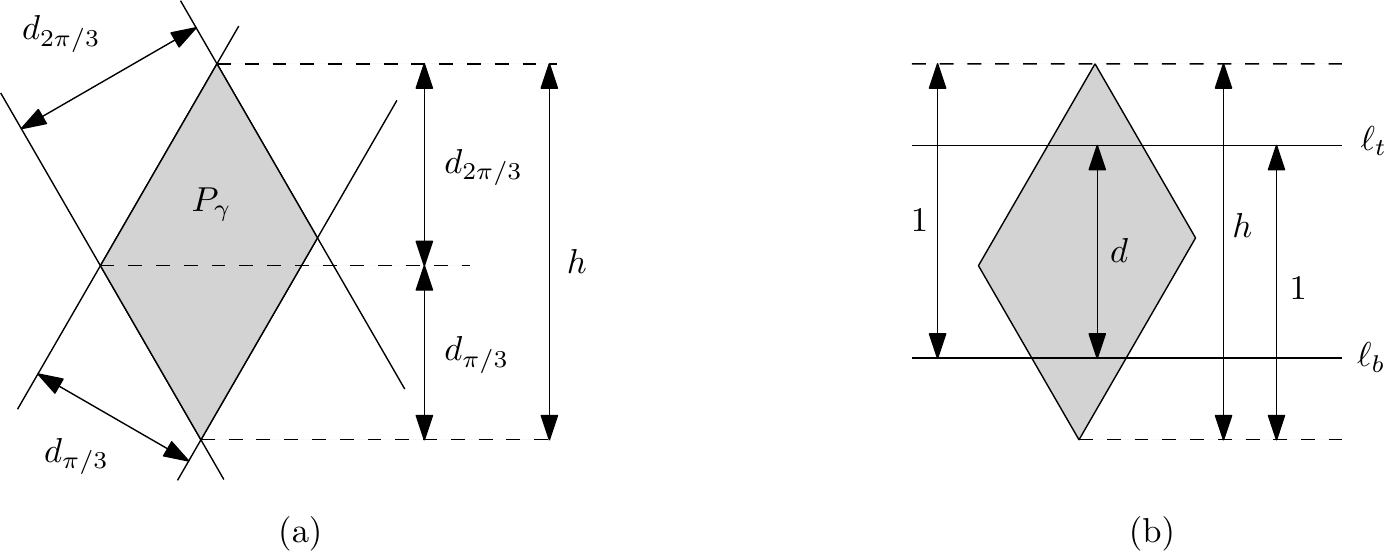}
		\caption{
			(a) $P_\gamma=L_{\pi/3}\cap L_{2\pi/3}$, and $h=d_{\pi/3}+d_{2\pi/3}$.
			(b) $h+d=2$.
		}
		\label{fig:H_containment}
	\end{figure}
\begin{lemma}\label{lem:G2.Sc}
	Any closed curve $\gamma$ of $\ccset$ can be contained in $\triangle_1$ or $-\triangle_1$ under translation. 
\end{lemma}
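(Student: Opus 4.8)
The plan is to reduce the containment to a single one‑dimensional inequality and then invoke Lemma~\ref{lem:tiling}. We may assume $\gamma$ bounds a convex region (its convex hull has perimeter at most $\peri{\gamma}=2$), and we orient $\triangle_1$ so that one side is horizontal; then its three sides have orientations $0,\pi/3,2\pi/3$, matching the slabs of Lemma~\ref{lem:tiling}. Consider the parallelogram $P_\gamma:=L_{\pi/3}\cap L_{2\pi/3}\supseteq\gamma$ of Figure~\ref{fig:H_containment}(a). A short elementary computation (using that the two slab orientations differ by $60^\circ$) shows that $P_\gamma$ has interior angle $\pi/3$ at its topmost and bottommost vertices, that the two sides incident to the top vertex have orientations $\pi/3$ and $2\pi/3$ (hence so do the two sides at the bottom vertex), and that the vertical extent of $P_\gamma$ — its orthogonal projection onto the direction perpendicular to $L_0$ — equals exactly $h:=d_{\pi/3}+d_{2\pi/3}$.

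Next I would use two placements. Translate $\gamma$, carrying $P_\gamma$ along, so that the top vertex of $P_\gamma$ coincides with the apex of $\triangle_1$. Since the $\pi/3$‑wedge of $P_\gamma$ at that vertex is bounded by lines of the same two orientations as the two non‑horizontal sides of $\triangle_1$, the whole parallelogram $P_\gamma$, hence $\gamma$, lies in the infinite wedge $W^{+}$ cut out by those two sides; and as $\triangle_1=W^{+}\cap\{\text{points at vertical distance}\le 1\text{ below the apex}\}$, we get $\gamma\subseteq\triangle_1$ exactly when the lowest point of $\gamma$ lies within vertical distance $1$ of the apex. Let $p$ be the vertical gap between the top vertex of $P_\gamma$ and the highest point of $\gamma$, and $q$ the vertical gap between the lowest point of $\gamma$ and the bottom vertex of $P_\gamma$; since the vertical extent of $\gamma$ is $d_0$ and that of $P_\gamma$ is $h$, we have $p,q\ge 0$ and $p+q=h-d_0$. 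In this placement the lowest point of $\gamma$ is at vertical distance $h-q$ below the apex, so $\gamma\subseteq\triangle_1$ whenever $q\ge h-1$. The same argument with the bottom vertex of $P_\gamma$ placed at the (downward) apex of $-\triangle_1$ gives $\gamma\subseteq-\triangle_1$ whenever $p\ge h-1$.

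Finally, at least one of $q\ge h-1$, $p\ge h-1$ must hold: otherwise $p+q<2(h-1)$, i.e. $h-d_0<2h-2$, i.e. $d_0+d_{\pi/3}+d_{2\pi/3}>2$, contradicting Lemma~\ref{lem:tiling}; and when $h\le 1$ the first inequality is automatic. Hence $\gamma$ lies in $\triangle_1$ or in $-\triangle_1$ after a suitable translation. The one genuinely delicate step is the geometry of $P_\gamma$: checking that its extreme (top/bottom) vertices are precisely the ones carrying the $\pi/3$ angle, that these wedges coincide with the apex wedges of $\triangle_1$ and $-\triangle_1$, and that the vertical extent of $P_\gamma$ is $d_{\pi/3}+d_{2\pi/3}$ — everything afterward is bookkeeping with vertical distances, with all the quantitative force supplied by Lemma~\ref{lem:tiling}. (Equivalently, one could argue via support functions: a translate of the convex body bounded by $\gamma$ fits in $\triangle_1$ iff the sum of its support values in the three outer‑normal directions of $\triangle_1$ is at most $1$, and likewise for $-\triangle_1$ with the opposite three directions; those two sums add to $d_0+d_{\pi/3}+d_{2\pi/3}\le 2$, so one of them is $\le 1$.)
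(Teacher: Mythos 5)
Your proof is correct and follows essentially the same route as the paper: it uses the same parallelogram $P_\gamma=L_{\pi/3}\cap L_{2\pi/3}$ with vertical extent $h=d_{\pi/3}+d_{2\pi/3}$, and derives the dichotomy from the inequality $d_0+d_{\pi/3}+d_{2\pi/3}\le 2$ of Lemma~\ref{lem:tiling}. The only difference is bookkeeping — your gaps $p,q$ play exactly the role of the paper's auxiliary horizontal lines $\ell_t,\ell_b$ — so the two arguments are equivalent.
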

\begin{proof}
	For a closed curve $\gamma$ of $\ccset$, let $P_\gamma$ be the parallelogram which is the intersection of $L_{\pi/3}$ and $L_{2\pi/3}$ of $\gamma$.
	Then, the height $h$ of $P_\gamma$ is $d_{\pi/3}+d_{2\pi/3}$ (Figure~\ref{fig:H_containment}(a)).
	
	In case of $h\leq1$, $P_\gamma$ can be contained in both $\triangle_1$ and $-\triangle_1$ under translation.
	
	In case of $h>1$, consider two horizontal lines $\ell_t$ and $\ell_b$ such that 
	$\ell_t$ lies above the bottom corner of $P_\gamma$ at distance $1$  
	and $\ell_b$ lies below the top corner of $P_\gamma$ at distance $1$.
For the distance $d$ between $\ell_t$ and $\ell_b$, $d+h$ equals $2$ (Figure~\ref{fig:H_containment}(b)). 
	With Lemma~\ref{lem:tiling}, we have 
	$d_0+d_{\pi/3}+d_{2\pi/3}=d_0+h\leq 2$, so $d\ge d_0$. 
	This implies that $L_0$ does not contain both of the lines $\ell_b$ and $\ell_t$,
	so $L_0$ lies above $\ell_b$ or below $\ell_t$.
	If $L_0$ lies above $\ell_b$, then $\gamma$ can be contained in $\triangle_1$ under translation,
	and if $L_0$ lies below $\ell_t$, then $\gamma$ can be contained in $-\triangle_1$ under translation.
\end{proof}
	
For a closed curve $\gamma$, either $\gamma$ or $-\gamma$ can be
	contained in $\triangle_1$ under translation by Lemma~\ref{lem:G2.Sc}, 
	so we get Theorem~\ref{thm:G2.main}.

\section{Covering under rotation by 90 degrees} 
In this section, we consider the convex $G_4$-\containers of the sets $\segset$ and 
$\ccset$, where $G_4 = T \rtimes Z_4$.  We show that the smallest-area
convex $G_4$-\container of $\segset$ is the isosceles right triangle with legs 
(the two equal-length sides) of length $1$.
For $\ccset$, we give an equilateral triangle of height slightly smaller than $1$ as
a minimal convex $G_4$-\container.

\subsection{Covering of unit line segments}
First, we consider the set $\segset$ of all unit segments.
\begin{theorem}\label{thm:G4.seg}
The smallest-area convex $G_4$-\container of all unit segments 
has area $1/2$, and it is uniquely attained by the isosceles right triangle with legs of length $1$ under closure.
\end{theorem}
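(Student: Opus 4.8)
The plan is to prove the upper bound $1/2$ by exhibiting a concrete $G_4$-\container, prove the matching lower bound for an arbitrary convex $G_4$-\container, and then read off uniqueness from the equality case.

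\emph{Upper bound.} Take $\triangle=\{(x,y):x\ge0,\,y\ge0,\,x+y\le1\}$, the isosceles right triangle with legs $1$ on the axes, $\mathrm{area}(\triangle)=1/2$. First I would show $\triangle$ contains a chord of length $\ge1$ of every slope $\theta\in[\pi/2,\pi)$: for $\theta=\pi/2$ use the left leg; for $\theta\in(\pi/2,3\pi/4]$ the segment from $\bigl(-\cos\theta/\sin\theta,\,0\bigr)$ on the bottom leg to the apex $(0,1)$ lies in $\triangle$, has slope $\tan\theta$ and length $1/\sin\theta\ge1$; symmetrically, for $\theta\in[3\pi/4,\pi)$ the segment from $(0,-\tan\theta)$ on the left leg to $(1,0)$ lies in $\triangle$, has slope $\tan\theta$ and length $1/(-\cos\theta)\ge1$. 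Since a unit segment of slope $\alpha\in[0,\pi/2)$ becomes one of slope $\alpha+\pi/2\in[\pi/2,\pi)$ after rotating by $\pi/2$ (an element of $Z_4$), every unit segment fits in $\triangle$ after a suitable $G_4$-action, so $\triangle$ is a $G_4$-\container of $\segset$ and the minimum area is $\le1/2$.

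\emph{Lower bound.} Let $K$ be any convex $G_4$-\container of $\segset$; we may assume $\mathrm{area}(K)<\infty$, and $K$ has nonempty interior (else $K$ sits in a line and can hold unit segments of at most two slopes), so $K$ is a compact convex body and its extent function $h_K(\theta):=\max\{\,|\ell\cap K|:\ell\text{ a line of slope }\theta\,\}$ is continuous on the slope circle $\mathbb{R}/\pi\mathbb{Z}$, whence $S:=\{\theta:h_K(\theta)\ge1\}$ is closed; also $S\ne\emptyset$ since $K$ accommodates the horizontal unit segment. Because rotation by $\pi$ fixes a segment's slope and rotation by $\pi/2$ shifts it by $\pi/2$, the $G_4$-\container condition is exactly $S\cup(S+\tfrac\pi2)=\mathbb{R}/\pi\mathbb{Z}$. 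If $S=\mathbb{R}/\pi\mathbb{Z}$, then $K$ is a $T$-\container of all unit segments, so $\mathrm{area}(K)\ge\sqrt{3}/3>1/2$ by Theorem~\ref{thm:pal.kakeya}. Otherwise $S$ is proper, nonempty and closed with $S\cup(S+\tfrac\pi2)=\mathbb{R}/\pi\mathbb{Z}$, and then $S\cap(S+\tfrac\pi2)\ne\emptyset$: were it empty, $S$ and $S+\tfrac\pi2$ would partition the circle, making $S+\tfrac\pi2$ clopen in a connected space, hence empty or everything, a contradiction. So some perpendicular pair of slopes $\theta^{*},\theta^{*}+\tfrac\pi2$ both lie in $S$, i.e.\ $K$ contains two perpendicular unit segments $s_1,s_2$, and $\mathrm{area}(K)\ge\mathrm{area}\bigl(\mathrm{conv}(s_1\cup s_2)\bigr)\ge1/2$ by the elementary lemma below.

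\emph{Hull lemma and uniqueness.} Normalising $s_1=[0,1]\times\{0\}$ and $s_2=\{x_0\}\times[b,b+1]$, a shoelace computation gives $\mathrm{area}\bigl(\mathrm{conv}(s_1\cup s_2)\bigr)$ equal to $\tfrac12(b+1)$ when $b\ge0$, to $\tfrac12|b|$ when $b+1\le0$, and to $\tfrac12$ exactly when $(x_0,b)\in[0,1]\times[-1,0]$ (the hull then being a quadrilateral that degenerates to a triangle on the boundary of this square), and strictly larger otherwise; in every case it is $\ge1/2$, which proves the lemma. For uniqueness "under closure": if $\mathrm{area}(K)=1/2$ then all the above inequalities are equalities, so $K=\mathrm{conv}(s_1\cup s_2)$ with $(x_0,b)\in[0,1]\times[-1,0]$, and it remains to determine which of these area-$\tfrac12$ convex hulls are themselves $G_4$-\containers. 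I expect this last step to be the main obstacle: one must compute the extent function of each such quadrilateral or triangle and verify that unless $(x_0,b)$ is one of $(0,0),(0,-1),(1,0),(1,-1)$ — exactly the placements for which $K$ is a copy of the isosceles right triangle with legs $1$ — there is a slope $\psi$ with $h_K(\psi)<1$ and $h_K(\psi+\tfrac\pi2)<1$, contradicting the $G_4$-\container property (e.g.\ $\psi=\pi/4$ disposes of the tilted square $x_0=\tfrac12,\ b=-\tfrac12$, and slightly tilted diagonal directions handle the remaining quadrilaterals and the non-right isosceles triangles). Together with the upper bound this yields the theorem.
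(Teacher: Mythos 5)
Your overall architecture matches the paper's (explicit triangle for the upper bound; two perpendicular unit segments forced inside any \container for the lower bound; classification of the area-$1/2$ hulls for uniqueness), but your lower bound is obtained by a genuinely different and cleaner route. The paper works with the set $A$ of admissible slopes, splits it as $A^-\cup A^+$, and runs a multi-way case analysis with convergent sequences of angles, obtaining the perpendicular pair only in the limit (which is why it later needs a separate compactness argument to extract an actual pair of segments for the uniqueness step). You instead note that the maximal-chord function $h_K$ is continuous (it is the radial function of the difference body $K-K$), so $S=\{\theta:h_K(\theta)\ge1\}$ is closed, and connectedness of $\mathbb{R}/\pi\mathbb{Z}$ together with $S\cup(S+\pi/2)=\mathbb{R}/\pi\mathbb{Z}$ forces $S\cap(S+\pi/2)\ne\emptyset$ outright. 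This collapses all the limiting arguments into one line. (Two harmless slips: a convex set contained in a line admits unit segments of only one slope, not two; and $S\ne\emptyset$ should be deduced from $S\cup(S+\pi/2)$ covering the circle, since the horizontal segment might only fit after a quarter turn.)

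The gap is in the uniqueness step, exactly where you flagged it. Having reduced to the hulls $\mathrm{conv}(s_1\cup s_2)$ of crossing perpendicular unit segments --- proper quadrilaterals with these segments as diagonals, and triangles of base $1$ and height $1$ --- you must show that none except the right isosceles triangle is a $G_4$-\container, and you verify this only for the tilted square. The suggested recipe of ``slightly tilted diagonal directions'' does not work uniformly: for a quadrilateral close to the right isosceles triangle, the near-hypotenuse side has length close to $\sqrt2$, so $h_K(3\pi/4)>1$ and $\psi=\pi/4$ is not a witness; the correct $\psi$ depends on the shape and must avoid both the arcs of long-side directions and their quarter-turn translates. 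This verification is the bulk of the paper's proof of the theorem: for a proper quadrilateral it argues that the two diagonal slopes are isolated points of the admissible set $A$, so $A\cup(A+\pi/2)$ cannot cover $[0,\pi/2)$; for an acute triangle of base and height $1$ it observes that the height direction lies in a nontrivial interval $I\subseteq A$ whose endpoint $\hat\theta$ yields a second perpendicular pair, whose hull must again equal the triangle and hence have a unit-length side other than the base --- impossible, since both remaining sides are longer than the height $1$. Your plan would close the gap if you carried out the extent-function computation for the general quadrilateral and the general base-$1$, height-$1$ triangle, but as written the central claim of the uniqueness statement is asserted rather than proved.
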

\begin{proof}
Let \irtriangle{}\ \ be the isosceles right triangle with legs of length $1$ and 
base of slope $\pi/4$.
Any unit segment of slope $\theta$ can be placed in \irtriangle{}\ \ 
for $0\leq \theta\leq \pi/2$. Thus, \irtriangle{}\ \ 
is a $G_4$-\container of $\segset$, and its area is $1/2$. 

Now we show that any convex $G_4$-\container of $\segset$ 
has area at least $1/2$.
Let $X$ be a smallest-area convex $G_4$-\container of $\segset$, 
and let $s(\theta)$ be a unit segment of slope $\theta$.
Let $A$ be the set of angles $\theta$ such that $s(\theta)$ can be placed in $X$ under translation with $0 \le \theta <\pi$.
Let $A^{-}= A \cap [0, \pi/2)$ and 
$A^{+}=\{\theta-\pi/2\mid \theta\in A\cap [\pi/2, \pi)\}$.
If $A^{-} \cap A^{+} \ne \varnothing$,
then $X$ contains two unit segments which are orthogonal to each other and also contains their convex hull.
Thus, the area of $X$ is at least $1/2$.
So we assume that $A^{-} \cap A^{+} = \varnothing$.
Since $X$ is a $G_4$-\container of $\segset$, $A^{-} \cup A^{+} = [0, \pi/2)$.
If $A^{-} = \varnothing$, then $A^{+}=[0,\pi/2)$ and
there is a sequence $\{\theta_k\}_{k=1}^{\infty}$ in $A^{+}$ 
such that $\lim_{k\to\infty}\theta_k=\pi/2$.
Since $X$ contains a unit segment $s(\pi/2)$ and a unit segment $s(\theta_k+\pi/2)$
for any $k$, $X$ contains their convex hull.
Since $\lim_{k\to\infty}\theta_k=\pi/2$,
the area of $X$ is at least $1/2$.
Similarly, we can prove this for the case $A^{+} = \varnothing$.
So we assume that $A^{-} \ne \varnothing$ and $A^{+} \ne \varnothing$.
Then there are two cases : $A^{-}$ contains no interval or 
$A^{-}$ contains an interval.

Consider the case that $A^{-}$ contains no interval.
Let $I(x, \epsilon)$ be the open interval in $[0,\pi/2)$ 
centered at an angle $x$ with radius 
$\epsilon$, and let $\theta'$ be an angle in $A^{-}$.
Since $A^{-}$ contains no interval, 
there is a sequence $\{\theta'_k\}_{k=1}^{\infty}$ such that $\theta'_k \in I(\theta', 1/k) \cap A^+$ for each $k$, and $\lim_{k\to\infty}\theta'_k=\theta'$.
Since $X$ contains a unit segment $s(\theta')$ and a unit segment 
$s(\theta'_k+\pi/2)$ for any $k$, 
$X$ contains their convex hull.
Thus, the area of $X$ is at least $1/2$.

Consider now the case that $A^{-}$ contains an interval.
Since $A^{+} \ne \varnothing$, there is an interval in $A^-$
with an endpoint $\bar{\theta}$, other than $0$ and $\pi/2$.
Then there is a sequence $\{\bar{\theta}_k\}_{k=1}^{\infty}$ in $A^-$ 
such that $\lim_{k\to\infty}\bar{\theta}_k=\bar{\theta}$. 
Since $\bar{\theta}$ is an endpoint of the interval in $A^-$,
$I(\bar{\theta}, 1/n) \cap A^+\ne\varnothing$ for any $n$.
So there is a sequence $\{\bar{\theta}'_n\}_{n=1}^\infty$
such that $\bar{\theta}'_n \in I(\bar{\theta}, 1/n) \cap A^+$ for each $n$,
and $\lim_{n\to\infty}\bar{\theta}'_n=\bar{\theta}$.
Since $X$ contains unit segments $s(\bar{\theta}_k)$ and $s(\bar{\theta}'_n+\pi/2)$ for any $k$ and any $n$, 
$X$ contains their convex hull.
Thus, the area of $X$ is at least $1/2$.

We show the uniqueness of the smallest-area convex $G_4$-\container of $\segset$ under its closure. We assume that $X$ is compact.
First, we prove that if there is a sequence of angles in $A$ 
converging to $\theta\in[0,\pi)$, then $X$ contains a unit segment $s(\theta)$.
Consider a sequence $\{\theta_k\}_{k=1}^\infty$ in $A$ such that 
$\lim_{k\to\infty}\theta_k=\theta$.
Then there is a sequence $\{s(\theta_k)\}_{k=1}^\infty$ of unit segments 
$s(\theta_k)$ contained in $X$.
Since $X$ is compact,
there is a subsequence  $\{s(\theta_{k_p})\}_{p=1}^\infty$ 
that converges to a unit segment $s(\theta)$.
Since $X$ is closed, $s(\theta)$ is contained in $X$.
By the argument in the previous paragraphs, 
there is an angle $\tilde{\theta} \in [0, \pi/2)$ 
such that $X$ contains unit segments $s(\tilde{\theta})$ and 
$s(\tilde{\theta}+\pi/2)$, orthogonal to each other.
Moreover, if $X$ is  a \container of area $1/2$, 
$X$ is the convex hull $\chull$ of $s(\tilde{\theta})$ and $s(\tilde{\theta}+\pi/2)$.
There are two cases of $\chull$ : it is either a convex quadrilateral or a triangle of height $1$ and base length $1$.  

Consider the case that $\chull$ is a convex quadrilateral. 
Then both $\tilde{\theta}$ and $\tilde{\theta}+\pi/2$ are 
isolated points in $A$.
Thus, $\chull$ is not a $G_4$-\container of $\segset$.
Consider the case that $\chull$ is a triangle of height $1$ and base length $1$. 
Without loss of generality, assume that $s(\tilde{\theta})$ 
is the base of $\chull$ and $s(\tilde{\theta}+\pi/2)$ 
corresponds to the height of $\chull$.
Suppose that $\chull$ is an acute triangle.
Since $s(\tilde{\theta})$ is the base of $\chull$, 
$\tilde{\theta}$ is an isolated point in $A$.
Observe that $s(\tilde{\theta}+\pi/2)$ can be rotated infinitesimally 
around the corner opposite to the base in both directions 
while it is still contained in $\chull$.
Thus, $\tilde{\theta}+\pi/2$ is an interior point of an interval $I$ in $A$.
So for an endpoint $\hat{\theta}$ of $I$ other than $\tilde{\theta}$,
$\chull$ contains unit segments $s(\hat{\theta})$ and $s(\hat{\theta}+ \pi/2)$.
The convex hull $\chull'$ of the two segments is also a convex quadrilateral or
a non-obtuse triangle of area at least $1/2$. 
Since $\chull'$ is contained in $\chull$ of area $1/2$, $\chull'=\chull$,
that is, $\chull'$ must be an acute triangle with base $b$ of length 1.
Since $\hat{\theta}\neq\tilde{\theta}$, $b$ is not the base of $\chull$, 
and $b$ must be one of the other two sides of $\chull$.
But clearly both the sides are longer than $1$, the height of $\chull$, 
a contradiction. Hence $\chull$ is not an acute triangle. 
Thus, $\chull$ is the isosceles right triangle, and it is the unique 
$G_4$-\container of $\segset$ under closure.
\end{proof}

\subsection{ An equilateral triangle \container of closed curves}
Unfortunately, the isosceles right triangle with legs of length $1$
is not a  $G_4$-\containers of $\ccset$, since it does not contain a circle 
of perimeter $2$. 
Using the convexity of the area function on the convex hull of two translated convex objects~\cite{Ahn2012}, 
it can be shown that 
the area of the convex hull of  the union of the isosceles  right triangle and any translated copy of the disk  has area at least 0.543. 

Naturally,  the equilateral triangle of height $1$ is a $G_4$-\container of $\ccset$, since it is a $G_2$-\container of $\ccset$.
However, a smaller equilateral triangle can be a $G_4$-\container of $\ccset$, and 
we seek for the smallest one, which is conjectured to be the smallest-area 
$G_4$-\container of $\ccset$. 
Consider an equilateral triangle that is a $G_4$-\container of $\ccset$.  Since it is a $G_4$-\container of  $\segset$, it must contain a pair of orthogonally crossing unit segments as shown in the proof of Theorem~\ref{thm:G4.seg}.
Figure~\ref{fig:G4-tri-beta-a} shows a smallest equilateral triangle containing such 
a pair.
	
The side length of the triangle is $ \frac{1 + \sqrt{3}}{\sqrt{6}} \approx 1.115$ 
and the height is $\beta = \cos (\pi /12) = \frac{1+ \sqrt{3}}{2 \sqrt{2}}  \approx 0.966$. 
Its area is $\frac{2\sqrt{3} +3 }{12} \approx  0.538675$. 
We denote this triangle by $\triangle_{\beta}$.  

\begin{figure}[ht]
		\centering
		\includegraphics[scale=.9]{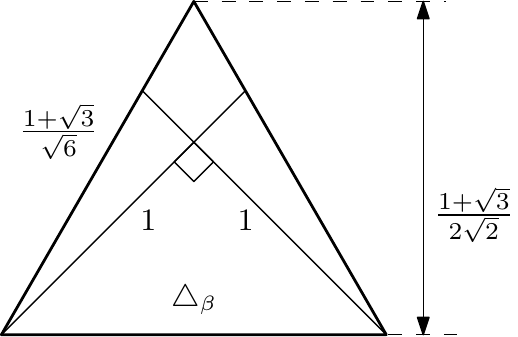}
		\caption{ The smallest equilateral triangle $\Delta_\beta$ containing a pair of orthogonal unit segments.
		}
		\label{fig:G4-tri-beta-a}
	\end{figure}

\begin{theorem}\label{thm:G4.main}
The equilateral triangle $\triangle_{\beta}$ is a convex $G_4$-\container of 
all closed curves of length $2$.
\end{theorem}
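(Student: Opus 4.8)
The plan is to follow the same strategy as the proof of Theorem~\ref{thm:G2.main}: analyze the three slabs $L_0$, $L_{\pi/3}$, $L_{2\pi/3}$ of a convex closed curve $\gamma\in\ccset$, but now exploit the extra freedom of rotation by $\pi/2$ to reduce the required height from $1$ to $\beta=\cos(\pi/12)$. First I would recall Lemma~\ref{lem:tiling}, which gives $d_0+d_{\pi/3}+d_{2\pi/3}\le 2$ for any $\gamma\in\ccset$; I also need its analogue for the slab directions $\{\theta,\theta+\pi/3,\theta+2\pi/3\}$ for arbitrary $\theta$, since the tiling argument is rotation-invariant. The key geometric fact to establish is: an equilateral triangle of height $h$ contains (under translation) every convex set whose three slabs in the triangle's three edge-directions have widths summing to at most $h$; more useful here is a sharper containment criterion tailored to $\triangle_\beta$. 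Because $\triangle_\beta$ was constructed precisely as the smallest equilateral triangle containing a pair of orthogonal unit segments (Figure~\ref{fig:G4-tri-beta-a}), I expect the right invariant to be: $\gamma$ fits in $\triangle_\beta$ under translation whenever its widths in the three edge-directions of $\triangle_\beta$ are each at most $\beta$ and their pairwise sums are suitably bounded.

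The main case analysis would be driven by how large the widths $d_\theta$ of $\gamma$ can be. Since $\gamma$ has length $2$, its diameter is at most $1$, so every slab width is at most $1$. Set $w=\max_\theta d_\theta$ and let $\theta^\star$ achieve it. If $w\le\beta$, then in fact $\gamma$ fits in $\triangle_\beta$ under translation with the edge opposite to the "thin" direction chosen appropriately, using the slab-sum bound from Lemma~\ref{lem:tiling} applied to a rotated triple of directions adapted to $\theta^\star$ — I would pick the triangle orientation so that one edge is parallel to $L_{\theta^\star}$ and check, via the parallelogram-intersection argument of Lemma~\ref{lem:G2.Sc}, that the remaining two slab widths leave enough room (this is where $d_0+d_{\pi/3}+d_{2\pi/3}\le2$ and $w\le\beta<1$ combine). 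If $w>\beta$, then $\gamma$ is "long and thin": its width in the direction $\theta^\star$ is large but, since length is $2$, its width in the orthogonal direction $\theta^\star+\pi/2$ is small — quantitatively, a convex curve of length $2$ with width $w$ in one direction has width at most $1-\sqrt{w^2-\text{(something)}}$, or more simply at most $2/\text{(projection length)}$ type bound — and this is exactly the regime where applying the rotation $e^{\sqrt{-1}\pi/2}\in Z_4$ swaps the roles, bringing us back to a curve whose maximal width in the relevant triangle-edge direction is controlled. The crossing-segments picture in Figure~\ref{fig:G4-tri-beta-a} tells me the breakpoint $\beta=\cos(\pi/12)$ is the value at which these two regimes meet.

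The hard part will be making the "long and thin" bound precise enough: I need a clean inequality relating the width $d_{\theta^\star}$ of a length-$2$ convex curve to the width $d_{\theta^\star+\pi/2}$ in the orthogonal direction (and possibly to $d_{\theta^\star\pm\pi/3}$), so that after rotating by $\pi/2$ the new triple of widths in the edge-directions of $\triangle_\beta$ provably satisfies the containment criterion from the first case. A convex curve of length $2$ and width $a$ in some direction has support span $a$ in that direction and, projected onto the perpendicular, spans some length $\le 2-$ (something like the two ``sides''), and one can bound the perpendicular width by $\ell(\gamma)/2 - $ correction; I would isolate this as a standalone lemma. The second subtlety is handling degenerate/extremal $\gamma$ (segments, very flat curves) where the slab in some direction has width $0$ — there the argument should only get easier, but I would state the containment criterion so it degrades gracefully. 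Once the two-regime dichotomy and the containment criterion are in place, combining them with Lemma~\ref{lem:tiling} (in its rotated form) and the explicit value of $\beta$ finishes the proof; the overall structure mirrors Lemma~\ref{lem:G2.Sc} plus the one-line deduction that followed it.
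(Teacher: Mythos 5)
Your proposal has a genuine gap at its core: you try to decide containment in $\triangle_\beta$ from the slab widths $d_\theta$ alone, but width data cannot do this. Whether a convex body $K$ fits in an equilateral triangle of height $h$ under translation is governed (via Viviani's theorem) by the sum of the three support function values $h_K(n_1)+h_K(n_2)+h_K(n_3)\le h$ for the three \emph{outward normals} of the triangle, whereas a width $d_\theta$ only records the antipodal sum $h_K(u)+h_K(-u)$. A triangle and its point reflection have identical width functions but their smallest enclosing equilateral triangles (in a fixed orientation) differ by a factor of $2$, so no criterion of the form ``each width at most $\beta$ and pairwise sums suitably bounded'' can be correct. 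Relatedly, Lemma~\ref{lem:tiling} bounds $d_0+d_{\pi/3}+d_{2\pi/3}=\sum_i\bigl(h_\gamma(n_i)+h_\gamma(-n_i)\bigr)\le 2$, which only shows that the better of $\triangle$ and $-\triangle$ needs height $1$ --- i.e.\ it reproves the $G_2$ result and cannot by itself get below height $1$. To exploit the $\pi/2$ rotation you must control the sum of support values over all \emph{twelve} directions $R_jn_i$ ($j=0,1,2,3$, $i=1,2,3$) and relate that sum to $\peri{\gamma}$; your proposal never engages with this twelve-direction structure, and the ``long and thin'' inequality you defer to a standalone lemma is both unformulated and, being again a statement about widths, would run into the same obstruction.

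For comparison, the paper proves the theorem by a scaling (``\brimful'') argument: for a curve $\gamma$ that cannot be shrunk further inside $\triangle_\beta$ under $G_4$, it circumscribes $\gamma$ by the smallest scaled copies $\Upsilon_j$ of all four rotations of $\triangle_\beta$ (scale factors $a_j\ge1$), intersects them to get a (possibly degenerate) $12$-gon $H$ touched by $\gamma$ on every edge, shows $\peri{H}=\frac{a_0+a_1+a_2+a_3}{2\cos(\pi/12)}$ (Lemma~\ref{lem:hexagon.perimeter.bound}), and lower-bounds $\peri{\gamma}\ge\peri{H}\cos(\pi/12)$ by unfolding the $12$-gon through successive reflections (Lemma~\ref{lem:G4.12gon}) --- this unfolding is the correct generalization of the hexagon tiling in Lemma~\ref{lem:tiling}. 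Combining gives $\peri{\gamma}\ge(a_0+a_1+a_2+a_3)/2\ge2$, whence every curve of length $2$ fits. If you want to salvage your outline, the missing ingredient is precisely a $12$-direction analogue of the tiling lemma together with a bookkeeping identity tying the four triangle orientations to the perimeter of their common intersection; the two-regime width dichotomy should be abandoned.
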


\begin{definition}
\label{def:brimful}
A closed curve $\gamma$ is called $G$-\brimful for a set $\Lambda$ 
if it can be placed in $\Lambda$
but cannot be placed in any smaller scaled copy of $\Lambda$
under $G$ transformation.
\end{definition}

To prove Theorem~\ref{thm:G4.main}, we consider the following claim.

\begin{claim}
The length of any $G_4$-\brimful curve for $\triangle_{\beta}$ is at least $2$.
\end{claim}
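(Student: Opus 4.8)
The plan is to show that a $G_4$-brimful curve $\gamma$ for $\triangle_\beta$ touches all three sides of $\triangle_\beta$ (in the placement that realizes brimfulness), and to lower-bound the length of any convex closed curve touching the three sides of $\triangle_\beta$ by $2$. The point of brimfulness is that it forces $\gamma$ to be ``wedged in'': if $\gamma$ missed an entire side, we could shrink $\triangle_\beta$ slightly toward that side while still containing $\gamma$, contradicting brimfulness. So first I would record this: a $G_4$-brimful curve for $\triangle_\beta$, suitably placed, meets each of the three sides of $\triangle_\beta$.

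Next I would exploit the geometry of the extremal configuration underlying $\triangle_\beta$. Recall (Figure~\ref{fig:G4-tri-beta-a}) that $\triangle_\beta$ is the smallest equilateral triangle containing a pair of orthogonally crossing unit segments, with height $\beta=\cos(\pi/12)$ and with those two unit segments making angles $\pm\pi/12$ with the base-normal, so that each has one endpoint on the base and the other on one of the two legs. The key structural fact I would establish is that these two unit segments in fact realize the \emph{widths} of $\triangle_\beta$ in their respective directions — equivalently, the distance between the two parallel supporting lines of $\triangle_\beta$ in the direction of each segment equals $1$ — or more precisely that the relevant "chord widths" at the contact points are controlled. I would then use the same tiling / unfolding argument as in Lemma~\ref{lem:tiling}: reflect $\triangle_\beta$ repeatedly across its sides so that a closed curve touching all three sides of $\triangle_\beta$ unfolds to a path joining a point to its mirror image, whose straight-line length is a sum of "widths" of the triangle in the three directions perpendicular to its sides. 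For an equilateral triangle of height $\beta$ these three widths are each $\beta$... but $3\beta/?$ is not obviously $\ge 2$, so the naive sum is the wrong quantity.

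Because of that, the main obstacle — and the step I expect to be genuinely delicate — is producing a tiling/unfolding whose associated straight segment has length \emph{exactly} matching the pair of orthogonal unit segments, i.e. a sum that evaluates to $2 = 1+1$ rather than to a sum of three side-widths. The right move, I believe, is to unfold $\triangle_\beta$ across a carefully chosen sequence of sides (not all three sides symmetrically, but a sequence adapted to the two special directions $\pm\pi/12$ off the vertical), so that the unfolded straight segment decomposes into two pieces, each of length $1$, corresponding to the two orthogonal unit chords; convexity of $\gamma$ and the fact that it touches all three sides then forces $\ell(\gamma)$ to be at least the length of this unfolded segment, namely $2$. Concretely I would: (i) pick the placement of $\gamma$ in $\triangle_\beta$ realizing brimfulness and name its contact points $p_1,p_2,p_3$ on the three sides; (ii) build the unfolding $\triangle_\beta = \triangle^{(1)}, \triangle^{(2)}, \dots$ by successive reflections across the sides carrying $p_1,p_2,\dots$, chosen so the images $p_1, p_2', p_3'', p_1'''$ are collinear and $|p_1 p_1'''| \ge |p_1 p_2'| + \dots$; (iii) verify by direct trigonometry in the $30^\circ$–$60^\circ$–$90^\circ$ pieces of the equilateral triangle that this collinear distance equals $1+1=2$, using $\beta=\cos(\pi/12)$ and the $\pm\pi/12$ angles; (iv) conclude $\ell(\gamma)\ge |p_1 p_2'|+|p_2' p_3''|+\dots = 2$. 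Once the claim is proved, Theorem~\ref{thm:G4.main} follows immediately: any closed curve of length $2$ is not longer than any $G_4$-brimful curve for $\triangle_\beta$ in the sense needed, hence can be placed inside $\triangle_\beta$ under a transformation in $G_4$ — formally, take the brimful scaled copy of $\triangle_\beta$ that contains $\gamma$; if its scale factor were $>1$ the curve realizing brimfulness would have length $>2$, contradicting the claim, so the scale factor is $\le 1$ and $\gamma$ fits in $\triangle_\beta$ itself.
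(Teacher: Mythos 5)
There is a genuine gap, and it sits at the very first step of your plan. You propose to fix a single placement of $\gamma$ in $\triangle_{\beta}$ (``the placement that realizes brimfulness''), note that $\gamma$ touches all three sides there, and then lower-bound by $2$ the length of any closed curve touching the three sides of $\triangle_{\beta}$. That lower bound is false: the medial triangle of $\triangle_{\beta}$ is a convex closed curve touching all three sides, of perimeter $\tfrac{3}{2}\delta=\sqrt{3}\,\beta\approx 1.673$, and the doubled altitude is a degenerate closed curve touching all three sides of length $2\beta\approx 1.93$; both are well below $2$. No choice of unfolding sequence for the single triangle $\triangle_{\beta}$ can repair this, because the statement you are trying to prove about one placement is simply not true. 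The information you are discarding is exactly the content of $G_4$-brimfulness: the curve cannot be placed in any smaller scaled copy of $\triangle_{\beta}$ under \emph{any} of the four rotations in $Z_4$ combined with translation. Using only one placement uses only one of these four constraints, and one constraint is provably insufficient.

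The paper's proof uses all four constraints at once. For each $i=0,1,2,3$ it takes the smallest scaled copy $\Upsilon_i$ of $g^i\triangle_{\beta}$ circumscribing $\gamma$, with scale factor $a_i$; brimfulness gives $a_i\ge 1$ for every $i$ (with equality for at least one). The intersection $H=\bigcap_{i=0}^{3}\Upsilon_i$ is a (possibly degenerate) convex $12$-gon with edge slopes $2k\pi/12 \pmod{\pi}$, all twelve of whose edges are touched by $\gamma$. Lemma~\ref{lem:hexagon.perimeter.bound} computes $\peri{H}=\frac{a_0+a_1+a_2+a_3}{2\cos(\pi/12)}$, and Lemma~\ref{lem:G4.12gon} is the unfolding argument you had in mind, but applied to the $12$-gon rather than to the triangle: a circuit through one point on each edge unfolds to a path whose endpoints differ by the sum of a horizontal vector and a vector of slope $\pi/6$ with lengths summing to $\peri{H}$, whence $\peri{\gamma}\ge\peri{H}\cos(\pi/12)=(a_0+a_1+a_2+a_3)/2\ge 2$. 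So your instinct that a reflection/unfolding computation with $\cos(\pi/12)$ must appear is right, but it has to be run on the intersection of the four rotated circumscribing triangles, not on $\triangle_{\beta}$ alone. Your final paragraph deducing Theorem~\ref{thm:G4.main} from the claim is fine and matches the paper.
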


The claim implies that any closed curve of length smaller than $2$ 
can be placed in $\triangle_{\beta}$ under $G_4$ transformation, 
but it is not $G_4$-\brimful for $\triangle_{\beta}$.
Thus, if the claim is true, then
every closed curve of length 2 can be placed in $\triangle_{\beta}$
under $G_4$ transformation, so Theorem~\ref{thm:G4.main} holds.
Thus, we will prove the claim. 

Let $\gamma$ be a $G_4$-\brimful curve for $\triangle_{\beta}$.
Let 
$g^i =e^{2i\pi \sqrt{-1} /4}\in Z_4$ for $i=0,\ldots,3$.
We consider the rotated copy 
$g^i \triangle_\beta$ of $\triangle_\beta$,
and find the smallest scaled copy $\Upsilon_i$ of $g^i \triangle_\beta$
which circumscribes $\gamma$ with a proper translation.
Let $a_i$ be the scaling factor, and the \brimful property implies that $a_i \ge 1$ for $i=0,1,2,3$ and the minimum of them equals $1$.
 Also, $\gamma$ touches all three sides of $\Upsilon_i$ for each $i$, where we allow it 
 to touch two sides simultaneously at the vertex shared by them.

	As illustrated in Figure~\ref{fig:G4-peri-same}(a), the intersection $H=H(\gamma) = \bigcap_{i=0}^3 \Upsilon_i$ contains $\gamma$, and 
   $H$ is a (possibly degenerate) 
	convex $12$-gon such that $\gamma$ touches all 12 edges of $H$.
The $12$-gon can be degenerate as shown in Figure~\ref{fig:G4-peri-same}(b), where  $\gamma$ touches multiple edges simultaneously at vertices 
corresponding to degenerated edges.  
Note that $H$ consists of edges of slopes $2k\pi/12\; (\bmod\; \pi)$ 
for $k=0,\ldots,11$.

The following lemma states that the perimeter of $H$ only depends on $a_i$ $(i=0,1,2,3)$.
\begin{figure}[hb]
	\centering
	\includegraphics[scale=.8]{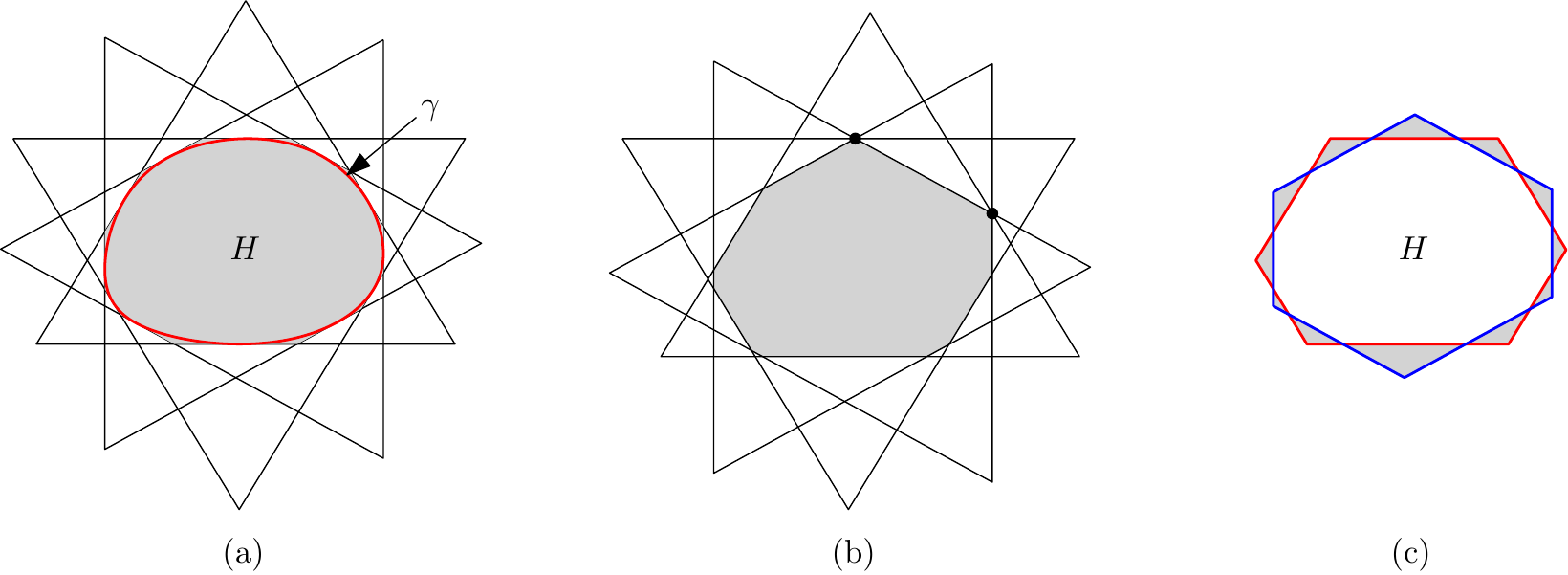}
	\caption{
		(a) $H=\bigcap_{i=0}^3 \Upsilon_i$ is a (possibly degenerate) convex $12$-gon.
		$\gamma$ touches every edge of $H$.
		(b) An example of a degenerated convex $12$-gon. The degenerated edges 
		are denoted by black disks.
		(c) For $X_{0,2}=\Upsilon_0 \cap \Upsilon_2$ (red) and $X_{1,3}=\Upsilon_1 \cap \Upsilon_3$ (blue), $(X_{0,2} \cup X_{1,3}) \setminus H$ consists of 12 triangles (gray), each of which 
is an isosceles triangle with apex angle $2\pi/3$.
	}
	\label{fig:G4-peri-same}
\end{figure}
\begin{lemma}\label{lem:hexagon.perimeter.bound}
The perimeter $\peri{H}$ equals $\frac{a_0+a_1 + a_ 2 + a_3}{2\cos(\pi/12)}$.
\end{lemma}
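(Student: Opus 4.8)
The plan is to compute $\peri{H}$ by a support-function / slab-width argument. The key observation is that $H$ is the intersection of the four scaled triangles $\Upsilon_0,\dots,\Upsilon_3$, where $\Upsilon_i = a_i\,g^i\triangle_\beta + v_i$ for some translation $v_i$. Each $\triangle_\beta$ is an equilateral triangle of height $\beta=\cos(\pi/12)$, so it is the intersection of three slabs of width $\beta$ with normal directions spaced $\pi/3$ apart; hence $\Upsilon_i$ is the intersection of three slabs of width $a_i\beta$, with normals rotated by $i\pi/2$ relative to those of $\Upsilon_0$. Consequently $H$ is the intersection of twelve slabs $\{S_k\}_{k=0}^{11}$, one for each of the twelve directions $k\pi/6$ (taken mod $\pi$), and the width $w_k$ of $S_k$ is $a_i\beta$ where $i$ is determined by which of the four triangles contributes that direction — concretely, grouping the twelve directions into four triples $\{k, k+4, k+8\}$, each triple comes from one $\Upsilon_i$ and all three slabs in that triple have the common width $a_i\beta$.

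First I would set up coordinates so that the outward edge-normals of $H$ point in directions $\theta_k = k\pi/6$ for $k=0,\dots,11$ (an edge may be degenerate, i.e.\ have length $0$, but this does not affect the argument). Write $\ell_k \ge 0$ for the length of the edge of $H$ with outward normal $\theta_k$. The standard relation between edge lengths and slab widths for a convex polygon whose edge normals are equally spaced by $\pi/6$ is: the width of $H$ in direction $\theta_k$ equals the sum of the projections of all edges onto direction $\theta_k$, but a cleaner route is to use the fact that, for a convex polygon, each slab $S_k$ bounding $H$ is supported on the two sides with normals $\theta_k$ and $\theta_{k+6}=\theta_k+\pi$, so $w_k$ equals the distance between the lines carrying edges $\ell_k$ and $\ell_{k+6}$. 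Then I would express $w_k$ as a linear combination $w_k = \sum_j c_{kj}\,\ell_j$ where $c_{kj}=\sin$ of the angle between normals $\theta_k$ and $\theta_j$ restricted to the appropriate half — in fact $w_k = \tfrac12\sum_{j=0}^{11}\ell_j\,|\sin(\theta_j-\theta_k)|$ — i.e.\ the width in a direction is half the "girth-type" sum of edge lengths weighted by $|\sin|$ of the angular difference.

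Next, summing $\peri{H}=\sum_{k=0}^{11}\ell_k$ against the constraint that the four triples of widths are each constant: I have twelve unknowns $\ell_k$ and the twelve width equations $w_k = a_{i(k)}\beta$. The main computational step — and the place I expect the real work to be — is to invert (or rather, to take the right linear functional of) this $12\times12$ circulant-like system to show that $\sum_k \ell_k$ depends only on the four sums $a_0+a_1+a_2+a_3$, and to pin down the constant $\tfrac{1}{2\cos(\pi/12)}$. I would exploit the dihedral symmetry heavily: the map $\theta\mapsto\theta+\pi/6$ permutes everything cyclically, so it suffices to find coefficients $\lambda_k$ with $\peri{H}=\sum_k\lambda_k w_k$ and $\lambda_k$ constant on each residue class mod — actually constant overall by symmetry, $\lambda_k\equiv\lambda$ — giving $\peri{H}=\lambda\sum_k w_k = \lambda\cdot 3\beta(a_0+a_1+a_2+a_3)$, and then $\lambda$ is determined by testing on a single known case (e.g.\ take $\gamma$ a circle, or take $H=\Upsilon_0$ itself with $a_1=a_2=a_3$ chosen so that all four triangles coincide as a regular hexagon), which should yield $3\lambda\beta = \tfrac{1}{2\cos(\pi/12)}\cdot\tfrac{1}{1}$, i.e.\ $\lambda = \tfrac{1}{6\beta\cos(\pi/12)}=\tfrac{1}{6\cos^2(\pi/12)}$; I would double-check the constant against the degenerate case in Figure~\ref{fig:G4-peri-same}(b).

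Alternatively — and this is the cleaner write-up I would ultimately prefer — I would use the decomposition suggested by Figure~\ref{fig:G4-peri-same}(c): let $X_{0,2}=\Upsilon_0\cap\Upsilon_2$ and $X_{1,3}=\Upsilon_1\cap\Upsilon_3$ be the two "Star-of-David" hexagons (each an intersection of two oppositely-oriented equilateral triangles), so that $H = X_{0,2}\cap X_{1,3}$, and $(X_{0,2}\cup X_{1,3})\setminus H$ is a disjoint union of $12$ isosceles triangles each with apex angle $2\pi/3$. Going from $X_{0,2}$ (a hexagon of perimeter determined by $a_0,a_2$) to $H$ removes six such triangles and correspondingly replaces six edges; because each removed isosceles triangle with apex $2\pi/3$ has the property that its two equal sides have total length equal to its base divided by $\cos(\pi/3)=\tfrac12$ — wait, more precisely base $=2s\sin(\pi/3)=s\sqrt3$ and the two legs sum to $2s$, so legs$-$base replacement changes the perimeter in a controlled way — I can track the perimeter change edge by edge. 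Carrying this out, the perimeter of a "reflective" hexagon $\Upsilon_i\cap\Upsilon_{i+2}$ is $\tfrac{a_i+a_{i+2}}{2\cos(\pi/12)}\cdot(\text{something})$... the bookkeeping here is the genuine obstacle, but it reduces to elementary trigonometry of a single $2\pi/3$-isosceles triangle, and the equally-spaced-normals symmetry guarantees the four $a_i$ enter symmetrically. Either route establishes $\peri{H}=\dfrac{a_0+a_1+a_2+a_3}{2\cos(\pi/12)}$, proving the lemma.
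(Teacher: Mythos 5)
Your second route is the paper's actual argument, but you stop exactly where the proof lives; and your first route, as written, fails at the outset. In Route 1 you set the widths of $H$ equal to $a_{i(k)}\beta$, but the width of $H$ in an edge-normal direction of $\Upsilon_i$ is in general strictly smaller than $a_i\beta$: the polygon $H$ touches the line containing that edge of $\Upsilon_i$ (because $\gamma$ does), but not the parallel slab boundary through the opposite vertex of $\Upsilon_i$. Concretely, for $\gamma$ a circle of circumference $2$, $H$ is a regular $12$-gon of inradius $1/\pi$ with width $2/\pi$ in every edge-normal direction, while $a_i\beta=3/\pi$. Moreover, positing a linear functional $\peri{H}=\sum_k\lambda_k w_k$ and fixing the constant by symmetry plus one test case presupposes exactly what the lemma asserts, namely that $\peri{H}$ depends on the configuration only through $a_0+a_1+a_2+a_3$ and not on the relative translations of the four triangles; that independence is the whole content. (This route can be repaired by working with support values instead of widths: the sum of the support values of an equilateral triangle over its three edge-normal directions equals its height regardless of position, so $\sum_{k=0}^{11}h_H(\theta_k)=\beta\sum_i a_i$, and a convex $12$-gon with normals spaced by $\pi/6$ satisfies $\peri{H}=(4-2\sqrt3)\sum_k h_H(\theta_k)$; but you would have to prove these two facts.)

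What is missing from Route 2 is a two-stage accounting identity, and it is not mere bookkeeping. First, $(\Upsilon_0\cup\Upsilon_2)\setminus X_{0,2}$ consists of six equilateral triangles; each has two legs on $\partial\Upsilon_0\cup\partial\Upsilon_2$ and base equal to an edge of $X_{0,2}$, so the legs exhaust $(\partial\Upsilon_0\cup\partial\Upsilon_2)\setminus\partial X_{0,2}$ while the bases exhaust the rest of $\partial X_{0,2}$; hence the six perimeters sum to $\peri{\Upsilon_0}+\peri{\Upsilon_2}=3(a_0+a_2)\delta$, and since the base of an equilateral triangle is one third of its perimeter, $\peri{X_{0,2}}=(a_0+a_2)\delta$, likewise $\peri{X_{1,3}}=(a_1+a_3)\delta$. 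Second, the same argument for the twelve $2\pi/3$-isosceles triangles of $(X_{0,2}\cup X_{1,3})\setminus H$: their legs exhaust $\partial X_{0,2}\setminus\partial H$ and $\partial X_{1,3}\setminus\partial H$ while their bases exhaust $\partial H$, so their total perimeter is $\peri{X_{0,2}}+\peri{X_{1,3}}$, and your own computation (base $s\sqrt3$, legs summing to $2s$) gives base-to-perimeter ratio $\sqrt3/(2+\sqrt3)$, whence $\peri{H}=\frac{\sqrt3}{2+\sqrt3}(a_0+a_1+a_2+a_3)\delta$, which reduces to the claimed value via $\delta=\frac{2}{\sqrt3}\cos(\pi/12)$ and $\cos^2(\pi/12)=\frac{2+\sqrt3}{4}$. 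Without these identities the ``perimeter change edge by edge'' you gesture at is uncontrolled, so the proposal does not yet prove the lemma.
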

\begin{proof}

Let $\delta$ be the side length of $\Delta_{\beta}$. 
The intersection $X_{0,2}= \Upsilon_0 \cap \Upsilon_2$ is a hexagon where $\gamma$ touches all of its six (possibly degenerated) edges.
Then, $(\Upsilon_0 \cup \Upsilon_2) \setminus X_{0,2}$ consists of six equilateral triangles, and the total sum of the perimeters of them equals 
$\peri{\Upsilon_0} + \peri{\Upsilon_2} = 3(a_0 + a_2) \delta$.  
On the other hand, one edge of each equilateral triangle contributes to
 the  boundary polygon of $X_{0,2}$.  Hence, $\peri{X_{0,2}} = (a_0 + a_2)\delta$.  Similarly, the perimeter of $X_{1,3} = \Upsilon_1 \cap \Upsilon_3$ equals $(a_1 + a_3)\delta$. 

Now, consider $H = \bigcap_{i=0}^3 \Upsilon_i = X_{0,2} \cap X_{1,3}$. See Figure~\ref{fig:G4-peri-same}(c).
The set $(X_{0,2} \cup X_{1,3}) \setminus H$ consists of 12 triangles, each of which 
is an isosceles triangle with apex angle $2\pi/3$.   The total sum of the perimeters of these triangles equal $\peri{X_{0,2}} + \peri{X_{1,3}}$, 
while the bottom side of each isosceles triangle contributes to the 12-gon $H$. 
Since the ratio of the bottom side length to the perimeter of the isosceles triangle is $\frac{\sqrt{3}}{2+ \sqrt{3}}$, 
\begin{eqnarray*}
\peri{H} &=& \frac{\sqrt{3}}{2 + \sqrt{3}} (\peri{X_{0,2}} + \peri{X_{1,3}})\\
 &=& \frac{\sqrt{3}}{2+ \sqrt{3}}(a_0+a_1+a_2+a_3) \delta\\
  &=& \frac{2}{2+\sqrt{3}} \cos(\pi/12)(a_0+a_1+a_2 + a_3)\\
  &=& \frac{a_0+a_1+a_2+ a_3} {2\cos(\pi/12)}
 \end{eqnarray*} 
The third equality comes from $\delta = \frac{2}{\sqrt{3}} \beta = \frac{2}{\sqrt{3}}\cos(\pi /12)$, 
and the last equality comes from 
$\cos(\pi/12) =  \frac{1+ \sqrt{3}}{2 \sqrt{2}} $ and hence $\cos^2(\pi/12) = \frac{2+\sqrt{3}}{4}$. 
\end{proof}

The following lemma states the relation between the lengths of $\gamma$ and $H$.
 \begin{figure}[t]
		\centering
		\includegraphics[scale=.8]{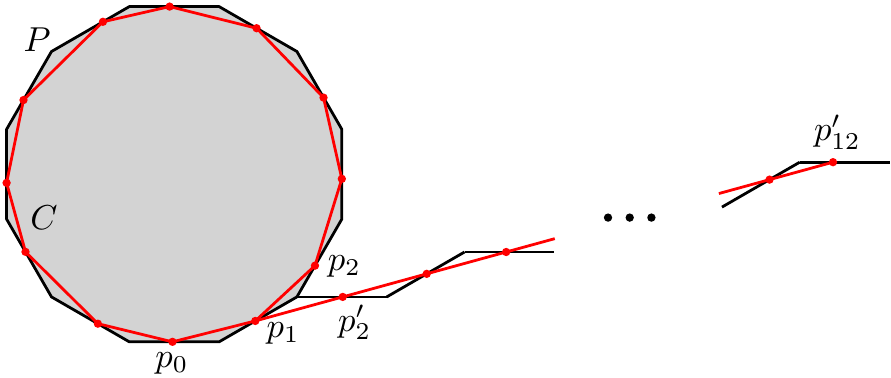}
		\caption{Illustration of the proof of Lemma~\ref{lem:G4.12gon}.
		}
		\label{fig:G4-tri-beta-b}
	\end{figure}
\begin{lemma}\label{lem:G4.12gon}
Let $P$ be a (possibly degenerate) convex $12$-gon with edges 
$e_k$ of slopes $2k\pi/12\; (\bmod\; \pi)$ 
for $k=0,\ldots,11$.
Let $C$ be a circuit that connects $12$ points, one point on each edge of $P$, 
in order along the boundary of $P$.
Then $\peri{C} \ge \peri{P} \cos (\pi/12)$. 
\end{lemma}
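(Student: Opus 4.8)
The plan is to prove the inequality edge-by-edge by a local reflection (unfolding) argument, exactly in the spirit of the tiling argument used in Lemma~\ref{lem:tiling}. Write $C = p_0 p_1 \cdots p_{11} p_0$ where $p_k \in e_k$, and for each $k$ let $w_k$ be the length of the segment $p_k p_{k+1}$ (indices mod $12$), so $\peri{C} = \sum_{k=0}^{11} w_k$. Similarly let $\ell_k = |e_k|$ be the length of the $k$-th edge of $P$, so $\peri{P} = \sum_k \ell_k$. The idea is to show that the portion of $C$ consisting of the two segments incident to a single vertex of $P$ already "pays for" $\cos(\pi/12)$ times the edge of $P$ at that vertex, after which we sum over all $12$ vertices and divide by $2$.

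Concretely, fix a vertex $v$ of $P$ at which edges $e_{k-1}$ and $e_k$ meet; since consecutive edges of $P$ have slopes differing by $\pi/12$, the interior angle of $P$ at $v$ is $\pi - \pi/12 = 11\pi/12$. Reflect the point $p_{k-1}$ across the line $\ell$ through $v$ that bisects the exterior angle at $v$ — equivalently, reflect across the line making the two edges symmetric — to obtain $p_{k-1}'$; the broken path $p_{k-1} \to p_k \to p_{k+1}$ has the same length as $p_{k-1}' \to p_k \to p_{k+1}$ is not quite what I want, so instead I would use the standard unfolding: reflect the half-plane bounded by $\ell_{e_{k-1}}$ (the line supporting $e_{k-1}$) across $\ell_{e_k}$, or more cleanly, observe that $C$ must cross every line supporting an edge of $P$, and use the hexagonal-tiling unfolding of $P$ as in Figure~\ref{fig:tiling} adapted to the $12$-gon. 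The cleanest route: among the six pairs of "opposite" directions $2k\pi/12$, $k=0,\dots,5$, the $12$-gon $P$ has a well-defined width $d_k$ in direction perpendicular to slope $k\pi/12$, and $C$ (being a closed curve touching all $12$ edges) must span each such slab, giving $\peri{C} \ge$ some combination of the $d_k$. Meanwhile, since $P$ is a convex $12$-gon with all edge-slopes fixed, $\peri{P}$ is a fixed linear combination of its widths $d_0,\dots,d_5$, and one checks the coefficient ratio is exactly $\cos(\pi/12)$ — this is the same arithmetic that appears in Lemma~\ref{lem:hexagon.perimeter.bound} (where $\peri{H}$ came out proportional to $\sum a_i$).

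I would therefore carry out the steps in this order: (1) set up the parametrization $C = p_0\cdots p_{11}$ and record the slopes/interior angles of $P$; (2) for each vertex $v_k$ of $P$ (the meeting point of $e_{k-1}, e_k$) perform the unfolding reflection so that the two $C$-segments adjacent to $v_k$ straighten into a path whose length is bounded below by the distance between the two unfolded supporting lines, and show this distance is $\ell_{k-1}$-or-$\ell_k$ times a trigonometric factor; more robustly, (3) reduce to the extremal configuration where $C$ is inscribed "tautly" — by convexity the minimum of $\peri{C}$ over choices of the $p_k$ is attained when consecutive unfolded segments are collinear, and in that extremal case $\peri{C}$ equals the length of a single straight segment through the $6$-fold unfolded copy of $P$, whose length is computed directly; (4) compute both $\peri{P}$ and this extremal $\peri{C}$ as linear forms in the six widths $d_0,\dots,d_5$ of $P$ and verify the ratio is $\cos(\pi/12)$, using $\cos^2(\pi/12) = \tfrac{2+\sqrt3}{4}$ exactly as in the previous lemma.

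The main obstacle I anticipate is step (2)–(3): handling \emph{degenerate} $12$-gons, where some edge $e_k$ has length $0$ and the curve $C$ is forced to pass through that vertex, touching two (or more) edges at once. There the unfolding picture collapses and one must argue that passing a single point through several supporting lines simultaneously still contributes the right amount; I would handle this by a limiting argument (perturb $P$ to a nondegenerate $12$-gon with the same edge-slopes, apply the nondegenerate case, and pass to the limit, using that both $\peri{C}$ and $\peri{P}$ are continuous under this perturbation), or alternatively by noting the bound is scale- and limit-stable so it suffices to prove it for the open dense set of nondegenerate $P$. A secondary technical point is confirming that the extremal (taut) configuration really minimizes $\peri{C}$ and that it is consistent — i.e.\ that the straight unfolded segment actually meets all twelve edges in their interiors — but since the statement only requires a lower bound, any valid lower bound from the unfolding suffices even if the extremal configuration is not itself realized by a legal circuit.
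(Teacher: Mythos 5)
Your proposal circles the right family of ideas (unfolding by reflections, widths in the six edge-normal directions) but never supplies the one quantitative step that produces the constant $\cos(\pi/12)$, and the route you declare ``cleanest'' has a genuine hole. You write that $C$, touching all twelve edges, ``must span each such slab, giving $\peri{C}\ge$ some combination of the $d_k$.'' Spanning a single slab of width $d_k$ only yields $\peri{C}\ge 2d_k$, and no combination of these separate facts gives what is actually needed, namely $\peri{C}\ge 2\sin(\pi/12)\sum_{j=0}^{5}d_j$ (equivalent to the lemma once one verifies, by a support-function computation, that $\peri{P}=2\tan(\pi/12)\sum_j d_j$ for a convex $12$-gon with equally spaced edge directions). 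To reach a bound involving the \emph{sum} of all six widths you would need something like Cauchy's formula $\peri{C}=\int_0^\pi w_C(\theta)\,d\theta$ together with the per-sector estimate $\int_{\theta_j-\pi/12}^{\theta_j+\pi/12}w_C(\theta)\,d\theta\ge 2\sin(\pi/12)\,w_C(\theta_j)$ (provable from a chord of $C$ realizing the width $d_j$), plus the observation that $C$ is convex because the twelve points lie in convex position on $\partial P$; none of this appears in your write-up. Your step (4) is also incorrect as stated: in the unfolded extremal configuration $\peri{C}$ is \emph{not} a linear form in the widths --- it is $|a+b|$ for a horizontal vector $a$ and a slope-$\pi/6$ vector $b$ with $|a|+|b|=\peri{P}$, and $\cos(\pi/12)$ arises precisely from minimizing $|a+b|$ subject to that constraint. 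That minimization (or the equivalent per-sector integral bound) is the heart of the lemma and is absent; your first, vertex-local attempt is abandoned mid-sentence and would not work anyway, since both circuit segments near a vertex can be made arbitrarily short.

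For comparison, the paper's proof is the sequential unfolding you gesture at in steps (2)--(3), carried to completion: reflect $e_{k+1},\dots,e_{12}$ about $e_k$ successively for $k=1,\dots,11$, so that $\partial P$ flattens into a zigzag alternating horizontal edges and edges of slope $\pi/6$ (this is \emph{not} the six-copy hexagonal tiling of Lemma~\ref{lem:tiling}, which does not generalize: reflected copies of a polygon with $150^\circ$ angles do not tile); the unfolded circuit then has length at least $|p_0p'_{12}|=|a+b|\ge(|a|+|b|)\cos(\pi/12)=\peri{P}\cos(\pi/12)$, and degenerate edges cause no trouble because the argument never divides by an edge length. Your plan for handling degeneracy by perturbation is fine but secondary; the missing minimization is the real gap.
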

\begin{proof}
Let $p_k$ be the point on $e_k$ for $k=0,\ldots,11$.
The circuit $C$ connects the points from $p_0$ to $p_{12}$ in increasing order of their
indices, where
	$p_{12}$ is a duplicate of $p_0$. We also assume that $p_{12}$ 
	is on $e_{12}$ which is a duplicate of $e_0$.
While incrementing $k$ by $1$ from $1$ to $11$,
	we reflect the edges $e_{k+1},\ldots, e_{12}$
	about the edge $e_k$.
Then, the edges $e_0$,\ldots,$e_{12}$ are transformed to 
a zigzag path alternating a horizontal edge and an edge of slope $2\pi/12$, 
and $C$ becomes the path 
connecting $p_0, p_1, p'_2,\ldots,p'_{12}$, where $p'_k$ is the location
of $p_k$ after the series of reflections for $k=2,\ldots,12$.
Thus, $\peri{C}$ is at least the distance between $p_0$ to $p'_{12}$.
See Figure~\ref{fig:G4-tri-beta-b} for the illustration.
The vector $x = p_0 p'_{12}$ is the sum of a horizontal vector $a$ and 
another vector $b$ of slope $\pi/6$ such that $|a| + |b| = \peri{P}$.
The size $|x|$ of $x$ is minimized when $|a|=|b|$
to attain the value $(|a|+|b|) \cos(\pi/12)$. Thus, $|x| \ge \ell(P) \cos(\pi/12)$
and $\peri{C} \ge \peri{P} \cos(\pi/12)$.
\end{proof}

From Lemma~\ref{lem:G4.12gon}, the following corollary is immediate. 

\begin{corollary}\label{cor:perimeter}
For any $G_4$-\brimful curve $\gamma$ for $\triangle_{\beta}$, 
$\peri{\gamma} \ge \peri{H} \cos (\pi/12)$,  $H = H(\gamma)$. 
\end{corollary}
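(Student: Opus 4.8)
The plan is to obtain Corollary~\ref{cor:perimeter} by feeding Lemma~\ref{lem:G4.12gon} the $12$-gon $P=H(\gamma)$ together with an inscribed circuit coming from the touching points of $\gamma$, and then using the elementary fact that an inscribed polygon of a curve is no longer than the curve.

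First I would set $P:=H=H(\gamma)$ and recall from the discussion preceding the corollary that $H$ is a (possibly degenerate) convex $12$-gon whose edges $e_0,\ldots,e_{11}$ have slopes $2k\pi/12\ (\bmod\ \pi)$, and that $\gamma$ touches every edge of $H$. For each $k=0,\ldots,11$ choose a point $p_k\in\gamma\cap e_k$; if $e_k$ is degenerate, $p_k$ is the vertex it collapsed to, which $\gamma$ still meets. Next I would check that these points appear in the cyclic order $p_0,p_1,\ldots,p_{11}$ along the boundary of the convex hull $C(\gamma)$: since $\gamma\subseteq H$ and $H$ is convex we have $C(\gamma)\subseteq H$, and since each $p_k$ lies on $\partial H$ and inside $C(\gamma)$ it must lie on $\partial C(\gamma)$; walking around the convex curve $\partial C(\gamma)$ therefore meets the $p_k$ in the same cyclic order as the edges carrying them. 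Hence the closed polygon $C:=p_0p_1\cdots p_{11}p_0$ is exactly a circuit connecting one point on each edge of $P$ in boundary order, and it is inscribed in the convex curve $\partial C(\gamma)$, so $\peri{C}\le\peri{C(\gamma)}\le\peri{\gamma}$.

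Finally, applying Lemma~\ref{lem:G4.12gon} to $P=H$ and this circuit $C$ gives $\peri{C}\ge\peri{H}\cos(\pi/12)$, and chaining the two inequalities yields $\peri{\gamma}\ge\peri{H}\cos(\pi/12)$, which is the claim.

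I do not expect a genuine obstacle here — this is why the paper calls the corollary ``immediate''. The only points that need a word of care are the degenerate case, where several edges of $H$ collapse to a single vertex that $\gamma$ passes through (then the corresponding $p_k$ coincide and the associated edges of $C$ have length $0$; since Lemma~\ref{lem:G4.12gon} is already stated for possibly degenerate $12$-gons, nothing breaks), and the fact that $\gamma$ need not be convex (harmless, since replacing $\gamma$ by $\partial C(\gamma)$ only shortens the perimeter while keeping all touching points on the boundary).
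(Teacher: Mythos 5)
Your proposal is correct and matches the paper's intent: the paper simply declares the corollary immediate from Lemma~\ref{lem:G4.12gon}, and your argument (pick a touching point on each of the 12 edges of $H$, note they occur in boundary order, form the inscribed circuit, and chain $\peri{\gamma}\ge\peri{C(\gamma)}\ge\peri{C}\ge\peri{H}\cos(\pi/12)$) is exactly the intended justification. Your remarks on degenerate edges and on passing to the convex hull correctly handle the only delicate points.
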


Combining Lemma~\ref{lem:hexagon.perimeter.bound} and Corollary~\ref{cor:perimeter}, 
we have 
$\peri{\gamma} \ge \peri{H} \cos (\pi/12) = (a_0+ a_1 + a_2+ a_3)/2 \ge 2$. The claim is proved.

\subsection{Minimality  of the \container}
\begin{figure}[b]
	\centering
	\includegraphics[scale=.8]{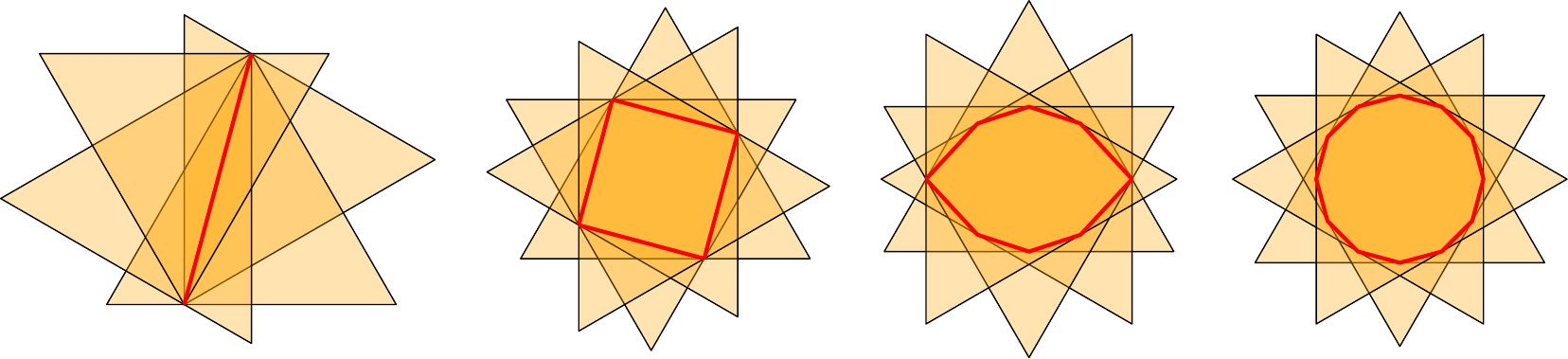}
	\caption{Examples of $G_4$-\brimful curves of length 2.}
	\label{fig:G4-brimful}
\end{figure}

There are many $G_4$-\brimful curves of length $2$, and some examples are  illustrated as red curves (the line segment is considered as a degenerate closed curve) in Figure~\ref{fig:G4-brimful} together with the (possibly degenerate) $12$-gons they inscribe.
They suggest that $\triangle_{\beta}$ is the smallest-area convex $G_4$-\container of $\ccset$. 

\begin{conjecture}
$\triangle_{\beta}$ is the smallest-area convex $G_4$-\container of $\ccset$.
\end{conjecture}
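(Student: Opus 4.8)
The plan is to establish the matching lower bound $\mathrm{area}(K)\ge\mathrm{area}(\triangle_\beta)=\frac{2\sqrt3+3}{12}$ for every convex $G_4$-\container $K$ of $\ccset$. The naive approach of Corollary~\ref{cor:G2.area.bound} — restricting attention to the $Z_4$-invariant sub-family $\segset\subset\ccset$ — cannot work here, since by Theorem~\ref{thm:G4.seg} that sub-family forces only area $\ge\tfrac12<\mathrm{area}(\triangle_\beta)$. A proof must therefore genuinely use two-dimensional curves of length $2$: the disk of perimeter $2$, and the $G_4$-\brimful curves for $\triangle_\beta$ shown in Figure~\ref{fig:G4-brimful}, which are exactly the curves against which $\triangle_\beta$ is tight.

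I would attempt the bound in two parts. \emph{Part I (triangular \containers).} First, show that among all triangular convex $G_4$-\containers of $\ccset$ the minimum area is attained, uniquely, by $\triangle_\beta$. A triangular $G_4$-\container of $\ccset$ must contain the disk of perimeter $2$, and — by the argument in the proof of Theorem~\ref{thm:G4.seg} — a pair of orthogonally crossing unit segments; imposing these conditions (and, if needed, further necessary conditions extracted from the $G_4$-\brimful curves) reduces Part~I to a finite-dimensional optimization over triangles, to be carried out by elementary calculus in the spirit of Wetzel's analysis of triangular \containers~\cite{Wetzel2,Wetzel3}. \emph{Part II (reduction to triangles).} Then argue that an area-minimizing convex $G_4$-\container may be taken to be a triangle. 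The natural tool is the perimeter/$12$-gon machinery of Section~3.2 run in reverse: for a candidate optimal $K$, the equality cases of Lemma~\ref{lem:hexagon.perimeter.bound}, Lemma~\ref{lem:G4.12gon} and Corollary~\ref{cor:perimeter} against a length-$2$ $G_4$-\brimful curve force $K$ and its three $Z_4$-rotations to circumscribe unit copies of $\triangle_\beta$, hence force $K$ to have edges only of the twelve slopes $2k\pi/12\ (\bmod\ \pi)$; one would then show that any extreme structure of $K$ incompatible with a triangle can be clipped so as to strictly decrease the area while preserving the $G_4$-\container property, via a reflection/tiling argument on these twelve forced directions in the spirit of the proof of Theorem~\ref{thm:G2.main}.

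The main obstacle lies in Part~II, and it is precisely the phenomenon separating $G_4$ from $G_2$: whether a curve of length $2$ fits in $K$ under translation is \emph{not} determined by the widths, or the twelve relevant support values, of $K$, so the $12$-gon inequalities are only \emph{necessary} conditions on $K$ and may fail to be tight for a general convex $K$. Pushing Part~II through seems to require a rigidity statement — that the extremal length-$2$ curves, each inscribing a non-degenerate $12$-gon and thus "filling" its \container, make width-domination equivalent to containment — together with a delicate constrained area minimization over the convex bodies meeting the resulting conditions. That is exactly the step left open; carrying it out would also establish $\triangle_\beta$ as the \emph{unique} optimal convex $G_4$-\container.
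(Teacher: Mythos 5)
This statement is stated in the paper only as a conjecture: the authors explicitly say they have not proven it, and the strongest result they establish in that direction is the much weaker Proposition that $\triangle_{\beta}$ is \emph{set-theoretically} minimal (no proper closed subset of $\triangle_{\beta}$ is a $G_4$-\container of $\ccset$), via Lemma~\ref{lem:covering_Sa'}. Your proposal does not close this gap either, and you say so yourself: Part~II is left open. So what you have is an honest research plan, not a proof, and the conjecture remains unproven.

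Beyond the acknowledged incompleteness, two points in the plan are genuinely problematic. First, the $12$-gon machinery of Section~3.2 (Lemma~\ref{lem:hexagon.perimeter.bound}, Lemma~\ref{lem:G4.12gon}, Corollary~\ref{cor:perimeter}) proves the \emph{upper-bound} direction only: it shows that any curve $G_4$-\brimful for the specific body $\triangle_{\beta}$ has length at least $2$. Running it ``in reverse'' against an arbitrary candidate $K$ has no logical content, because those lemmas exploit the exact geometry of $\triangle_{\beta}$ (its twelve rotated support directions and the explicit perimeter identity for the intersection of its four rotations); for a general convex $K$ they yield no constraint at all, and as you correctly note, width- or support-value domination does not imply containment under translation. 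Second, Part~II presupposes that an area-minimizing convex $G_4$-\container may be taken triangular, but the paper gives no evidence for this and the $G_3$ case is a cautionary example: the best known $G_3$-\container $\Gamma_3$ is not a triangle, and the paper even shows that symmetric (rotation-regular) candidates are strictly worse. Even Part~I is harder than sketched: the paper only identifies $\triangle_{\beta}$ as the smallest \emph{equilateral} triangle containing an orthogonal pair of unit segments, whereas your Part~I requires optimizing over all triangles subject to the full $G_4$-\containment of $\ccset$, for which the disk-plus-orthogonal-pair conditions are necessary but very likely not sufficient. None of these steps is carried out, so the proposal should be regarded as a (reasonable) strategy discussion rather than a proof.
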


 Although we have not yet proven it rigorously, we can prove that it is minimal in the 
set theoretic sense.  That is, no proper closed subset of $\triangle_{\beta}$ can be a 
$G_4$-\container of $\ccset$.

Without loss of generality, assume that $\triangle_{\beta}$ is located with 
its bottom side being horizontal.
Let $S$ be the set of unit line segments that can be contained in $\triangle_{\beta}$ 
under translation (without considering the $Z_4$-action).
Then, the set of slope angles of the line segments of $S$ is $A=[0, \pi/12] \cup [3\pi/12, 5\pi/12] \cup [7\pi/12,9\pi/12] \cup [11\pi/12, \pi]$.   Let $A' = (0, \pi/12) \cup (3\pi/12, 5\pi/12) \cup (7\pi/12,9\pi/12) \cup (11\pi/12, \pi)$ and let $S_{A'}$ be the set of all unit line segments 
of slopes in $A'$.

\begin{lemma}\label{lem:covering_Sa'}
$\triangle_{\beta}$ is the smallest-area closed convex $T$-\container of $S_{A'}$. 
\end{lemma}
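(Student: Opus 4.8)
The goal is to show that $\triangle_\beta$ is the smallest-area closed convex $T$-\container of $S_{A'}$, the set of unit segments with slopes in the open set $A'$. Since $\triangle_\beta$ itself is a $T$-\container of $S_{A'}$ (its slope set $A \supseteq A'$), I only need the lower bound: any closed convex $T$-\container $X$ of $S_{A'}$ has area at least that of $\triangle_\beta$. The natural route mirrors the uniqueness argument in the proof of Theorem~\ref{thm:G4.seg}: exploit that $A'$ is dense near the three ``corner'' slopes $0,\pi/3,2\pi/3$ (equivalently $\pi/12$-neighborhoods of these, taken mod $\pi$, meet $A'$) to force $X$ to contain three unit segments whose pairwise angles are $\pi/3$, and then show the convex hull of those three segments — when placed ``corner to corner'' inside a triangle — already has area $\geq \mathrm{area}(\triangle_\beta)$.

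First I would argue that $X$ contains a unit segment of each of the three slopes $0$, $\pi/3$, $2\pi/3$. Since $X$ is a $T$-\container of $S_{A'}$, for every slope $\theta \in A'$ it contains a translate of $s(\theta)$; taking a sequence $\theta_k \in A'$ with $\theta_k \to 0$ (possible since $(0,\pi/12) \subset A'$ and likewise $(11\pi/12,\pi)\subset A'$ approaches $0$ from below), compactness of $X$ and closedness give a limiting unit segment $s(0) \subseteq X$; similarly $s(\pi/3), s(2\pi/3) \subseteq X$ using the intervals around $3\pi/12 = \pi/4$... wait — here I must be careful: $\pi/3 = 4\pi/12$ lies in $(3\pi/12, 5\pi/12) \subset A'$, and $2\pi/3 = 8\pi/12 \in (7\pi/12, 9\pi/12) \subset A'$, so actually $s(\pi/3)$ and $s(2\pi/3)$ are directly in $S_{A'}$, only $s(0)$ requires the limiting argument. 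Good. So $X$ contains three unit segments $u_0, u_1, u_2$ of slopes $0, \pi/3, 2\pi/3$, hence their convex hull $\Phi = C(u_0 \cup u_1 \cup u_2)$.

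Next, the geometric heart: I want $\mathrm{area}(\Phi) \geq \mathrm{area}(\triangle_\beta)$, with the understanding that $\Phi$ is a minimum only when the three segments are arranged exactly as in $\triangle_\beta$. The idea is that a convex set containing three unit segments with consecutive directions differing by $\pi/3$ must have width at least $\beta = \cos(\pi/12)$ in every one of the six slab-directions $k\pi/6$: indeed, by Lemma~\ref{lem:G4.12gon} applied in spirit (or directly), the three segments pinned across a hexagon of edge-slopes $k\pi/6$ force the three slab widths $d_0, d_{\pi/3}, d_{2\pi/3}$ each to be at least... and a convex body all of whose three slab widths in directions $0,\pi/3,2\pi/3$ are $\geq \beta$ has area $\geq \mathrm{area}(\triangle_\beta)$, since $\triangle_\beta$ is the unique equilateral triangle realizing slab-width exactly $\beta$ in those directions and any smaller-area convex body violating this would give (by a known isoperimetric-type fact for fixed three slab widths, or by a direct reflection/tiling argument as in Lemma~\ref{lem:tiling}) a contradiction. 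Concretely I would run the same tiling-of-six-copies argument from Lemma~\ref{lem:tiling} backwards: each of $u_0,u_1,u_2$ has length $1$, they stretch across the respective slabs, and combining this with the hexagon $H_\Phi = L_0 \cap L_{\pi/3} \cap L_{2\pi/3}$ shows $d_0 + d_{\pi/3} + d_{2\pi/3} \geq$ the right quantity forcing each $d \geq \beta$; then area of any convex body with those slab widths is minimized by the equilateral triangle $\triangle_\beta$.

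**Main obstacle.** The delicate point is the final inequality: ``every closed convex body whose slab widths in directions $0, \pi/3, 2\pi/3$ are each at least $\beta$ has area at least $\mathrm{area}(\triangle_\beta)$.'' This is not literally an isoperimetric statement — it is a statement that the equilateral triangle minimizes area among convex bodies with prescribed lower bounds on three equiangular slab widths — and proving it cleanly (rather than by a messy case analysis on which slabs are ``tight'') is where the real work lies. I expect the cleanest path is: the three slabs themselves intersect in a hexagon $H$ of area that is an explicit increasing function of $(d_0,d_{\pi/3},d_{2\pi/3})$ minus correction terms, and since $\Phi \supseteq$ a ``brimful'' configuration touching all sides, one shows $\mathrm{area}(\Phi) \geq \mathrm{area}(\triangle_\beta)$ by noting the smallest such $\Phi$ is obtained when two of the three widths degenerate to make the hexagon a triangle — and that degenerate triangle is exactly $\triangle_\beta$. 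Handling this degeneration rigorously, and confirming uniqueness ($\Phi = \triangle_\beta$ forces the containing $X$ to equal $\triangle_\beta$ under its closure), is the step I would budget the most care for.
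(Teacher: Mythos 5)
There is a genuine gap, and it sits at the geometric heart of your argument: the three segments you extract from $X$ are the wrong ones, and the area bound they yield is too weak. The three sides of an equilateral triangle of side length $1$ (placed with a horizontal side) are unit segments of slopes exactly $0$, $\pi/3$, $2\pi/3$, so the convex hull $\Phi$ of suitable translates of your $u_0,u_1,u_2$ can be that triangle, whose area is $\sqrt{3}/4\approx 0.433<\frac{2\sqrt{3}+3}{12}\approx 0.5387$. The same example refutes your intermediate claim that such a configuration forces slab width at least $\beta=\cos(\pi/12)\approx 0.966$ in the directions $0,\pi/3,2\pi/3$: the unit equilateral triangle has width only $\sqrt{3}/2\approx 0.866$ in each of those three directions. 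The underlying issue is that $\pi/3$ and $2\pi/3$ are \emph{interior} points of $A$, hence non-extremal; the configurations that make $\triangle_\beta$ tight are pairs of \emph{orthogonal} unit segments such as slopes $\pi/12$ and $7\pi/12$ (see Figure~\ref{fig:G4-tri-beta-a}), and no triple of directions spaced $\pi/3$ apart ever contains such a pair.

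The repair requires enlarging your segment set to the six slopes $\pi/12+i\pi/6$ for $i=0,\dots,5$, i.e.\ the endpoints (mod $\pi$) of the intervals of $A'$; these lie in the closure of $A'$, so any compact convex $T$-\container of $S_{A'}$ contains all six unit segments by exactly the compactness-and-limits argument you correctly describe for slope $0$. This is what the paper does. It then closes the argument by invoking Ahn et al.~\cite{Ahn2014}: the smallest-area convex $T$-\container of a finite set of segments is a triangle produced by an explicit algorithm, and for these six segments the relevant object (the smallest regular hexagon containing them under translation, which is the Minkowski symmetrization $\{\frac12(x-y)\mid x,y\in\triangle_\beta\}$) certifies that the algorithm outputs $\triangle_\beta$. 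Your closing ``minimize area subject to three slab-width lower bounds'' program, even if carried out rigorously, cannot recover this because its hypotheses are not implied by the three segments you chose; with the six segments one must instead control the convex hull over all translations, which is precisely the nontrivial step the paper outsources to~\cite{Ahn2014}.
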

\begin{proof}
Consider six unit segments of slopes $\pi/12+i\cdot\pi/6$ for $i=0,1,\ldots,5$.
Then any compact convex $T$-\container of $S_{A'}$ contains 
these six unit segments under translation.

Ahn \etal~\cite{Ahn2014} proved there exists a triangle that is the smallest-area 
convex $T$-\container of any given set of segments, 
and showed an algorithm to construct the triangle. 
For the six unit segments,
their algorithm computes the smallest regular hexagon that contains 
them under translation. Since the hexagon is the Minkowski symmetrization of $\triangle_{\beta}$, which is $\{\frac{1}{2}(x-y)\mid x,y\in\triangle_{\beta} \}$, 
the algorithm returns $\triangle_{\beta} $ as 
the smallest-area closed convex $T$-\container of the six unit segments.
Thus, the lemma follows.
\end{proof}

\begin{proposition}
$\triangle_{\beta}$ is a minimal closed convex $G_4$-\container of $\ccset$.
\end{proposition}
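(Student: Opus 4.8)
The plan is to show that removing any point from $\triangle_{\beta}$ destroys the $G_4$-\containment property, by exhibiting for each boundary point of $\triangle_{\beta}$ a curve in $\ccset$ that is forced to touch that point. Since $\triangle_{\beta}$ is already known to be a $G_4$-\container (Theorem~\ref{thm:G4.main}), and any proper closed convex subset $K \subsetneq \triangle_{\beta}$ misses an open neighborhood of some boundary point $p$ of $\triangle_{\beta}$, it suffices to produce, for every boundary point $p$, a curve $\gamma_p \in \ccset$ such that in \emph{every} $G_4$-placement of $\gamma_p$ inside $\triangle_{\beta}$, the curve occupies $p$ (so that no placement fits inside $K$). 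First I would reduce to a countable or at least structured family of such witnesses: the key observation is that the unit segments of $S_{A'}$ from Lemma~\ref{lem:covering_Sa'} already force $\triangle_{\beta}$ to be their smallest closed convex $T$-\container, so together with the $Z_4$-symmetry the segments of $S_{A'}$ and their rotations cover all four ``free'' slope-intervals; these segments pin down most of the boundary.

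The main steps, in order: (1) Recall from Lemma~\ref{lem:covering_Sa'} that $\triangle_{\beta}$ is the smallest closed convex $T$-\container of $S_{A'}$; hence for every boundary point $p$ of $\triangle_{\beta}$ lying on a side (not a vertex), some segment $s \in S_{A'}$ in its unique translated placement touches $p$, and removing $p$ prevents that placement. One must also check the $G_4$ action does not create an alternative placement: a segment of slope in $A'$ cannot be rotated by $\pi/2$ into a slope in $A'$ again unless $A'$ were $\pi/2$-periodic, which it is — but the four rotated copies of the segment-placement problem all yield the \emph{same} minimal container $\triangle_{\beta}$ by the symmetry used in the proof of Lemma~\ref{lem:covering_Sa'}, so the rotations give no escape. (2) Handle the three vertices of $\triangle_{\beta}$ separately: a vertex is not touched by any unit segment placement in general, so here I would use the circle (or a curve approaching it) of perimeter $2$, or more carefully a $G_4$-\brimful curve of length $2$ whose inscribed $12$-gon $H(\gamma)$ has a vertex coinciding with the triangle vertex — the examples in Figure~\ref{fig:G4-brimful} show such curves exist and touch all twelve edges including the two meeting at a triangle corner, so removing the corner breaks containment. (3) Conclude: every boundary point is a necessary point, so no proper closed convex subset is a $G_4$-\container, i.e., $\triangle_{\beta}$ is minimal.

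The hard part will be Step (2) and the ``no alternative placement'' verification in Step (1): one must argue that for the witness curve $\gamma_p$, \emph{no} element of $G_4$ yields a placement of $\gamma_p$ inside $\triangle_{\beta}$ avoiding $p$. For segment witnesses this follows from uniqueness of the minimal container combined with the fact that $A$ (the full slope set) restricted to each of the four intervals determines a \emph{unique} translate — but care is needed at the interval endpoints $\pi/12, 3\pi/12, \dots$ where the segment can touch a vertex region of $\triangle_{\beta}$ and placement may be non-unique; I would either perturb the slope slightly into the open interval $A'$ or invoke a limiting/compactness argument as in the proof of Theorem~\ref{thm:G4.seg}. For the vertex points, the cleanest route is: pick $\gamma$ to be a $G_4$-\brimful curve of length exactly $2$ (one of the examples of Figure~\ref{fig:G4-brimful}) whose associated $12$-gon $H(\gamma)$ equals $\triangle_{\beta}$ itself in one of its four rotated incarnations — then by the brimful property $\gamma$ cannot be placed in any smaller scaled copy under $G_4$, and by Corollary~\ref{cor:perimeter} equality $\peri{\gamma} = \peri{H}\cos(\pi/12)$ together with $\peri{H} = 2/\cos(\pi/12)$ forces $\gamma$ to touch every edge of $H = \triangle_{\beta}$, in particular the two edges adjacent to the chosen vertex arbitrarily close to that vertex, so no translate of $\gamma$ (in any of the four rotations) can avoid a neighborhood of the vertex. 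Assembling these witnesses over all boundary points and invoking that any proper closed convex subset omits a neighborhood of some boundary point finishes the proof.
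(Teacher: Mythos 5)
You correctly identify the two ingredients the paper's proof rests on: the observation that each slope $\theta\in A'$ has no $Z_4$-equivalent slope in $A$, so any $G_4$-\container $P\subseteq\triangle_{\beta}$ must contain a translate of every segment of $S_{A'}$, and Lemma~\ref{lem:covering_Sa'}. But at that point the paper is essentially done: Lemma~\ref{lem:covering_Sa'} forces the area of $P$ to equal the area of $\triangle_{\beta}$, and a closed convex subset of $\triangle_{\beta}$ of full area must be $\triangle_{\beta}$ itself (if some point of $\triangle_{\beta}$ were missing, a separating line would cut off a region of positive area, and closedness then recovers the boundary, vertices included). Your plan instead tries to exhibit, for each boundary point $p$, a witness curve forced to touch $p$ in every $G_4$-placement; this is both unnecessary and, as written, not established.

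Concretely, two steps fail. (i) For $\theta$ in the open intervals of $A'$, the set of translates of $s(\theta)$ that fit in $\triangle_{\beta}$ is a two-dimensional convex region, not a single placement, so there is no ``unique translated placement'' and no individual segment is forced to touch a prescribed boundary point; you have not shown that every boundary point of a side is met by every admissible placement of some witness, and the suggested ``limiting/compactness argument'' is not carried out. (ii) The vertex step is likewise unsubstantiated: a curve touching the two edges of $H(\gamma)$ adjacent to a vertex need not come near that vertex, and no curve is exhibited all of whose $G_4$-placements meet a prescribed neighborhood of a vertex of $\triangle_{\beta}$. Both difficulties disappear under the area argument, which treats sides and vertices uniformly; I would recommend replacing the pointwise scheme by that one-line conclusion.
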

\begin{proof}
Suppose $P \subseteq \triangle_{\beta}$ is a closed subset and a $G_4$-\container of $\ccset$. 
Observe that each angle $\theta$ in $A'$ has no other angle in 
$A$ that is equivalent to $\theta$ under the $Z_4$ action.
Thus, $P$ must contain all line segments of $S_{A'}$ under translation. 
Lemma~\ref{lem:covering_Sa'} implies that the area of $P$ is the same as that of $\triangle_{\beta}$.
Therefore, $P$ must be $\triangle_{\beta}$ itself.
\end{proof}

\section{\texorpdfstring{$G_k$}{Gk}-\container of unit line segments}
Consider the smallest-area convex $G_k$-\container of 
the set $\segset$ of all unit line segments.
In general, a smallest-area convex $T$-\container of 
any given set of segments is attained by a triangle~\cite{Ahn2014}.
This implies that there is a triangle that is a smallest-area convex $G_k$-\container
of $\segset$.
The following theorem determines the set of all smallest-area convex $G_k$-\containers.

\begin{theorem}
If $k \ge 3$ is odd, the smallest area of convex $G_k$-\container of $\segset$ is $\frac{1}{2} \sin (\pi /k) $, and it is attained 
	by any triangle $\triangle XYZ$ with bottom side $XY$ of length $1$ and
	height $\sin( \pi/ k)$ such that $\pi/2 \le \angle X \le (k-1)\pi/k$. 
	If $k \ge 4 $ is even, the smallest area of convex $G_k$-\container of $\segset$ is $\frac{1}{2} \sin (2\pi/k) $, and it is attained 
	by any triangle $\triangle XYZ$ with bottom side $XY$ of length $1$ and height 
	$\sin (2\pi/ k)$ such that $ \pi/2 \le \angle X \le (k-2)\pi / k$. 
	\end{theorem}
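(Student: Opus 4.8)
The plan is to prove matching lower and upper bounds and then pin down the extremal triangles. Write $\alpha=\pi/k$ when $k$ is odd and $\alpha=2\pi/k$ when $k$ is even, so that in both parities the claimed minimum area is $\tfrac12\sin\alpha$, the claimed height is $\sin\alpha$, and the angle condition reads $\pi/2\le\angle X\le\pi-\alpha$. The first step is a reformulation in the ``slope circle'' $\mathbb{R}/\pi\mathbb{Z}$: the $Z_k$-orbit of a slope is a coset of the finite subgroup generated by $\alpha$, i.e.\ an arithmetic progression of $m:=\pi/\alpha$ slopes with common difference $\alpha$. For a convex body $K$ let $A(K)$ be the set of slopes $\phi$ for which $K$ contains a unit segment of slope $\phi$ under translation --- equivalently, for which the longest chord of $K$ in direction $\phi$ has length at least $1$. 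Then $K$ is a convex $G_k$-\container of $\segset$ if and only if $A(K)$ meets every coset, i.e.\ the image of $A(K)$ in $\mathbb{R}/\alpha\mathbb{Z}$ is everything. I will also use that $G_k$-\container-ness is invariant under all isometries (conjugating $G_k$ by a plane isometry is an automorphism of $G_k$), so congruent copies are treated as one and convenient coordinates may be fixed; and that, passing to the closure, we may assume $K$ is compact, in which case $A(K)$ is closed by the same convergence argument used in the proof of Theorem~\ref{thm:G4.seg}.

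For the lower bound I would first prove an elementary lemma: \emph{if a convex body contains unit segments of two slopes subtending an angle $\beta\in(0,\pi)$, then its area is at least $\tfrac12\sin\beta$, with equality only if the body is a triangle having a side of length $1$ whose opposite vertex lies at distance $\sin\beta$ from that side's line (the other unit segment joining that vertex to the side), or a non-degenerate quadrilateral whose two diagonals are those unit segments.} This follows by bounding the area of the convex hull of the two segments: after rotating one of them to be horizontal, the cross-section width of the hull in the vertical direction is a nonnegative concave function that is $\ge 1$ at the level of the horizontal segment and supported on an interval of length $\ge\sin\beta$, so its integral is $\ge\tfrac12\sin\beta$, and tracing the equality cases gives the stated shapes. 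Now let $X$ be a compact convex $G_k$-\container of $\segset$. Since $A(X)$ is closed and meets every coset, a short covering/measure argument (an arc of length $<\alpha$ maps to a proper subset of $\mathbb{R}/\alpha\mathbb{Z}$) shows $A(X)$ lies in no arc of length $<\alpha$; and since $\alpha\le\pi/3$ for every $k\ge 3$ with $k\ne 4$, while a subset of $\mathbb{R}/\pi\mathbb{Z}$ of diameter $<\pi/3$ lies in an arc of its own diameter, $A(X)$ must contain two slopes at angle $\ge\alpha$. Hence $\mathrm{area}(X)\ge\tfrac12\sin\alpha$ by the lemma, using that $\sin$ increases on $[\alpha,\pi/2]$. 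The case $k=4$ (where $\alpha=\pi/2$) is exactly Theorem~\ref{thm:G4.seg}.

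For the upper bound, place the candidate triangle $T=\triangle XYZ$ with $X=(0,0)$, $Y=(1,0)$ and apex $Z=(z_1,\sin\alpha)$; the conditions ``height $\sin\alpha$'' and ``$\pi/2\le\angle X\le\pi-\alpha$'' are equivalent to $z_1\in[-\cos\alpha,0]$, and then $\mathrm{area}(T)=\tfrac12\sin\alpha$. I would show $A(T)$ contains the whole arc of length $\alpha$ running from slope $\pi-\alpha$ up to the base slope $0\equiv\pi$, using the fact that the longest chord of a triangle in a direction not parallel to a side runs from its median vertex (the one between the other two in the direction orthogonal to $\phi$) to the opposite side: for $\phi$ between $\pi-\alpha$ and the slope of $YZ$ (which lies in $(\pi-\alpha,\pi)$) the longest chord is the segment from $Z$ straight down to the base, of length $\sin\alpha/\sin\phi\ge 1$ since $\sin\phi\le\sin\alpha$ there; for $\phi$ between the slope of $YZ$ and $\pi$ it is a chord from $Y$ to side $XZ$, whose length decreases monotonically from $|YZ|>1$ down to $1$. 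Hence $A(T)$ surjects onto $\mathbb{R}/\alpha\mathbb{Z}$ (because $\pi-\alpha\equiv 0\pmod{\alpha}$), so $T$ is a $G_k$-\container, and with the lower bound it is a smallest-area one; note the chord length equals exactly $1$ at the two ends $\phi=\pi-\alpha$ and $\phi=\pi$.

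Finally, to show these are \emph{all} the minimizers: let $X$ be a smallest convex $G_k$-\container, assumed compact, with area $\tfrac12\sin\alpha$. By the lower-bound argument $A(X)$ has diameter exactly $\alpha$ (diameter $>\alpha$ would force too much area), so $A(X)$ lies in an arc $J$ of length $\alpha$, and being a \container forces the closed set $A(X)$ to equal $J$; the two endpoints of $J$ are unit-segment slopes at angle $\alpha$, so $X$ contains a minimum-area convex hull of two unit segments at angle $\alpha$, and since that hull has area $\tfrac12\sin\alpha=\mathrm{area}(X)$ and lies in $X$ it equals $X$, whence by the lemma $X$ is either the asserted triangle or a quadrilateral with unit diagonals at angle $\alpha$. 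One then rules out the quadrilateral: such a $Q$ has $A(Q)\subseteq J$, and a careful look at the longest-chord function near the two diagonal directions shows $A(Q)$ cannot equal all of $J$ (for the degenerate ``parallelogram'' case this is immediate, since the diameter of $Q$ is then $1$, attained only by the diagonals). Hence $X$ is a triangle with a unit side and height $\sin\alpha$, and repeating the longest-chord computation of the previous step for an arbitrary apex position $z_1$ shows the slopes just below $\pi$ are covered iff $z_1\le 0$ and the slope $\pi-\alpha$ is covered iff $z_1\ge-\cos\alpha$, so $A(X)$ is the full arc $[\pi-\alpha,\pi]$ --- equivalently $X$ is a \container --- precisely when $\pi/2\le\angle X\le\pi-\alpha$. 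The main obstacle is exactly this last step: cleanly excluding the quadrilateral minimizers and extracting the sharp range of $\angle X$ from the chord analysis; the covering step of the lower bound also needs the separate treatment of $k=4$ via Theorem~\ref{thm:G4.seg}.
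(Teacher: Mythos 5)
Your lower and upper bounds are correct and, at their core, run on the same engine as the paper's proof: locate two unit segments in the \container whose slopes differ by at least the minimal orbit gap $\alpha$, bound the area of their convex hull from below by $\tfrac12\sin\alpha$, and check that each listed triangle covers a closed arc of slopes of length $\alpha$. Your packaging of the lower bound is cleaner than the paper's: where the paper chooses an orbit representative $f(\theta)$ and hunts for a point of $A$ in the closure of its complement, you reformulate \container-ness as surjectivity of the covered-slope set onto $\mathbb{R}/\alpha\mathbb{Z}$ and run a diameter argument (quoting $k=4$ separately from Theorem~\ref{thm:G4.seg}, exactly as the paper does), and your concavity lemma makes explicit the hull-area fact the paper uses only implicitly when passing to the limit of $\lVert\chull_n\rVert$. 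Since the theorem asserts only the value of the minimum and that every triangle in the family attains it, your proposal does prove the theorem as stated.

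The genuine problem is in the extra step you flag as ``the main obstacle'': excluding quadrilateral minimizers. This cannot be done, because for every $k$ with $\alpha<\pi/2$ (i.e.\ every $k\ne 4$) there \emph{are} convex quadrilateral $G_k$-\containers of $\segset$ of area exactly $\tfrac12\sin\alpha$. Fix $a$ with $\frac{1}{1+\cos\alpha}<a<1$ and set $A=(a,0)$, $C=(a-1,0)$, $B=a(-\cos\alpha,\sin\alpha)$, $D=(1-a)(\cos\alpha,-\sin\alpha)$. The diagonals $AC$ and $BD$ of the convex quadrilateral $Q=ABCD$ are unit segments crossing at the origin at angle $\alpha$, so $\lVert Q\rVert=\tfrac12\sin\alpha$. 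For $q=C+t(B-C)$ on side $BC$ one gets $|Aq|^2=1+2t\bigl(a(1+\cos\alpha)-1\bigr)+t^2|B-C|^2\ge 1$, so every chord of $Q$ from $A$ to side $BC$ has length at least $1$, and these chords realize every slope in $[\pi-\alpha/2,\pi]$ (the slope of $AB$ is $\pi-\alpha/2$). The reflection in the line of slope $(\pi-\alpha)/2$ through the origin maps $Q$ to itself and swaps $A\leftrightarrow B$, $C\leftrightarrow D$, so the slopes in $[\pi-\alpha,\pi-\alpha/2]$ are covered as well; hence $A(Q)$ contains the full arc $[\pi-\alpha,\pi]$ and $Q$ is a $G_k$-\container. (For $k=4$ the constraint forces $a\ge 1$ and the construction degenerates to the triangle, consistent with Theorem~\ref{thm:G4.seg}; your parallelogram remark is correct but only treats diagonals that bisect each other.) So the characterization of \emph{all} minimizers --- which the paper also asserts informally via its ``routine'' remark --- is false in the form you aim at, even though the theorem itself survives. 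A second, smaller point: for $k=3$ and $k=6$ you have $\alpha=\pi/3$, and ``diameter exactly $\alpha$ implies containment in an arc of length $\alpha$'' fails at that boundary (three slopes pairwise at angle $\pi/3$), so even the first step of your uniqueness argument would need a separate case there.
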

\begin{proof}
We have already seen a proof for $k=4$, and it can be generalized as follows.
Let $\Lambda$ be a smallest-area convex $G_k$-\container of $\segset$. 

First, let us consider the case that $k$ is odd.
Since $Z_k$ consists of $2i\pi/k$ rotations for $i=0,1,2,\ldots,k-1$, 
one of the unit segments of slopes $\theta+ 2i\pi/k\; (\bmod\; \pi )$ 
must be contained in $\Lambda$ 
under translation for each angle $\theta$ with $0 \le \theta < 2 \pi/k$.
We define the smallest one of such  angles to be $f(\theta)$. 
As before, we can assume there exists an angle $\bar{\theta}$
such that $f(\bar{\theta}) = \bar{\theta}$.

Let $A = \{ f(\theta) \mid  0 \le \theta < 2\pi/k \} $ be the set of angles, let
$\bar{A}$ be the complement of $A$, and
let $s(\theta)$ be a unit segment of slope $\theta$.
There  exists an angle $\tilde{\theta}$ with $0 \le \tilde{\theta} <  2\pi/k $ such that 
$\tilde{\theta}$ is contained in both $A$ and the closure of $\bar{A}$.
There is a sequence $\{\tilde{\theta}_n\}_{n=1}^\infty \subseteq A$ 
such that $\lim_{n \to \infty} \tilde{\theta}_n = \tilde{\theta}+2i\pi/k$ for some $i$.
Then, $\Lambda$ contains two segments $s(\tilde{\theta})$ and $s(\tilde{\theta}_n)$ for any $n$, 
and their convex hull $\chull_n$.
Since $\lim_{n \to \infty} \tilde{\theta}_n = \tilde{\theta}+2i\pi/k$,
$\lim_{n \to \infty}\lVert \chull_n \rVert = \frac{1}{2} |\sin (2 i \pi / k )|$.
Since $\frac{1}{2} |\sin (2 i \pi / k )|$ is minimized at $i= (k-1)/2$  , the area of $\Lambda$ is at least $\frac{1}{2} \sin (\pi /k)$. 

On the other hand,
let $\triangle XYZ$ be a triangle with bottom side $XY$ of length $1$ and
height $\sin( \pi/ k)$ such that $\pi/2 \le \angle X \le (k-1)\pi/k$. Then $\angle Y +\angle Z\ge \pi/k$. We show that any segment of slope $\theta$ with $0\le\theta\le \pi/k$ can be placed
within $\triangle XYZ$. 
Any unit line segment of slope $\theta_1$ for $0\le\theta_1\le\angle Y$ can be placed within $\triangle XYZ$
with one endpoint at $Y$, and any unit line segment of slope $\theta_2$ for $\angle Y \le\theta_2\le \max\{\angle Y, \pi/k\}$ can be placed within $\triangle XYZ$ with one endpoint at $Z$. 
Thus, any segment of slope $\theta$ with $0\le\theta\le \pi/k$ can be placed
within $\triangle XYZ$.

The case of even $k$ can be proven analogously. The only difference is  that $i \ne k/2$, 
and the minimum area is attained at $i = k/2 -1$.
Analogously to the odd $k$ case, the area can be minimum only if the convex hull of $s(\theta) \cup s(\theta + 2i \pi/k)$ is a triangle, and 
it is routine to derive the 
conditions that the triangle is a $G_k$-\container.
\end{proof}

\section{Covering under rotation by 120 degrees}
\label{sec:contain.g3}
We construct a convex $G_3$-\container of $\ccset$, denoted by $\Gamma_3$, 
as follows. Let $\Gamma$ be the convex region bounded by 
$y^2=1+2x$ and $y^2=1-2x$, and containing the origin $O$.
Then $\Gamma_3$ is the convex subregion of $\Gamma$ bounded by 
the $x$-axis and the line $y=2/3$. 
The area of $\Gamma_3$ is $|\Gamma_3| = 2 \left(\frac{5}{27}+\int_{\frac{5}{18}}^{\frac{1}{2}} \sqrt{1-2x}\, dx \right)=\frac{46}{81}$, which is smaller than $0.5680$.
See Figure~\ref{fig:G3-curve-properties}(a).
\begin{figure}[t]
	\centering
	\includegraphics[scale=0.8]{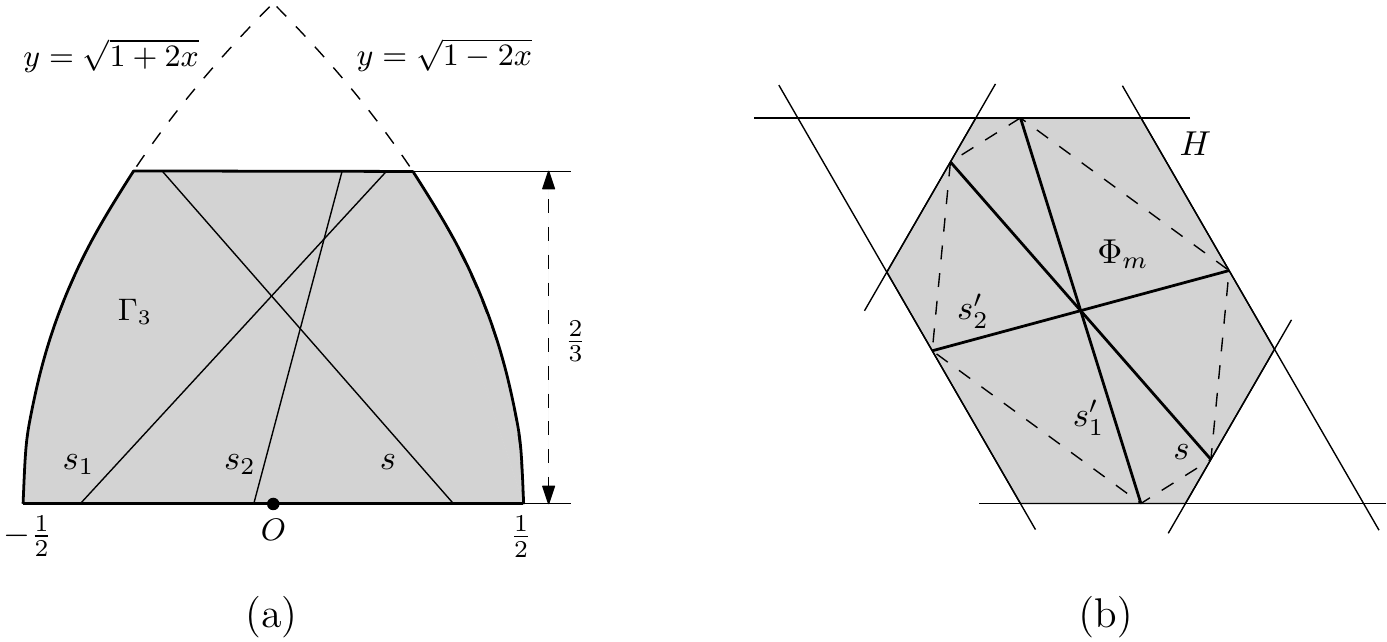}
	\caption{(a) Illustration of $\Gamma_3$. (b) $H=L_0\cap L_{\pi/3}\cap L_{2\pi/3}$.
	}
	\label{fig:G3-curve-properties}
\end{figure}

We show that $\Gamma_3$ is a $G_3$-\container of $\ccset$.
We first show a few properties that we use in the course.
Let $\Gamma^+$ be the region of $\Gamma$ 
above the $x$-axis. 
We call the boundary segment on the $x$-axis the \emph{bottom side},
the boundary curve on $y=\sqrt{1+2x}$ the \emph{left side}, and the boundary curve
on $y=\sqrt{1-2x}$ the \emph{right side} of $\Gamma^+$.
$\Gamma^+$ is called a {\it church window}, which is a $T$-\container of 
$\ccset$~\cite{BC89}.
 	The following lemma gives a lower bound on the length of a
 	$T$-\brimful curve for $\Gamma^+$.
\begin{lemma} \label{lem:G3.curve.length}
Any closed $T$-\brimful curve for $\Gamma^+$ has length at least 2.
\end{lemma}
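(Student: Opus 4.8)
The plan is to mimic the structure of the proof of Lemma~\ref{lem:G2.Sc} / Lemma~\ref{lem:tiling}: extract width information from the three slabs $L_0$, $L_{\pi/3}$, $L_{2\pi/3}$ of the curve and relate the sum $d_0+d_{\pi/3}+d_{2\pi/3}$ to a tiling of the hexagon $H=L_0\cap L_{\pi/3}\cap L_{2\pi/3}$ (Figure~\ref{fig:G3-curve-properties}(b)). First I would set up the geometry of $\Gamma^+$: because the left and right sides are the parabolic arcs $y=\sqrt{1+2x}$ and $y=\sqrt{1-2x}$, one checks that $\Gamma^+$ has the defining metric property that for every pair of directions spanning $120^\circ$ (and the horizontal), the corresponding supporting-slab widths are controlled — concretely, $\Gamma^+$ is exactly the set of points whose perpendicular distances to the three lines of slopes $0,\pi/3,2\pi/3$ bounding a fixed equilateral configuration behave like the ``church window'' construction of Bezdek–Connelly~\cite{BC89}. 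The key identity I want is that if a convex curve $\gamma$ is $T$-brimful for $\Gamma^+$, then (after translating) $\gamma$ touches the bottom side, the left side and the right side, and moreover the slab widths satisfy a relation forcing $d_0+d_{\pi/3}+d_{2\pi/3}\ge$ some explicit constant tied to the shape of $\Gamma^+$.

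\textbf{Key steps.} (1) Show that a $T$-brimful curve $\gamma$ for $\Gamma^+$, suitably translated, simultaneously touches all three sides — bottom, left, right. This is the standard brimful argument: if it missed a side we could shrink $\Gamma^+$ (or translate to create room and then shrink), contradicting brimfulness; the only subtlety is that $\Gamma^+$ is not centrally symmetric, so ``shrink'' means scale about an appropriate point, and one must argue the three touching conditions can be achieved at once, which follows because the three sides meet the relevant supporting slabs independently. (2) With $\gamma$ touching all three sides, bound $\peri{\gamma}$ from below by the length of the shortest closed curve meeting the three sides. Here I would unfold: reflect $\Gamma^+$ successively across its sides to ``straighten'' the constraint that $\gamma$ visits bottom, left, right, bottom in cyclic order, exactly as in Lemma~\ref{lem:tiling} and Lemma~\ref{lem:G4.12gon}. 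The reflected copies of the parabolic sides are again parabolic arcs (reflection of a parabola across a line is a parabola), and the shortest path touching them is a geodesic/straight segment between the two copies of the starting point; its length is computed from the geometry of the unfolded picture. (3) Evaluate this minimum and verify it equals $2$. The cleanest route is to observe that $\Gamma^+$ is the church window, which by~\cite{BC89} is a $T$-container of $\ccset$ and is in fact \emph{brimful-tight}: the curves realizing equality are precisely the degenerate ``needle-like'' or specially shaped closed curves of length exactly $2$ that inscribe $\Gamma^+$ by touching the three sides; so the infimum over brimful curves of their length is exactly $2$, and no brimful curve can be shorter.

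\textbf{Main obstacle.} The hard part will be Step~(2): controlling the shortest closed curve that touches the three \emph{curved} sides of $\Gamma^+$, rather than three straight sides as in the $G_2$ case. Unlike the polygonal unfolding in Lemma~\ref{lem:tiling} and Lemma~\ref{lem:G4.12gon}, here the reflected sides are parabolic arcs, so the ``straight segment between endpoints'' lower bound is not immediately tight, and one has to argue that the worst case (shortest such curve) occurs at a configuration where the touching points and the curve degenerate to a controlled shape — essentially using that the parabola $y^2=1\pm 2x$ is precisely the envelope making the three slab-widths sum to a constant along a family of inscribed curves. I expect the proof to hinge on a calculus/envelope computation showing that for the church window the quantity (length of shortest closed curve touching bottom, left, right) is constant $=2$ regardless of where the left/right touching points lie, which is exactly the property that makes $\Gamma^+$ a $T$-container for $\ccset$; invoking~\cite{BC89} for this constancy (or reproving it via the explicit parabola equations) lets one conclude $\peri{\gamma}\ge 2$, proving the lemma. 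A secondary nuisance is handling degenerate cases where $H$ collapses (the curve touches a side at a vertex or along an edge), which I would dispose of by the same limiting argument used after Lemma~\ref{lem:hexagon.perimeter.bound}.
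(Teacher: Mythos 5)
Your high-level skeleton (a brimful curve touches all three sides; bound its length from below by the shortest closed curve meeting those sides; show that minimum is $2$) agrees with the paper's, but the step that actually carries the proof is missing, and the method you propose for it would fail. The unfolding in your Step~(2) does not make sense as stated: reflecting the region across a \emph{parabolic} side is not an isometry of the plane, so the zigzag-unfolding trick of Lemma~\ref{lem:tiling} and Lemma~\ref{lem:G4.12gon} (which depends on reflecting across straight edges) does not transfer, and you acknowledge yourself that you cannot close the resulting computation. The paper's route is different and elementary: the shortest closed curve touching the three sides is the triangle $\triangle XYZ$ of touching points; after disposing of the corner-degenerate cases (where the curve contains a segment of length at least $1$, hence has length at least $2$), a local-optimality (billiard) argument shows that the reflection $Z'$ of $Z$ in the $x$-axis and the reflection $\bar Z$ of $Z$ in the tangent line at $X$ are collinear with $X$ and $Y$; the optical property of the parabola then places $X$, its mirror images, and $Z'$ on a circle centered at the origin, forcing $XZ$ to be horizontal and $Y$ to be the origin. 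The focus--directrix property of $y^2=1\pm 2x$ (both parabolas have focus at $O$) then gives $\peri{\triangle XYZ}=|XO|+|OZ|+|ZX|=(x_X+1)+(1-x_Z)+(x_Z-x_X)=2$ exactly. Nothing in your sketch supplies a substitute for this reflection-plus-focal-property computation.

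Your fallback --- citing~\cite{BC89} for the fact that the church window is a $T$-\container of $\ccset$ and deducing the lemma by scaling (a closed curve of length $<2$ scales up to one of length $2$, which fits in $\Gamma^+$, so the original fits in a strictly smaller scaled copy and is therefore not brimful) --- is logically sound and would discharge the lemma, but it replaces the proof by a citation; the paper mentions that fact only as context and supplies the self-contained argument above. As written, your proposal neither completes the unfolding nor spells out the scaling reduction precisely, so it leaves a genuine gap.
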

\begin{proof}
Recall the definition of a $G$-\brimful curve in Definition~\ref{def:brimful}.
Consider a closed $T$-\brimful curve $\gamma$ of minimum length for
$\Gamma^+$. Observe that $\gamma$ touches every side of $\Gamma^+$;
otherwise $\gamma$ can always be translated to lie in the interior 
of $\Gamma^+$.
Let $\triangle{XYZ}$ be a triangle
for the touching points $X, Y, Z$ of the curve with the boundary
of $\Gamma^+$. 
Since $\peri{\gamma}\ge\peri{\triangle{XYZ}}$,	$\gamma$ is $\triangle{XYZ}$.
\begin{figure}[b]
	\centering
	\includegraphics[scale=0.85]{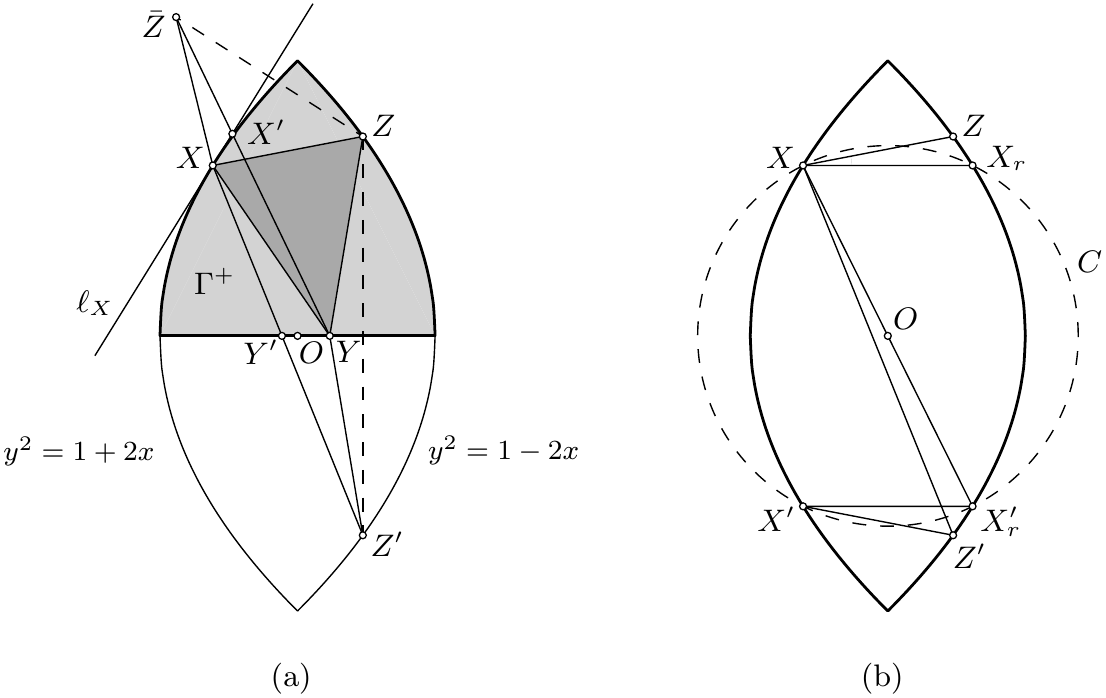}
	\caption{
		Illustration of the proof of Lemma~\ref{lem:G3.curve.length}.
	}
	\label{fig:G3-curve-length}
\end{figure}

Without loss of generality,
assume that $X$ is on the left side, $Z$ is on the right side,
and $Y$ is on the bottom side of $\Gamma^+$.
If $X$ and $Y$ are at $(-1/2, 0)$, or $X$ and $Z$ are at $(0, 1)$,
or $Y$ and $Z$ are at $(1/2,0)$, 
$\gamma$ becomes a line segment 
and the length of the line segment is always larger than or equal to $1$. 
Thus, $\peri{\gamma}\geq 2$.

Now we assume that none of $X,Y$, and $Z$ is on a corner of $\Gamma^+$.
Let $\ell_X$ be the line tangent to the left side of $\Gamma^+$ at $X$.
Let $Z'$ be the point symmetric to $Z$ with respect to the $x$-axis,
and let $\bar{Z}$ be the point symmetric to $Z$ with respect to $\ell_X$.
See Figure~\ref{fig:G3-curve-length}(a).

We claim that $\bar{Z}, X, Y$, and $Z'$ are collinear.
If $X, Y$, and $Z'$ are not collinear, consider the intersection point $Y'$ 
of $XZ'$ and the bottom side of $\Gamma^+$. 
Then, $\peri{\triangle{XY'Z}}<\peri{\triangle{XYZ}}$, since 
$|XY'| + |Y'Z| = |XZ'| < |XY|+|YZ'|=|XY|+|YZ|$. 
This contradicts the assumption on $\gamma$.
Thus, $X, Y$, and $Z'$ are collinear.
If $\bar{Z}$ is not on the line through $X$ and $Y$, 
consider the point $X'$ where $Y \bar{Z}$ intersects $\ell_X$.
Then we have $\peri{\triangle{X'YZ}}<\peri{\triangle{XYZ}}$ because
$|YX'| + |X'Z| = |Y\bar{Z}| < |YX|+|X\bar{Z}|=|YX|+|XZ|$.
For the point $X''$ where $X'Y$ intersects the left side of $\Gamma^+$,
we have $\triangle{X''YZ}\subset \triangle{X'YZ}$. 
Therefore, $\peri{\triangle{X''YZ}}<\peri{\triangle{X'YZ}}<\peri{\gamma}$, and
this contradicts the assumption on $\gamma$.

Suppose that $XZ$ is not horizontal.
From the collinearity of $X, \bar{Z},$ and $Z'$, 
the reflection of $\ell_{XZ}$ in the tangent line $\ell_X$
is $\ell_{XZ'}$. 
Let $X_r$ be the point symmetric to $X$ with respect to the $y$-axis, and
let $X'$ and $X'_r$ be the points symmetric to $X$ and $X_r$ with respect to the $x$-axis.
By the symmetry, $\angle ZXX_r =\angle Z'X'X'_r$.
From the geometry of the parabola, 
the reflection of $\ell_{XX_r}$ in the tangent line $\ell_X$
is $\ell_{XX'_r}$. 
Therefore, $\angle Z'X'X'_r=\angle Z'XX'_r$, implying that 
$X, X', X'_r$ and $Z'$ are on a circle $C$.
See Figure~\ref{fig:G3-curve-length}(b).
Since $C$ passes through $X, X'$ and $X'_r$, the center of $C$ 
is at the origin. This implies that $X'Z'$ is horizontal, and thus
$XZ$ is also horizontal. This contradicts that $XZ$ is not horizontal.

Since $XZ$ is horizontal, $Y$ is at the origin.
Thus, $\triangle{XYZ}$ is an isosceles triangle with base $XZ$,
and $\peri{\triangle{XYZ}}=2$ by the construction of $\Gamma^+$.
\end{proof}

The following lemma shows the convexity of the perimeter function
on the convex hull of planar figures under translation.
\begin{lemma}[Theorem 2 of~\cite{Ahn2012}]
\label{lem:peri.convex}
For $k$ compact convex figures $C_i$ for $i=1,\ldots,k$ in the plane,
the perimeter function of their convex hull of $C_i(r)$ is convex,
where $C_i(r)$ for a vector $r=(r_1,\ldots,r_k)\in\mathbb{R}^{2k}$ is $C_i+r_i$
for $r_i\in\mathbb{R}^2$.
\end{lemma}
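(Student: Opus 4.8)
The plan is to pass to support functions, where the perimeter becomes a linear functional of the body (Cauchy's formula) and taking a convex hull of a union becomes a pointwise maximum; this reduces the statement to the elementary facts that a maximum of affine functions is convex and an integral of convex functions is convex.

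First I would fix notation. For a compact convex set $K\subseteq\mathbb{R}^2$ and a unit direction $u_\theta=(\cos\theta,\sin\theta)$, let $h_K(\theta)=\max_{x\in K}\langle x,u_\theta\rangle$ be the support function, and recall Cauchy's perimeter formula in the form $\peri{K}=\int_0^{2\pi}h_K(\theta)\,d\theta$. I would then invoke two standard identities: $h_{K+v}(\theta)=h_K(\theta)+\langle v,u_\theta\rangle$ for any translation vector $v\in\mathbb{R}^2$ (since $\langle\cdot,u_\theta\rangle$ is affine), and $h_{\mathrm{conv}(A_1\cup\cdots\cup A_k)}(\theta)=\max_{1\le i\le k}h_{A_i}(\theta)$ (since the supremum of a linear functional over a set equals its supremum over the convex hull of the set).

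Next, write $H(r)$ for the convex hull of $C_1(r_1),\ldots,C_k(r_k)$, where $r=(r_1,\ldots,r_k)\in\mathbb{R}^{2k}$. The set $H(r)$ is compact and convex, and combining the two identities above gives, for every $\theta$,
\[
h_{H(r)}(\theta)=\max_{1\le i\le k}\bigl(h_{C_i}(\theta)+\langle r_i,u_\theta\rangle\bigr).
\]
For each fixed $\theta$, the map $r\mapsto h_{C_i}(\theta)+\langle r_i,u_\theta\rangle$ is affine on $\mathbb{R}^{2k}$, so the pointwise maximum $r\mapsto h_{H(r)}(\theta)$ is a convex function on $\mathbb{R}^{2k}$. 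Finally I would integrate over $\theta$: Cauchy's formula gives
\[
\peri{H(r)}=\int_0^{2\pi}h_{H(r)}(\theta)\,d\theta ,
\]
and since the $C_i$ are compact their support functions are bounded and continuous, so the integrand is a genuine (uniformly bounded) family of convex functions of $r$ indexed by $\theta$; applying the convexity inequality inside the integral and using that $\int_0^{2\pi}(\cdot)\,d\theta$ is linear and monotone yields that $r\mapsto\peri{H(r)}$ is convex, as claimed.

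As for the main obstacle: there is essentially none of substance — the only points requiring care are the justification of Cauchy's perimeter formula in the stated form (equivalently, that perimeter is additive under Minkowski sums and monotone in the support function) and the routine exchange of the convexity inequality with the $\theta$-integral. An alternative that avoids Cauchy's formula — decomposing $\partial H(r)$ into boundary arcs inherited from the individual $C_i$ together with the bridging segments and bounding each piece's length directly — is possible but combinatorially much heavier, so I would not pursue it; the support-function route is the clean one.
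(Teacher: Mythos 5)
Your proof is correct. Note that the paper does not prove this lemma at all --- it imports it verbatim as Theorem 2 of the cited reference of Ahn and Cheong --- so there is no internal proof to compare against; your support-function argument (Cauchy's formula $\peri{K}=\int_0^{2\pi}h_K(\theta)\,d\theta$, $h$ of a hull of a union is the pointwise max, translation adds an affine term, and integration preserves convexity) is the standard derivation and is essentially how the cited result is established.
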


For compact convex figures that have point symmetry in the plane, 
we can show an optimal translation of them 
using the convexity of the perimeter function in Lemma~\ref{lem:peri.convex}.
\begin{lemma} \label{lem:k.segments.convexhull}
For $k$ compact convex figures $C_i$ for $i=1,\ldots,k$ that have 
point symmetry in the plane, the perimeter function of 
their convex hull of $C_i(r)$ is minimized
when their centers (of the symmetry) meet at a point, 
where $C_i(r)$ for a vector $r=(r_1,\ldots,r_k)\in\mathbb{R}^{2k}$ is $C_i+r_i$
for $r_i\in\mathbb{R}^2$.
\end{lemma}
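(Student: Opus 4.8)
The plan is to combine the convexity of the perimeter function (Lemma~\ref{lem:peri.convex}) with a point-reflection symmetrization. For a vector $r=(r_1,\ldots,r_k)\in\mathbb{R}^{2k}$ I write $f(r)=\peri{\mathrm{conv}\big(\bigcup_{i=1}^{k} C_i(r)\big)}$, which by Lemma~\ref{lem:peri.convex} is a convex function on $\mathbb{R}^{2k}$. Let $o_i$ denote the center of symmetry of $C_i$, so that $-C_i=C_i-2o_i$. Note first that $f$ is invariant under translating all the figures by a common vector, so it depends only on the differences $r_i-r_j$; this lets me compare any configuration with a point-reflected one.

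The key observation is that the point reflection $x\mapsto -x$ is an isometry that commutes with unions and convex hulls, so $f(r)=f(\hat r)$ whenever $C_i(\hat r_i)=-C_i(r_i)$ for every $i$. By central symmetry, $-C_i(r_i)=-(C_i+r_i)=C_i-2o_i-r_i$, hence $f(r)=f(\hat r)$ with $\hat r_i=-2o_i-r_i$. Applying convexity at the midpoint of $r$ and $\hat r$ then gives $f\big(\tfrac12(r+\hat r)\big)\le\tfrac12\big(f(r)+f(\hat r)\big)=f(r)$. Since $\tfrac12(r_i+\hat r_i)=-o_i$ and $C_i(-o_i)=C_i-o_i$ is centered at the origin for every $i$, the midpoint configuration is exactly the one in which all centers of symmetry coincide. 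Therefore $f$ attains its minimum at that configuration, and by the translation invariance noted above the same holds when the centers meet at any fixed point.

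I expect essentially no serious obstacle beyond bookkeeping. The two points that need care are that $-C_i(r_i)$ is genuinely a translate of $C_i$ — this is precisely where central symmetry is used, via $-C_i = C_i - 2o_i$ — and that $f$ is convex on all of $\mathbb{R}^{2k}$ so that the midpoint inequality is legitimate; both are guaranteed by the statements already available (Lemma~\ref{lem:peri.convex} for the latter). It remains only to state explicitly that the perimeter of the convex hull of a union of translated copies is invariant under a global isometry, which is immediate.
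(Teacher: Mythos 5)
Your proposal is correct and follows essentially the same route as the paper: exploit the point-reflection invariance $f(r)=f(\hat r)$ coming from the central symmetry of each $C_i$, then apply the convexity of the perimeter function (Lemma~\ref{lem:peri.convex}) at the midpoint, which is the centered configuration. The only (harmless) difference is that you derive $f(\text{centered})\le f(r)$ for every $r$ directly, whereas the paper first normalizes the centers to the origin and argues via a minimizer $r$ and its reflection $-r$.
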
 
\begin{proof}
Without loss of generality, assume that the $k$ compact convex figures
are given with centers all lying at the origin.
Let $r=(r_1,\ldots,r_k)\in\mathbb{R}^{2k}$ be a vector such that
the perimeter of their convex hull of $C_i(r)$ is minimized
among all translation vectors in $\mathbb{R}^{2k}$.
Then $-r=(-r_1,\ldots,-r_k)$ is also a vector such that
the convex hull of $C_i(-r)$ has the minimum perimeter.
This is because the two convex hulls are symmetric to the origin.
Since the perimeter function is convex by Lemma~\ref{lem:peri.convex},
the convex hull of $C_i(\mathbf{0})$ also has the minimum perimeter,
where $\mathbf{0}$ is the zero vector (a vector of length zero).
\end{proof}

We are now ready to have a main result.
\begin{theorem} \label{theorem:curve.2.contain}
$\gcc$ is a convex $G_3$-\container of all closed curves of length $2$.
\end{theorem}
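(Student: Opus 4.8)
The plan is to mirror the structure of the $G_2$ argument (Theorem~\ref{thm:G2.main}) but with the three rotated copies $\Gamma^+$, $g\Gamma^+$, $g^2\Gamma^+$ of the church window, where $g$ denotes rotation by $2\pi/3$, together with their reflected copies $-\Gamma^+$, etc. First I would reduce, exactly as in the introduction, to convex closed curves $\gamma$ of length $2$, and — invoking the \brimful machinery of Definition~\ref{def:brimful} — it suffices to prove that every $G_3$-\brimful curve for $\gcc$ has length at least $2$. Since $\gcc$ is the church window $\Gamma^+$ clipped by the line $y=2/3$, a $G_3$-\brimful curve for $\gcc$ is, in particular, $G_3$-\brimful for $\Gamma^+$ after checking that the clipping never becomes the binding constraint; handling that clipping line is one of the two places where care is needed, and I would dispose of it by a slab/width argument analogous to Lemma~\ref{lem:G2.Sc}, showing that if the top side $y=2/3$ is touched then one of the three rotations fits $\gamma$ without reaching that high.

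The core estimate is then a church-window analogue of Lemma~\ref{lem:G3.curve.length} for three rotated copies. Let $\gamma$ be $G_3$-\brimful for $\Gamma^+$; for each $i=0,1,2$ let $\Upsilon_i$ be the smallest scaled-and-translated copy of $g^i\Gamma^+$ circumscribing $\gamma$, with scale factor $a_i\ge 1$ and $\min_i a_i = 1$ by \brimfulness. As in the $G_4$ proof, $H=\bigcap_i\Upsilon_i \supseteq \gamma$, and $\gamma$ touches the three (curved) "left", "right" and "bottom" arcs of each $\Upsilon_i$. The key is that along each curved side of each church window the reflection geometry of the parabola used in Lemma~\ref{lem:G3.curve.length} still applies: reflecting $\gamma$ successively across the tangent lines at its contact points, one unfolds $\gamma$ into a path whose endpoints are forced a distance $\ge a_0+a_1+a_2 \ge 3 \cdot 1$ — wait, more precisely one wants the unfolded length to be at least the $T$-\brimful length for a single $\Gamma^+$ scaled appropriately, which is $2$ when all $a_i$ are comparable, and the convexity of this length in the $a_i$ (the same averaging trick as in Lemma~\ref{lem:G4.12gon}) gives the bound $\peri{\gamma}\ge 2$. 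Here is where I would lean on Lemma~\ref{lem:G3.curve.length} itself as a black box: the single church window $\Gamma^+$ has the property that any $T$-\brimful curve has length $\ge 2$, and its triangle-minimizer is the isosceles triangle through the three extreme contact points; the three-rotation version should reduce, via the same reflection-of-tangent-lines computation, to three such triangle legs whose total length cannot drop below $2$.

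The main obstacle I anticipate is precisely this reflection/unfolding step when the three circumscribing copies are genuinely different (the $a_i$ unequal) and the contact points lie on the curved parabolic sides rather than at corners: unlike the $G_4$ case, where $H$ was an honest $12$-gon and Lemma~\ref{lem:G4.12gon} was a clean polygonal statement, here $H$ has curved sides and one cannot simply quote a "$12$-gon" lemma. I would handle this by replacing each curved side of $\Upsilon_i$ by its tangent line at the contact point of $\gamma$, obtaining an inscribed polygon $P \subseteq H$ through the same contact points with $\peri{\gamma}$ at least the inscribed-circuit length for $P$; then argue that among all such tangent-line configurations the worst case (shortest forced circuit) is the symmetric one, reducing to the single-church-window computation of Lemma~\ref{lem:G3.curve.length} scaled by the average of the $a_i$. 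A secondary obstacle is that the perimeter-convexity lemmas (Lemmas~\ref{lem:peri.convex} and~\ref{lem:k.segments.convexhull}) are stated for convex figures under translation only, so to use them to push the three rotated copies into a common-center configuration I must first observe that $\Gamma^+$, though not centrally symmetric, has the relevant symmetry after pairing with its reflection $-\Gamma^+$ — this is why the statement is about $\gcc$ (which is symmetric under $y\mapsto -y$ composed appropriately) rather than $\Gamma^+$, and I would make that pairing explicit before invoking Lemma~\ref{lem:k.segments.convexhull}.
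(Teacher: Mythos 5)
There is a genuine gap, and it comes from misplacing where the difficulty of this theorem lies. Your second and third paragraphs build the ``core estimate'' around circumscribing $\gamma$ by three rotated, independently scaled copies of the church window and unfolding $\gamma$ by reflections across tangent lines to the parabolic sides. None of this is needed, and none of it is actually carried out: the unfolding of Lemma~\ref{lem:G3.curve.length} works because the brimful curve is first reduced to a triangle and the focal reflection property of a \emph{single} parabola is used in a specific configuration; extending it to three differently scaled copies whose intersection has curved sides forces you to assert, without proof, that ``the worst case is the symmetric one'' --- that is a substantive claim, not a reduction. Moreover the whole apparatus is superfluous, because Lemma~\ref{lem:G3.curve.length} already gives that every closed curve of length $2$ fits in $\Gamma^+$ under \emph{translation alone}; no rotated copies of $\Gamma^+$ ever need to be compared. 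The only thing left to prove is that among the three $Z_3$-rotations of $\gamma$, at least one can be placed in $\Gamma^+$ without crossing the clipping line $y=2/3$. That is precisely the step you dismiss in one clause (``after checking that the clipping never becomes the binding constraint \dots I would dispose of it by a slab/width argument'') --- but that clause \emph{is} the theorem, and a curve that is $G_3$-\brimful for $\gcc$ need not be \brimful for $\Gamma^+$ at all, so your reduction at the start of the argument is circular.

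For comparison, the paper's proof spends all of its effort exactly where you spend none: place $C$ in $\Gamma^+$ touching the bottom side; if $C$ and both of its rotations cross $y=2/3$, extract from each convex hull a segment joining the bottom side to the top side whose upper endpoint is interior to the hull, rotate the latter two back into $\mycl{C}$, and observe that the convex hull $\chull$ of the three segments has $\peri{\chull}<\peri{C}=2$. Translating the segments to a common midpoint does not increase the perimeter (Lemma~\ref{lem:k.segments.convexhull}), yet the resulting hull has width at least $2/3$ in each of the directions $0,\pi/3,2\pi/3$, so the tiling bound of Lemma~\ref{lem:tiling} forces perimeter at least $2$ --- a contradiction. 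Your instinct to imitate Lemma~\ref{lem:G2.Sc} points in the right direction (one of the three widths $d_0,d_{\pi/3},d_{2\pi/3}$ must be at most $2/3$), but to turn that into a proof you must actually connect the width bound to containment in $\gcc$ and drop the three-church-window unfolding entirely; your final remark about restoring central symmetry by pairing $\Gamma^+$ with $-\Gamma^+$ is also off target, since Lemma~\ref{lem:k.segments.convexhull} is applied to segments, which are already centrally symmetric, and $\gcc$ itself is not.
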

\begin{proof}
By Lemma~\ref{lem:G3.curve.length}, any closed curve of length 2
can be contained in $\Gamma^+$ under translation.
Without loss of generality, assume that $\gcc$ is given 
as a part of $\Gamma^+$. 
Let $C$ be a closed curve of length 2 that is contained in $\Gamma^+$
and touches its bottom side, and let $\bar{C}$ be the convex hull of $C$.

Suppose that $C$ crosses the top side of $\gcc$.
Let $s$ be a segment contained in $\bar{C}$ and
connecting the top side and the bottom side of $\gcc$ 
such that the upper endpoint of $s$ lies in the interior of $\bar{C}$.
For each $i=1,2$, let $C_i$ be a rotated and translated copy of $C$ by $2i\pi/3$
such that they are contained in $\Gamma^+$ (by Lemma~\ref{lem:G3.curve.length})
and touch the bottom side of
$\gcc$. If $C_1$ or $C_2$ is contained in $\gcc$,
$\gcc$ is a convex $G_3$-\container of $C$ and
we are done. 

Assume to the contrary that neither $C_1$ nor $C_2$ is contained
in $\gcc$. Then both curves cross the top side of $\gcc$.
For $i=1,2$, let $s_i$ be a line segment contained in the
convex hull of $C_i$ and
connecting the top side and the bottom side of $\gcc$ 
such that the upper endpoint of $s_i$ lies in the interior of 
the convex hull of $C_i$. See Figure~\ref{fig:G3-curve-properties}(a).
Then there is a rotated and translated copy $s'_i$ of $s_i$ by $-2i\pi/3$
such that $s'_i$ is contained in $\bar{C}$.
Let $\chull$ be the convex hull of $s, s'_1$, and $s'_2$.
Since $s, s'_1, s'_2\subset \bar{C}$
and the upper endpoint of $s$ lies in the interior of $\bar{C}$,
$\peri{\chull}<\peri{\bar{C}}\leq \peri{C}=2$.

Now consider a translation of these three segments such that
their midpoints meet at a point, and let $\chull_m$ be the convex hull 
of the three translated segments. 
By Lemma~\ref{lem:k.segments.convexhull}, $\peri{\chull_m}\le \peri{\chull}$.
Let $L_\theta$ denote the slab of minimum width at orientation $\theta$ for 
$0 \leq \theta < \pi$ that contains $\chull_m$. 
Let $d_\theta$ be the width of $L_\theta$.
Consider the three slabs $L_0, L_{\pi/3}$, and $L_{2\pi/3}$ of $\chull_m$. 
Observe that $d_\theta$ for $\theta=0,\pi/3,2\pi/3$ 
is at least height of $\gcc$, which is $2/3$.
Let $H=L_0\cap L_{\pi/3}\cap L_{2\pi/3}$ as shown 
in Figure~\ref{fig:G3-curve-properties}(b).
Then $\chull_m$ is contained in $H$ 
and it touches every side of $H$. 
Since $H$ is a (possibly degenerate) hexagon, 
$\peri{\chull_m}\ge d_0+d_{\pi/3}+d_{2\pi/3} \geq 2$, which can be shown by a proper tiling of copies of $H$ as in Lemma~\ref{lem:tiling}. Thus, $\peri{\chull_m}\ge 2$, contradicting $\peri{\chull_m}\le \peri{\chull}<2$.
\end{proof}

Recall that the smallest-area convex $G_2$-\container $\triangle_1$ and 
$G_4$-\container $\triangle_\beta$ of $\ccset$ are equilateral triangles.
Our $G_3$-\container, $\gcc$, has area smaller than the area of $\triangle_1$, 
but a bit larger than the area of $\triangle_\beta$, 
which sounds reasonable.
 
However, it may look odd that $\gcc$ is not \emph{regular} under any discrete
rotation while $\triangle_1$ and $\triangle_\beta$ are regular 
under rotation by $2\pi/3$.
We show that any convex $G_3$-\container regular under rotation by
$2\pi/3$ or $\pi/2$ has area strictly larger than the area of $\gcc$.

Let $\Lambda$ be a convex $G_3$-\container which is regular under rotation by $2\pi/3$.  
Then $\Lambda$ is a $T$-\container of all unit segments of 
any slope $\theta$ for $0\le\theta<\pi$.
Since $\triangle_1$ is the smallest-area convex $T$-\container of 
the set of all unit segments by Theorem~\ref{thm:pal.kakeya},
the area of $\Lambda$ is at least the area of $\triangle_1$, which is strictly
larger than the area of $\gcc$.

\begin{figure}[t]
	\centering
	\includegraphics[scale=.8]{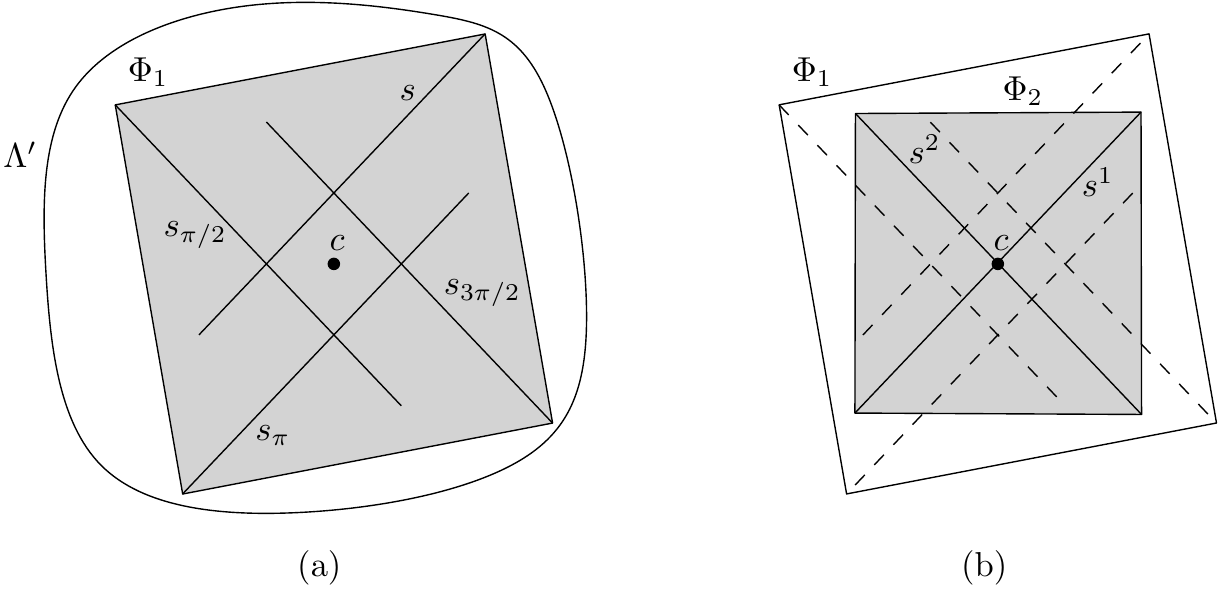}
	\caption{(a) $s_{i\pi/2}$ is contained in $\Lambda'$ for every $i=1, 2, 3$.  (b) The convex hull of $s^1$ and $s^2$ is contained in the convex hull $\Phi_1$ 
	of $s, s_{\pi/2}, s_{\pi},$ and $s_{3\pi/2}$.}
	\label{fig:G3-regular}
\end{figure}

Let $\bar{\Lambda}$ be a convex $G_3$-\container which is regular under rotation by $\pi/2$.
Assume that a unit segment $s$ of slope $\pi/4$ is contained in $\bar{\Lambda}$.
Since $\bar{\Lambda}$ is a $G_3$-\container,
there is a unit segment $s_1$ of one of slopes 
$\{0, \pi/3, 2\pi/3\}$ contained in $\bar{\Lambda}$.
Assume that $s_1$ of slope $\pi/3$ is contained in $\bar{\Lambda}$.
Since $\bar{\Lambda}$ is regular under rotation by $\pi/2$, there are 
unit segments $s'$ of slope $3\pi/4=\pi/4+\pi/2$ and $s'_1$ of slope 
$5\pi/6=\pi/3+\pi/2$ contained in $\bar{\Lambda}$. Thus, the four segments
$s,s',s_1, s'_1$ are contained in $\bar{\Lambda}$.

Let $c$ be the point of symmetry of $\bar{\Lambda}$. 
Let $\chull$ be the convex hull
of the translated copies of $s, s', s_1, s_{1}'$ such that their midpoints are all at $c$.
We will prove that the area $\|\bar{\Lambda}\|$ of $\bar{\Lambda}$ is at least the area $\|\chull\|$ of $\chull$. 
Then $\|\bar{\Lambda}\|\ge\|\chull\|=\sqrt6/4>\|\Gamma_3\|$, 
where $\|\Gamma_3\|$ is the area of $\Gamma_3$.

Suppose that the midpoint of $s$ is not at $c$.
Let $s_{i\pi/2}$ be the copy obtained by rotating $s$ around $c$ by $i\pi/2$ 
for $i=1, 2, 3$.
Since $\bar{\Lambda}$ is regular under the rotation by $\pi/2$, 
$s_{i\pi/2}$ is contained in $\bar{\Lambda}$ for every $i =1,2,3$. 
Since $\bar{\Lambda}$ is convex, the convex hull $\chull_1$ of $s$ and the segments
$s_{i\pi/2}$ for all $i=1, 2, 3$ is contained in $\bar{\Lambda}$,
and thus $\|\bar{\Lambda}\|\ge\|\chull_1\|$. 
See Figure~\ref{fig:G3-regular}(a).

Let $s^1$ be the translated copy 
of $s$ such that the midpoint of $s^1$ is at $c$,
and $s^2$ be the copy of $s^1$ rotated by $\pi/2$ around $c$.
Let $\chull_2$ be the convex hull of $s^1$ and $s^2$.
Since $s^1$ is contained in the convex hull of $s$
and $s_\pi$, and $s^2$ is contained in the convex hull of $s_{\pi/2}$ and
$s_{3\pi/2}$, $\chull_2\subset \chull_1$. See Figure~\ref{fig:G3-regular}(b).
Similarly, the convex hull of the translated copy $\bar{s}$
of $s_1$ with midpoint lying at $c$ and the rotated copy of $\bar{s}$
by $\pi/2$ around $c$ is contained in $\bar{\Lambda}$. 
Thus, we conclude that 
$\|\bar{\Lambda}\|\ge\|\chull\|\ge\sqrt6/4> 0.6>\|\Gamma_3\|$.

We can show this for $s_1$ of slopes $0$ and $2\pi/3$ contained in $\bar{\Lambda}$
in a similar way. 

\section{Covering of triangles under rotation by 120 degrees}
\subsection{Construction}
\label{sec:g3.gt.construction}
Let $\triset$ be the set of all triangles of perimeter $2$. 
We construct a convex $G_3$-\container of $\triset$, 
denoted by $\gt$, from $\gcc$ by shaving off some regions
around the top corners.
Consider an equilateral triangle $\triangle=\triangle XYZ$
of perimeter 2 such that side $YZ$ is vertical, $Y$ lies on the bottom side of 
$\gcc$ and $X$ lies on the left side of $\gcc$. See Figure~\ref{fig:G3-tri-construction}(a-b)
for an illustration. 

Imagine $\triangle$ rotates in a clockwise direction such that $X$ moves along 
the left side and $Y$ moves along the bottom side of $\gcc$.
Let $t$ denote the $x$-coordinate of $X$ and $\theta=\angle{XYO}$. 
Then, $\tan\theta=\sqrt{\frac{2t+1}{-2t-\frac{5}{9}}}$ and 
$Z=\left(\frac{\sqrt{6t+3}+\sqrt{-2t-\frac{5}{9}}+2t}{2},\frac{\sqrt{-6t-\frac{5}{3}}+\sqrt{2t+1}}{2}\right)$ for $t$ varying from $-4/9$ to $-1/3$.
The trajectory of $Z$ forms the top-right boundary of $\gt$
that connects the top side and the right side of $\gcc$.
Thus, the region of $\gcc$ lying above the trajectory is shaved off. 
The top-left boundary of $\gt$ can be obtained similarly. 
Figure~\ref{fig:G3-tri-construction}(c) shows $\gt$.

We show that $\gt$ is convex by showing that 
$\frac{d}{dx}(\frac{dy}{dx})=\frac{d}{dt}(\frac{dy}{dx})/\frac{dx}{dt}\le0$
for $Z = (x(t), y(t))$, and the boundary of $\gt$ has a unique tangent 
at $t=-4/9$ and $-1/3$.
Since the $x$-coordinate of $Z$ increases as $t$ increases, $\frac{dx}{dt}>0$.
Thus, it suffices to show that $\frac{d}{dt}(\frac{dy}{dx})\le0$ for $t$ with $-4/9\le t \le -1/3$.
Observe that
\begin{equation*}
\frac{dy}{dx} = \frac{-3(-6t-\frac{5}{3})^{-\frac{1}{2}}+(2t+1)^{-\frac{1}{2}}}{3(6t+3)^{-\frac{1}{2}}-(-2t-\frac{5}{9})^{-\frac{1}{2}}+2}.
\end{equation*}
We obtain
\begin{equation*}
f(t):= \frac{d}{dt}\left(\frac{dy}{dx}\right)=\frac{(36t+10)f_1(t)-\sqrt{3}(54+108t)f_2(t)-24}{f_1(t)f_2(t)\{2f_1(t)f_2(t)+\sqrt{3}f_1(t)-3f_2(t)\}^2},
\end{equation*} 
where $f_1(t)=\sqrt{-18t-5}$ and $f_2(t)=\sqrt{2t+1}$.
Since $f_1(t)>0$ and $f_2(t)>0$, the denominator of $f(t)$ is positive.
Since the numerator of $f(t)$ is negative, 
$\frac{d}{dt}(\frac{dy}{dx})/\frac{dx}{dt}\le0$.
At $t=-4/9$, $\frac{dy}{dx}=0$, 
which is the slope of the top side of $\gcc$. At $t=-1/3$,
$\frac{dy}{dx}=-\sqrt{3}$, which is 
the slope of the tangent to the right side of $\gcc$ at the same point.
Thus, $\gt$ is convex.

Now we show the area of $\gt$.
The area that is shaved off from $\gcc$ is 
\begin{equation*}
2 \left( \frac{2}{3}\left(\frac{13}{18} -\frac{1}{\sqrt{3}} \right) +\int_{\frac{5}{18}}^{\frac{1}{3}} \sqrt{1-2x}\, dx -\int_{\frac{1}{\sqrt{3}}-\frac{4}{9}}^{\frac{1}{3}} f(x)\, dx \right)
= \frac{1}{81}(86-44\sqrt{3}-3\pi),
\end{equation*}
where $f(x)$ is the function of $\gamma_{DE}$ such that
\begin{equation*}
\int_{\frac{1}{\sqrt{3}}-\frac{4}{9}}^{\frac{1}{3}} f(x)\, dx
= \frac{1}{4} \int_{-\frac{4}{9}}^{-\frac{1}{3}} \left(\sqrt{-6x-\frac{5}{3}}+\sqrt{2x+1}\right)\left(\frac{3}{\sqrt{6x+3}}-\frac{1}{\sqrt{-2x-\frac{5}{9}}}+2\right)\, dx.
\end{equation*}
Thus, $\gt$ has area smaller than $0.5634$. 

\begin{figure}[t]
	\centering
	\includegraphics[width=.9\textwidth]{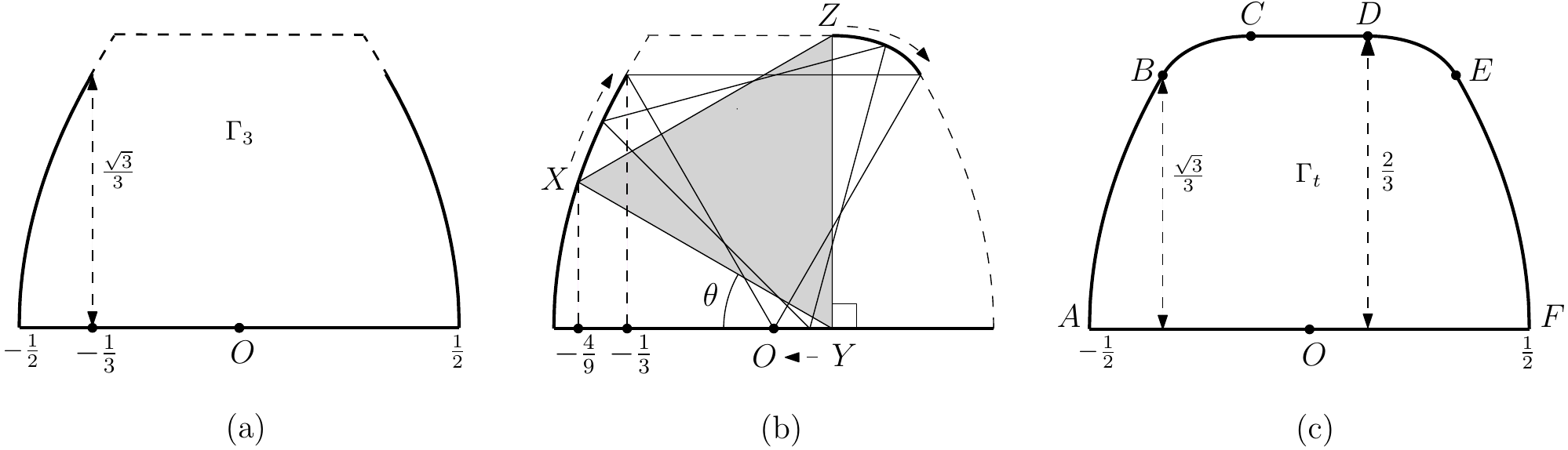}
	\caption{Construction. (a) $\gcc$. (b) The trajectories of $X$, $Y$ and $Z$. (c) A $G_3$-\container $\gt$ of $\triset$.}
	\label{fig:G3-tri-construction}
\end{figure}
\subsection{Covering of triangles of perimeter 2}
We show that $\gt$ is a $G_3$-\container of all triangles of perimeter 2. 
To do this, we first show a few properties
that we use in the course. Let $A,B,C,D,E$, and $F$ be the boundary points of $\gt$
as shown in Figure~\ref{fig:G3-tri-construction}. We denote the boundary curve
of $\gt$ from a point $a$ to a point $b$ in clockwise direction along the boundary 
by $\gamma_{ab}$. 

\begin{figure}[tb]
	\centering
	\includegraphics[width=\textwidth]{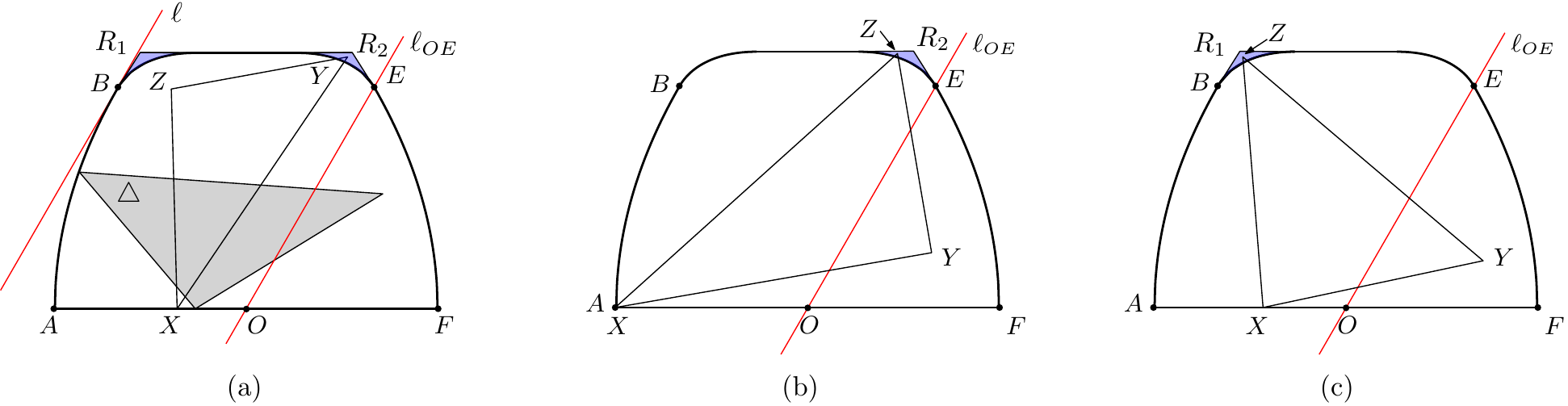}
	\caption{(a) $\triangle{XYZ}$ lying in the left of $\ell_{OE}$. 
	There is a copy $\triangle$	
	that is contained in $\gt$. (b) $\triangle{XYZ}$ contained in $\gcc$
such that $X$ is at $A$, $Y$ is in the right of $\ell_{OE}$,
and $Z\in R_2$. (c) $\triangle{XYZ}$ contained in $\gcc$
such that $X$ is on $\gamma_{OA}$, $Y$ is in the right of $\ell_{OE}$,
and $Z\in R_1$. 
	}
	\label{fig:G3-tri-cases-X}
\end{figure}

\begin{lemma} \label{lem:triangle.left.line}
Let $\triangle{XYZ}$ be a triangle of perimeter 2 contained in $\Gamma^+$. If it is in the left of $\ell_{OE}$
or in the right of $\ell_{OB}$ (including the lines),
$\gt$ is a $G_3$-\container of $\triangle{XYZ}$.
\end{lemma}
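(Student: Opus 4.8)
The plan is to exploit the fact that $\gt$ was built from $\gcc$ by shaving off exactly the regions lying above the trajectory of the apex $Z$ as the reference equilateral triangle $\triangle$ rotates. So the statement to prove is really a reachability claim: any triangle $T=\triangle XYZ$ of perimeter $2$ contained in $\Gamma^+$ can, after applying a $G_3$ transformation and a translation, be made to lie inside $\gt$, provided $T$ sits entirely on one side of $\ell_{OE}$ (the left) or of $\ell_{OB}$ (the right). By the reflective symmetry of $\gt$ about the $y$-axis it suffices to treat the left case, $T$ lying in the closed left halfplane of $\ell_{OE}$.

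First I would normalize the placement. Since $T$ is already a triangle of perimeter $2$ inside $\Gamma^+=$ the church window, and $\Gamma^+$ is a $T$-covering of $\ccset$, I may assume $T$ touches the bottom side of $\Gamma^+$ and, by the argument in Lemma~\ref{lem:G3.curve.length}, touches all three sides — or else $T$ can be translated into the interior of $\gt$ directly (the shaved region only removes points near the top corners, so enough slack remains). The key step is then: translate/slide $T$ so that one vertex lies on the bottom side and one vertex lies on the left side of $\gcc$, and observe that among all such placements, the extreme one — where the triangle is pushed as far left/down as the constraints allow — places the apex $Z$ precisely on the trajectory curve $\gamma_{DE}$ that bounds $\gt$ from above on the left, by the very definition of that trajectory in Section~\ref{sec:g3.gt.construction}. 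Since $T$ lies to the left of $\ell_{OE}$, its apex cannot be forced past the right-side portion $\gamma_{EF}$, so the binding constraint is the left trajectory, and hence some $G_3$-translate of $T$ lands with $Z$ at or below $\gamma_{DE}$, i.e. inside $\gt$. I would make this precise using a continuity/intermediate-value argument on the one-parameter family of placements (parameter $t$ = $x$-coordinate of $X$), just as the trajectory of $Z$ was itself parametrized by $t\in[-4/9,-1/3]$: as $t$ runs over its range the apex sweeps a curve that is exactly the boundary of $\gt$, and every placement of a perimeter-$2$ triangle contained in $\Gamma^+$ with its vertices on the prescribed sides is dominated (componentwise, in the "apex height" sense) by a placement in this family — because shrinking the perimeter or moving a vertex off a side only lowers the apex.

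The main obstacle I anticipate is handling the non-equilateral triangles cleanly: the trajectory $\gamma_{DE}$ in the construction was traced by a fixed \emph{equilateral} triangle of perimeter $2$, whereas Lemma~\ref{lem:triangle.left.line} asserts the covering for \emph{all} perimeter-$2$ triangles. The crux is therefore a monotonicity/extremality claim — that, given the two-vertices-on-two-sides constraint and the halfplane restriction, the equilateral triangle is the \emph{worst case}, i.e. it pushes its apex highest. I would argue this by a local perturbation argument: among perimeter-$2$ triangles with a prescribed base vertex $Y$ on the bottom side and a prescribed vertex $X$ on the left side (so the side $XY$ is fixed), the third vertex $Z$ lies on an ellipse with foci $X,Y$ (locus of constant $|XZ|+|ZY|=2-|XY|$); the highest point of $T\cap\Gamma^+$ compatible with convexity and with staying left of $\ell_{OE}$ is attained at the apex of that ellipse, and one checks the resulting configuration is the equilateral one used in the construction, or is lower. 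Once this extremal reduction is in place, the reachability follows from the explicit parametrization of $\gamma_{DE}$ already computed, together with the convexity of $\gt$ established in Section~\ref{sec:g3.gt.construction}, and the lemma is proved; Figure~\ref{fig:G3-tri-cases-X}(a) depicts exactly the extremal copy $\triangle$ witnessing containment.
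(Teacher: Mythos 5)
There is a genuine gap: your argument never isolates the actual mechanism by which the halfplane hypothesis is used, and the sliding/extremality step you substitute for it is both unproven and false in the translation-only form you state it. The paper's proof is a short slab argument: if $\triangle XYZ$ lies (weakly) to the left of $\ell_{OE}$, then, since all of $\Gamma^+$ lies on one side of the tangent to the left parabola at $B$, the triangle is confined to a slab of slope $\pi/3$ and width $\sqrt{3}/3$; rotating by $2\pi/3$ turns this into a \emph{horizontal} slab of width $\sqrt{3}/3$, and the rotated copy, slid into the bottom-left corner of $\Gamma^+$ (possible because its perimeter is $2$, by Lemma~\ref{lem:G3.curve.length}), then lies entirely below height $\sqrt{3}/3$ --- which is exactly the height at which the shaved regions $R_1,R_2$ begin --- hence inside $\gt$. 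Your plan instead translates $T$ so that two vertices sit on the bottom and left sides and asserts the apex lands on or below the trajectory $\gamma_{DE}$. Without first applying the $2\pi/3$ rotation this fails outright: a degenerate triangle close to the vertical unit segment at $x=0$ is contained in $\Gamma^+$ and lies weakly left of $\ell_{OE}$, yet reaches height near $1$, far above $\gcc$ (height $2/3$), so no translate of it fits in $\gt$; only the rotated copy does. And once you do rotate, you need a reason the rotated copy avoids $R_1\cup R_2$, which is precisely the slab-width computation you omit.

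The extremality reduction you propose as the ``crux'' --- that among perimeter-$2$ triangles with vertices on the prescribed sides the equilateral one pushes its apex highest, via the ellipse with foci $X,Y$ --- is also not a proof: the topmost point of that ellipse is the isosceles apex only when $XY$ is horizontal, and the claim that every admissible placement is ``dominated'' by one in the one-parameter family is exactly the hard content of Lemmas~\ref{lem:iso.rotate1}, \ref{lem:LX.BF.intersect}, \ref{lem:iso.rotate2} and~\ref{lem:triangle.R1}, which the paper proves separately for the cases \emph{not} covered by the present lemma. In other words, your route, if completed, would be re-proving the difficult general case rather than exploiting the halfplane hypothesis that makes this lemma easy. (Your opening reduction by the mirror symmetry of $\gt$ about the $y$-axis is fine, and matches the paper's use of the $-2\pi/3$ rotation for the right-of-$\ell_{OB}$ case, but it does not repair the main step.)
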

\begin{proof}
If $\triangle{XYZ}$ lies in the left of $\ell_{OE}$ (including the line),
$\triangle{XYZ}$ lies in between $\ell_{OE}$ and the line $\ell$ tangent to 
$\Gamma_t$ at $B$.
The two lines are of slope $\pi/3$ and they are
at distance $\sqrt{3}/3$.
Thus there are copies of $\triangle{XYZ}$ rotated by $2\pi/3$ and lying in between 
$\ell_{BE}$ and $\ell_{AF}$.
Among such copies, let $\triangle$ be the one that touches
$\gamma_{FA}$ from above and $\gamma_{AB}$ from right.
Since $\peri{\triangle}=2$, $\triangle$ is contained in $\gt$ by Lemma~\ref{lem:G3.curve.length}.
See Figure~\ref{fig:G3-tri-cases-X}(a). The case of $\triangle{XYZ}$ lying 
in the right of $\ell_{OB}$ can be shown by a copy of the triangle
rotated by $-2\pi/3$.
\end{proof}

In the following, we assume that $\triangle{XYZ}$ is contained in 
$\Gamma_3$ but it is not contained in $\gt$. 
If there is no corner of $\triangle{XYZ}$ lying in the left of $\ell_{OB}$ or 
in the right of $\ell_{OE}$ (including the lines), $\gt$ is a $G_3$-\container of $\triangle{XYZ}$ by Lemma~\ref{lem:triangle.left.line}.
Thus, we assume that $X$ is in the left of $\ell_{OB}$
and $Y$ is in the right of $\ell_{OE}$.
Since $\gt$ is convex,
the remaining corner $Z$ of $\triangle{XYZ}$ lies in 
$\gcc\setminus\gt$.
Let $R_{1}$ and $R_{2}$ denote 
the left and right regions of $\gcc\setminus\gt$, respectively,
as shown in Figure~\ref{fig:G3-tri-cases-X}(a). 
Translate $\triangle{XYZ}$ leftwards horizontally until $X$ or $Z$ hits the left side of $\Gamma_3$. 
If the triangle lies in the left of $\ell_{OE}$ (including the line), we are done by Lemma~\ref{lem:triangle.left.line}.
Thus, we assume that $Z$ is in $R_1\cup R_2$ and $Y$ lies in the right of $\ell_{OE}$.
There are two cases,  either $Z\in R_1$ or $Z\in R_2$. See Figure~\ref{fig:G3-tri-cases-X}(b) and (c).
If $Z$ is in $R_2$, then $X$ is at $A$.

\begin{lemma}\label{lem:triangle.R2}
Let $\triangle{XYZ}$ be a triangle contained in $\gcc$
such that $X$ is at $A$, $Y$ is in the right of $\ell_{OE}$,
and $Z\in R_2$. Then $\gt$ is a convex $G_3$-\container of $\triangle{XYZ}$. 
\end{lemma}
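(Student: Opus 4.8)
The plan is to reduce the claim for $\triangle XYZ$ with $Z\in R_2$ to the already-established covering property of $\gt$ by exhibiting a rotated copy of the triangle that fits inside $\gt$. Recall from the setup that when $Z\in R_2$ we have $X$ pinned at the corner $A$, $Y$ lies to the right of $\ell_{OE}$, and the whole triangle sits inside $\gcc$. First I would record the geometric constraints this forces: since $X=A$ and $Y$ is to the right of $\ell_{OE}$ while $\triangle XYZ\subseteq\gcc$, the side $XY$ is a short segment emanating from $A$ and the triangle is ``thin'' in the direction separating $R_2$ from the bulk of $\gcc$. I expect to quantify this by bounding the width of $\triangle XYZ$ in the direction of slope $2\pi/3$ (the orientation of the left side's tangent at $A$, equivalently the direction used to cut off $R_2$): because $Z$ lies in the narrow sliver $R_2$ between $\gamma_{DE}$ and the corresponding chord, and $X,Y$ lie on the other side near $A$, this width is at most $\sqrt3/3$, matching the slab argument used in Lemma~\ref{lem:triangle.left.line}.

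The key steps, in order, would be: (1) show $\triangle XYZ$ lies inside a slab of width $\le\sqrt3/3$ of slope $2\pi/3$ (or whichever of the three symmetric directions is relevant near $R_2$), using that $X=A$, the convexity of $\gcc$, and the explicit location of $R_2$ from the construction in Section~\ref{sec:g3.gt.construction}; (2) apply the rotation $e^{2\pi\sqrt{-1}/3}$ (or its inverse) to obtain a copy $\triangle'$ of the triangle whose slab of width $\le\sqrt3/3$ is now horizontal, hence fits between the bottom side $\ell_{OA}$-supporting line and the line $y=\sqrt3/3$; (3) among all translates of $\triangle'$ inside this slab, slide it so it touches $\gamma_{FA}$ from above and $\gamma_{AB}$ from the right, exactly as in the proof of Lemma~\ref{lem:triangle.left.line}; (4) invoke Lemma~\ref{lem:G3.curve.length} (a $T$-\brimful curve for $\Gamma^+$ has length $\ge2$, and a perimeter-$2$ triangle touching all three sides of $\gcc$ is contained in $\gcc$, hence in the shaved region $\gt$ once it hugs the two bottom arcs) to conclude $\triangle'\subseteq\gt$, which is what a $G_3$-\container of $\triangle XYZ$ requires.

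I would also need to handle the degenerate boundary situations — $Y$ exactly on $\ell_{OE}$, or $Z$ on the boundary curve $\gamma_{DE}$ between $R_2$ and $\gt$ — by noting these are limiting cases of the generic argument and the slab bound degrades continuously, so closedness of $\gt$ absorbs them. A small side point: one must check the relevant slab direction really is one of $\{0,\pi/3,2\pi/3\}$ so that a $Z_3$-rotation aligns it horizontally; this follows because $R_2$ is carved out by the trajectory of $Z$ whose endpoints have tangent slopes $0$ and $-\sqrt3$ (i.e.\ $0$ and $2\pi/3$), matching the construction computation.

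The main obstacle I anticipate is Step (1): proving the width bound $\le\sqrt3/3$ rigorously rather than just plausibly. Unlike Lemma~\ref{lem:triangle.left.line}, where the triangle lies between two explicit parallel tangent lines of $\gt$, here $Z$ wanders through the two-dimensional region $R_2$ and $Y$ through a region right of $\ell_{OE}$, so the extreme configuration is not immediately a pair of parallel lines. I would attack it by checking that the convex hull of $\{A\}\cup R_2\cup\{\text{points right of }\ell_{OE}\text{ inside }\gcc\}$ — or more precisely the worst-case triangle with one vertex at $A$, one in $R_2$, constrained to lie in $\gcc$ and be perimeter-$2$ — still fits in the $\sqrt3/3$-slab; this likely reduces to a short extremal/monotonicity computation on the explicit parabola $y^2=1-2x$ and the trajectory formula for $Z$ given in the construction, analogous to (and no harder than) the convexity verification already carried out for $\gt$.
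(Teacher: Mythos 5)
There is a genuine gap, and it sits exactly where you yourself flagged the risk: Step (1). The slab-width bound is not a technical verification to be deferred --- it is the whole content of the lemma, and in the form you state it, it is false. Recall the actual geometry: $A$ is the bottom-left corner $(-1/2,0)$ of $\gt$, $R_2$ is the shaved region near the \emph{top-right} corner of $\gcc$ (every point of $R_2$ has $y$-coordinate at least $\sqrt3/3$, the height of $E$), and the half-plane to the right of $\ell_{OE}$ is the bottom-right portion of $\gcc$. So the triangle has one vertex pinned at the bottom-left corner, one vertex near the top-right corner, and one vertex forced to the far side of $\ell_{OE}$; it is not ``thin,'' and $Y$ is emphatically not ``near $A$.'' Quantitatively: the width of $\triangle XYZ$ perpendicular to slope $2\pi/3$ is at least the corresponding extent from $A$ to $R_2$, which is roughly $0.9$--$1.0$, far exceeding $\sqrt3/3\approx0.577$; the vertical extent is at least $\sqrt3/3$ and generically more (since $Z$ lies above $\ell_{BE}$ while $X$ lies on the $x$-axis); and for the slope-$\pi/3$ slab, $A$ alone contributes $\sqrt3/4$ on the left of $\ell_{OE}$, so your bound would force $Y$ to lie within $\sqrt3/12\approx0.14$ of $\ell_{OE}$. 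That last statement, to the extent it is true, holds only because the constraint $\peri{\triangle XYZ}=2$ makes distant positions of $Y$ infeasible once $X=A$ and $Z\in R_2$; it does not follow from the shape of $R_2$ or from any computation on the parabola, and it is precisely the nontrivial estimate the lemma is about. With Step (1) unavailable, Steps (2)--(4) have nothing to stand on (and Step (4) by itself only places the triangle in $\Gamma^+$, not in $\gt$).

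The paper's proof shares only the first move with yours (rotate by $2\pi/3$) and then proceeds quite differently: it anchors the rotated copy with the image of $X$ at the \emph{opposite} bottom corner $F$, shows the image of $Z$ lands in $\gt$ directly from $|XZ|<1$ and $\angle ZAF>\pi/6$, and disposes of the remaining vertex by contradiction --- if the image of $Y$ lay above $\ell_{BE}$, then $Y$ would have to sit so far to the right of a certain slope-$\pi/3$ line that a reflection argument (in the spirit of Lemma~\ref{lem:G3.curve.length}) yields a triangle of perimeter at least $2$ strictly inside $\triangle XYZ$, contradicting $\peri{\triangle XYZ}=2$. That perimeter-versus-position trade-off is exactly what you would need to prove to make Step (1) work, so it cannot be ``no harder than the convexity verification'' for $\gt$; it is the lemma itself.
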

\begin{proof}
Let  $\triangle=\triangle{X'Y'Z'}$ 
be the copy of $\triangle{XYZ}$ rotated by $2\pi/3$ such that $X'$ 
lies at $F$. We show that $\triangle$ is contained in $\gt$.
Assume to the contrary that $\triangle$ is not contained in $\gt$.
Since $\angle ZAF>\pi/6$ and  
$F$ is at distance at least 1 from any point on $\gamma_{AB}$, 
$Z'$ must be contained in $\gt$. See Figure~\ref{fig:G3-tri-lem25}(a) and (b).
\begin{figure}[b]
	\centering
	\includegraphics[width=\textwidth]{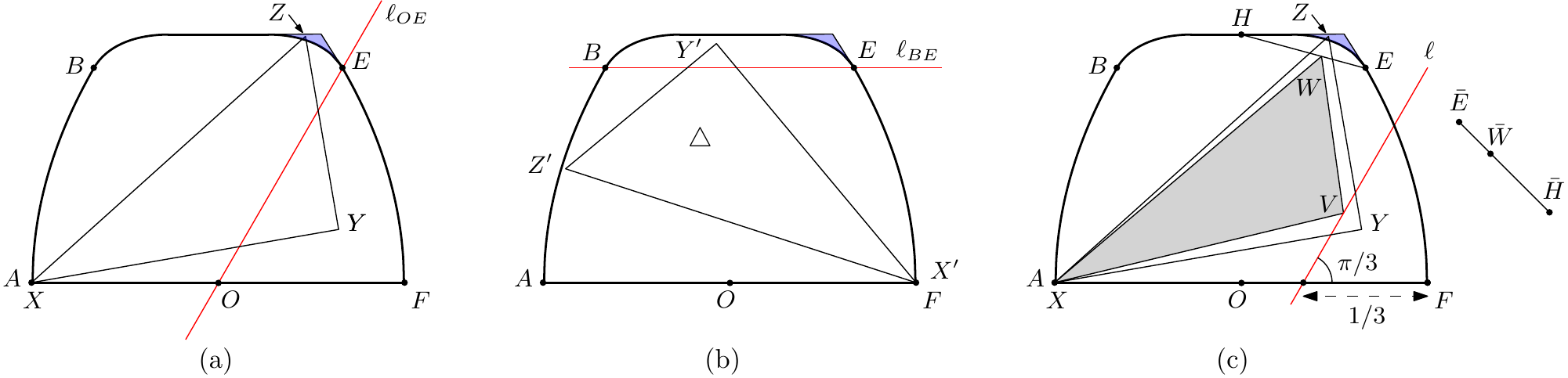}
	\caption{(a) $\triangle{XYZ}$ with $Z\in R_2$. 
	(b) A rotated copy $\triangle{X'Y'Z'}$ of $\triangle{XYZ}$ by $2\pi/3$ with $X'$ lying at $F$. 
	(c) If $Y$ lies in the right of $\ell$, then $\peri{\triangle{XYZ}}>\peri{\triangle{XVW}}\ge 2$.
	}
	\label{fig:G3-tri-lem25}
\end{figure}

If $Y'$ lies on or below $\ell_{BE}$, $\triangle$ is contained in $\gt$ 
by Lemma~\ref{lem:G3.curve.length}.
So assume that $Y'$ lies above $\ell_{BE}$.
Then $Y$ must lie to the right of the line $\ell$ of slope $\pi/3$ 
and passing through the point of $\gamma_{FO}$ at distance $1/3$ from $F$. 
See Figure~\ref{fig:G3-tri-lem25}(c).
Let $H$ be the point at $(0,2/3)$.
Then there is a triangle $\triangle{XVW}$ with $V\in\ell$, and $W\in HE$
such that $\peri{\triangle{XVW}}<\peri{\triangle{XYZ}}$.
Thus, to show a contradiction, 
it suffices to show $\peri{\triangle{XVW}}\ge 2$.
For a point $p$, let $\bar{p}$ denote the point symmetric to $p$ along $\ell$. 
Then $ |VW|=|V\bar{W}|$. 
Since $X$ is at distance at least $7/6$ to any point in $\bar{H}\bar{E}$ 
and at distance at least $5/6$ to any point in $HE$, 
$\peri{\triangle{XVW}}=|XV|+|VW|+|WX|=|XV|+|V\bar{W}|+|WX| \geq 
7/6+5/6=2$.
\end{proof}

We can also show that $\gt$ is a $G_3$-\container of $\triangle{XYZ}$ for 
the remaining case of $Z\in R_1$. 


\begin{lemma}\label{lem:triangle.R1}
Let $\triangle{XYZ}$ be a triangle contained in $\gcc$
such that $X$ is on $\gamma_{OA}$, $Y$ is in the right of $\ell_{OE}$,
and $Z\in R_1$.
Then $\gt$ is a convex $G_3$-\container of $\triangle{XYZ}$.
\end{lemma}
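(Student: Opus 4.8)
The plan is to handle the remaining case $Z\in R_1$ (with $X$ on $\gamma_{OA}$ and $Y$ to the right of $\ell_{OE}$) by the same strategy used in Lemma~\ref{lem:triangle.R2}: pick a rotated copy of $\triangle XYZ$ by $2\pi/3$ that is pinned against a favorable part of the boundary of $\gt$, and show it fits. Concretely, let $\triangle = \triangle X'Y'Z'$ be the copy of $\triangle XYZ$ rotated by $2\pi/3$, placed so that the image $X'$ of $X$ lies on $\gamma_{FA}$ (the bottom side, or the point $F$), mirroring the choice of ``$X'$ at $F$'' in the previous lemma. Since $Z\in R_1$, the rotated corner $Z'$ lands near the top-right region, and the rotated corner $Y'$ lands near the bottom-left; one then checks that both $Z'$ and $Y'$ are forced into $\gt$, so by convexity $\triangle\subseteq\gt$ and we are done. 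If the naive placement does not immediately work, the fallback is exactly as before: if some rotated corner escapes $\gt$, construct a smaller-perimeter triangle $\triangle XVW$ (replacing the offending corner by its foot on the relevant supporting line $\ell$ and on a short boundary chord) and derive a contradiction with $\peri{\triangle XYZ}\le 2$ by a reflection/distance estimate of the form $\peri{\triangle XVW}=|XV|+|V\bar W|+|WX|\ge 2$.

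First I would set up the geometry precisely: fix the positions of the labeled boundary points $A,B,C,D,E,F$ of $\gt$, and record the two key metric facts used repeatedly — that the lines tangent to $\gcc$ at $B$ and the line $\ell_{OE}$ (and their rotations) are at distance $\sqrt3/3$, and that the distance from $F$ (resp.\ $A$) to any point on the opposite short boundary arc is at least $1$. Next I would describe which corner of the rotated triangle is dangerous. In the $Z\in R_1$ configuration the triangle straddles the region between $\ell_{OB}$ and $\ell_{OE}$, so after rotating by $2\pi/3$ the corner originally at $X\in\gamma_{OA}$ moves down toward $\gamma_{FA}$, and we pin it there; the corner $Y$ (right of $\ell_{OE}$) rotates to a position $Y'$ whose only possible violation is lying above $\ell_{AF}$ or $\ell_{BE}$, and the corner $Z\in R_1$ rotates to $Z'$ near the top, whose only possible violation is crossing $\gamma_{BC}$ or $\gamma_{CD}$. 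I would rule out the $Z'$ violation using the length-$1$ distance bound from $X'$ (exactly as ``$Z'$ must be contained in $\gt$'' was argued via $\angle ZAF>\pi/6$ and $|F\,\cdot|\ge 1$), and then rule out the $Y'$ violation by the perimeter-shrinking argument: if $Y'$ were above the bad line, pull $Y$ back to a line $\ell$ of slope $\pi/3$ through a boundary point at the appropriate offset from $X$, reflect across $\ell$, and bound $\peri{\triangle XVW}\ge \tfrac76+\tfrac56=2$, contradicting $\peri{\triangle XYZ}\le 2$.

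The main obstacle I expect is the bookkeeping of \emph{which} supporting line plays the role of $\ell$ in the $R_1$ case and verifying the two distance constants ($7/6$ and $5/6$, or their analogues) with the curved boundary $\gamma_{AF}$ rather than a straight side — the parabola-based boundary of $\gcc$ makes these distance estimates slightly delicate, and one must confirm that the relevant chord (the analogue of $HE$) is short enough and the opposite vertex far enough. A secondary subtlety is that, unlike in Lemma~\ref{lem:triangle.R2} where $X$ was exactly at $A$, here $X$ ranges over the whole arc $\gamma_{OA}$, so the argument must be uniform in the position of $X$; I would handle this by noting the distance bounds are monotone along $\gamma_{OA}$ and checking the worst endpoint. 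Once these constants are pinned down, the contradiction $\peri{\triangle XVW}\ge 2 > \peri{\triangle XYZ}$ closes the case, and together with Lemmas~\ref{lem:triangle.left.line}, \ref{lem:triangle.R2} and Theorem~\ref{theorem:curve.2.contain} this establishes that $\gt$ is a convex $G_3$-\container of $\triset$.
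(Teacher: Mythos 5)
Your plan is essentially ``repeat the argument of Lemma~\ref{lem:triangle.R2}: rotate by $2\pi/3$, pin the image of $X$ on the bottom, bound $Z'$ by a distance-$1$ estimate, and kill a bad $Y'$ by the reflection bound $7/6+5/6=2$.'' This does not go through, and the paper's actual proof shows why. The whole reason the $R_2$ case is easy is that there $X$ is forced to sit exactly at the corner $A$, which is what makes the two constants $7/6$ and $5/6$ (distances from the pinned vertex to the two relevant chords) large enough to sum to $2$. In the $R_1$ case $X$ ranges over all of $\gamma_{OA}$, and no single choice of rotation direction and pinning works uniformly: the paper must split into cases according to which side of $\triangle XYZ$ is longest, and it rotates by $+2\pi/3$ in some cases and by $-2\pi/3$ in others (e.g.\ when $XZ$ is longest and $|XY|<|YZ|$, the copy is rotated by $-2\pi/3$ with $\bar X\in\gamma_{BD}$, not placed with a vertex on the bottom at all). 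Your appeal to ``the distance bounds are monotone along $\gamma_{OA}$, check the worst endpoint'' is unsupported, and at the worst positions the naive $7/6+5/6$ budget is simply not available.

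More importantly, the real content of the paper's proof is machinery you have not proposed any substitute for: three auxiliary lemmas about isosceles triangles of perimeter $2$ with base at least $2/3$ (that they can be rotated inside \gt{} with two vertices tracking $\gamma_{OA}$ and $\gamma_{EF}$, and that the line through the apex parallel to the base meets $\gamma_{OB}$), each requiring explicit calculus estimates against the parabolic boundary, followed by an ellipse-focus comparison: the free vertex $\bar Y$ of the rotated copy lies on the ellipse with foci $\bar X,\bar Z$, and is compared to the isosceles position $Y^*$ on that ellipse via the tangent line at $Y^*$ to conclude $\bar Y\in\gt$. Without this reduction from general triangles to isosceles ones, the containment of the third vertex cannot be verified by the simple reflection/distance estimate you describe. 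You have correctly located where the difficulty lies, but the proposal as written is a plan to reuse a method that fails here rather than a proof, so there is a genuine gap.
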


Before proving the lemma, we need a few technical lemmas. 
\begin{lemma} \label{lem:iso.rotate1}
Let $\triangle{XYZ}$ be an isosceles triangle of perimeter $2$ 
such that its base $YZ$ is of length $\geq 2/3$ and 
parallel to the bottom side of $\gt$, and $X$ lies at $O$.
Then $\triangle{XYZ}$ can be rotated in a clockwise direction 
within $\gt$ such that $X$ moves along $\gamma_{OA}$ and
$Y$ moves along $\gamma_{EF}$ until $Y$ meets $F$.
\end{lemma}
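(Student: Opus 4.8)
The plan is to realize the asserted motion as an explicit one–parameter family $\{T_\phi\}$ of congruent copies of $\triangle XYZ$ sitting inside $\gt$, and to check containment one vertex at a time, using that $\gt$ is convex so that a triangle lies in $\gt$ as soon as its three vertices do. Write $b=|YZ|\in[2/3,1)$ and $\ell=|XY|=|XZ|=1-b/2\in(1/2,2/3]$, so that the height of $\triangle XYZ$ from $X$ to $YZ$ is $h=\sqrt{2\ell-1}=\sqrt{1-b}\le 1/\sqrt3$. The first observation is that the starting placement meets $\partial\gt$ in a transparent way, because the two curved sides of $\Gamma^+$ (hence of $\gt$) share the focus $O$: the right side $y^{2}=1-2x$ has directrix $x=1$ and the left side $y^{2}=1+2x$ has directrix $x=-1$, so the point of the right side at distance $\ell$ from $O$ is $(1-\ell,\sqrt{1-b})$ and the point of the left side at distance $\ell$ from $O$ is $(\ell-1,\sqrt{1-b})$. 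Since $\triangle XYZ$ is isosceles with apex $X$ at $O$ and base $YZ$ horizontal, symmetry forces the base endpoints to be exactly these two points; as $h\le 1/\sqrt3$ they lie on the unshaved parts of the two curved sides of $\gt$, so the starting triangle is inscribed in $\gt$ with its base endpoints on $\partial\gt$.

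Next I would parametrize the intended motion by the angle $\phi\in[0,\phi^{\ast}]$ of clockwise rotation from the starting orientation, and for each $\phi$ define $T_\phi$ as the placement of the $\phi$–rotated copy of $\triangle XYZ$ determined by the two contact conditions $X(\phi)\in\gamma_{OA}$ and $Y(\phi)\in\gamma_{EF}$. These are two scalar constraints on the two translational degrees of freedom. I would show they have a unique solution depending continuously on $\phi$: in any admissible placement $Y(\phi)$ lies on the circle of radius $\ell$ about $X(\phi)$, and since $\gamma_{OA}$ and $\gamma_{EF}$ are strictly convex boundary arcs, this circle meets $\gamma_{OA}$ in a single relevant point, moving monotonically along $\gamma_{OA}$ away from $O$ as $\phi$ grows; this pins down $X(\phi)$, hence $Y(\phi)$ and the remaining (free) vertex $Z(\phi)$, continuously. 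When $Y$ reaches the straight top part of $\partial\gt$ one passes to an elementary second phase in which $Y$ slides along $y=2/3$.

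The core of the proof is to show $T_\phi\subseteq\gt$ throughout. By convexity it is enough to check the three vertices, and $X(\phi),Y(\phi)\in\partial\gt$ by construction, so only $Z(\phi)$ must be controlled; it travels down along a curved side of $\gt$. The key case is the equilateral triangle $b=\ell=2/3$: for it the motion is, up to relabeling the vertices and the elementary final phase along $y=2/3$, precisely the vertex motion used to define the shaved arcs of $\gt$ in Section~\ref{sec:g3.gt.construction}, so there the lemma holds with all three vertices riding along $\partial\gt$, in particular $Z(\phi)\in\partial\gt\subset\gt$. For $b>2/3$ the legs are shorter and the apex reaches up only to height $\sqrt{1-b}<1/\sqrt3$; the same construction then leaves strictly more room, and I would make this quantitative by showing that, with $X$ and $Y$ placed on $\gamma_{OA}$ and $\gamma_{EF}$ and the rotation angle fixed, decreasing $\ell$ from $2/3$ moves $Z$ monotonically into the interior of $\gt$ — uniformly in $b\in[2/3,1)$ — whence $Z(\phi)\in\gt$ for every admissible triangle. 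Finally I would take $\phi^{\ast}$ to be the first angle at which $Y(\phi)$ reaches $F$ and check that the motion survives on all of $[0,\phi^{\ast}]$: that $X(\phi)$ does not overrun $A$ and $Z(\phi)$ does not collide with the opposite shaved corner before that instant (for the equilateral triangle these events coincide with $Y$ reaching $F$, and the flatter cases should only be more favourable), which gives the lemma.

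The main obstacle is the free–vertex containment together with the ``flatter triangle has more room'' comparison: one must exclude $Z(\phi)$ rising above $y=2/3$ or crossing a curved side, and phrase the monotonicity in $b$ (equivalently in $\ell$) uniformly over the whole family rather than only infinitesimally near the equilateral case — likely via an explicit parametrization mirroring Section~\ref{sec:g3.gt.construction}. A secondary difficulty is the no–jamming step: guaranteeing that the prescribed circle always meets $\gamma_{OA}$ inside its admissible sub-arc, with no loss of a contact, until $Y$ reaches $F$.
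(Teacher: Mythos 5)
Your skeleton is reasonable (convexity of \gt{} reduces everything to the three vertices, two of which are pinned to the boundary, so only the free base endpoint $Z$ needs control), and the focal-property observation that the base endpoints of the starting position automatically sit on the two parabolas is a nice touch that the paper does not use. But the execution rests on a misreading of the geometry. In the paper's labelling $F$ is the bottom-right corner of \gt{} and $\gamma_{EF}$ is the lower arc of the right-hand parabola, so $Y$ \emph{descends} from $(b/2,\sqrt{1-b})$ to the corner $(1/2,0)$ and never approaches $y=2/3$; there is no ``second phase along the top,'' and $\gamma_{OA}$ is half of the straight bottom side, not a strictly convex arc. More importantly, the free vertex does not ``travel down along a curved side'': in the equilateral case it climbs the shaved arc $\gamma_{BC}$ from $B$ up to the top corner $C$ (this portion is indeed the construction of Section~\ref{sec:g3.gt.construction} run in reverse) and then leaves the boundary and crosses the \emph{interior} for the remaining part of the motion, ending at $(1/6,\sqrt{3}/3)$. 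So even in your ``key case'' the claim that all three vertices ride along $\partial\gt$ is false for roughly half the motion, and one must separately verify that $Z$ stays below $y=2/3$ and clear of the opposite shaved corner there.

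The more serious gap is the comparison step for $b>2/3$. The assertion that ``decreasing $\ell$ moves $Z$ monotonically into the interior, uniformly in $b$'' is the entire content of the lemma for non-equilateral triangles, and it is left unproven; as literally stated it cannot hold, since at $\phi=0$ the free vertex lies \emph{on} the left parabola (hence on $\partial\gt$) for every $b$. What actually has to be excluded is $Z$ entering the shaved region above $\gamma_{BC}$, and the paper does this by a different device: it positions the isosceles triangle with its base endpoint $\bar Z$ at $B$ and its apex within distance $1/3$ of $O$ on the bottom side, rigidly attaches it to the equilateral triangle via their convex hull, drives the motion by the equilateral triangle's two boundary contacts, and controls the extra vertices through an explicit monotonicity computation ($\partial g/\partial\theta_0>0$) for an auxiliary point of $\gamma_{EF}$ lying below $\ell_{YO}$, together with the bound $|X\bar X|\le 1/3$. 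Without either that argument or an actual proof of your monotone-in-$\ell$ comparison (correctly formulated relative to the shaved arcs rather than to $\partial\gt$ as a whole), the proof does not close.
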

\begin{figure}[b]
	\centering
	\includegraphics[width=\textwidth]{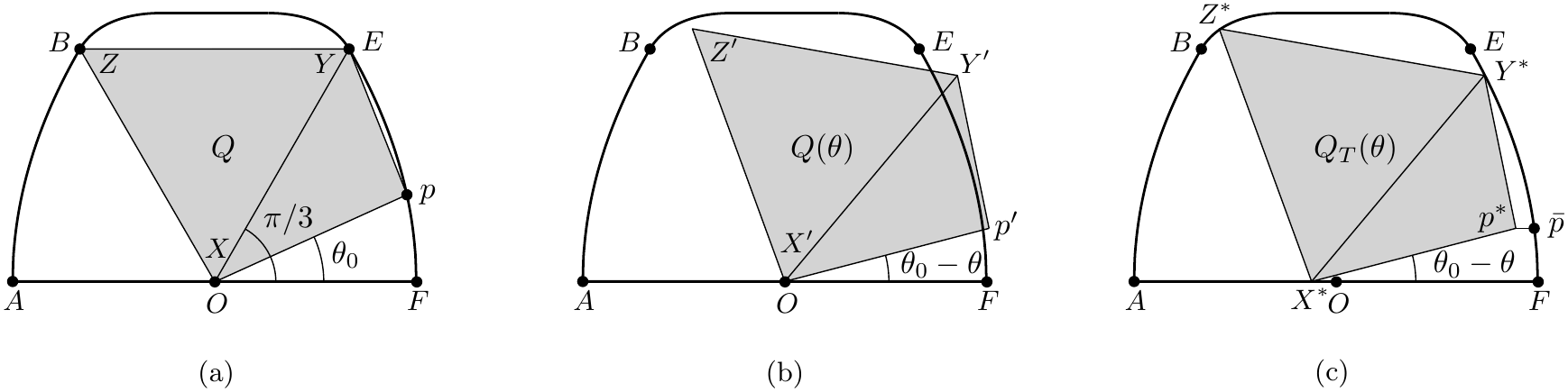}
	\caption{(a) The convex hull $Q$ of an equilateral triangle $\triangle{XYZ}$ and $p$. (b) Rotated copy $Q(\theta)$ of $Q$ by $\theta$ around $O$. (c) Translated copy $Q_T(\theta)$ of $Q(\theta)$ such that $Y^*\in\gamma_{EF}$ and $X^*\in\gamma_{OA}$}
	\label{fig:G3-tri-properties}
\end{figure}
\begin{proof} 
Let $\triangle{XYZ}$ be an equilateral triangle satisfying the conditions
in the lemma statement. Then $|YZ|=2/3$ and $\angle YOF= \pi/3$.
Let $p$ be a point in $\gt$ lying below $\ell_{YO}$, 
and let $Q$ be the convex hull of $p$ and $\triangle{XYZ}$.
Let $\theta_0$ be the angle $\angle pOF$. See Figure~\ref{fig:G3-tri-properties}(a).

We claim that $Q$ can be rotated within $\gt$ in a clockwise direction
such that $Y$ moves along $\gamma_{EF}$ 
and $X$ moves along $\gamma_{OA}$ until $p$ reaches $\ell_{AF}$.
Let $Q(\theta)=X'Y'Z'p'$ be the rotated copy of $Q$ by $\theta$ with $0\le\theta\le\theta_0$ in clockwise
direction around $O$ such that each corner $\kappa'$ of $Q(\theta)$ 
corresponds to the rotated point of $\kappa$ for $\kappa\in\{X,Y,Z,p\}$.
Let $Q_T(\theta)=X^*Y^*Z^*p^*$ be the translated copy of $Q(\theta)$ such that $Y^*\in\gamma_{EF}$, $X^*\in\gamma_{OA}$, and each corner
$\kappa^*$ of $Q_T(\theta)$
corresponds to the translated point of $\kappa'$ for $\kappa'\in\{X',Y',Z',p'\}$.
See Figure~\ref{fig:G3-tri-properties}(b) and (c) for an illustration.

Since $\triangle{XYZ}$ is an equilateral triangle, 
$Z^*$ is contained in $\gt$
as shown in Section~\ref{sec:g3.gt.construction}.
Thus, we show that $p^*\in\gt$.
We may assume that $p \in \gamma_{EF}$.
Let $\bar{p}$ be the intersection between
the horizontal line through $p^*$ and $\gamma_{EF}$.
We show that $f(\theta)=x(\bar{p})-x(p^*)\ge 0$ with $0\le\theta\le\theta_0$, implying $p^*\in\gt$.
Then $f(\theta)=x(\bar{p})-x(p^*)= x(\bar{p})-x(p')+x(p')-x(p^*)=x(\bar{p})-x(p')+x(Y')-x(Y^*)$.
Let $g(\theta_0, \theta)=x(p')-x(\bar{p})$ 
for $0\le\theta\le\theta_0 \le \pi/3$.
Observe that $g(\theta_0, \theta)=x(p')-x(\bar{p})=
h(\theta_0)\cos(\theta_0-\theta)-(\frac{1}{2}-\frac{1}{2}(h(\theta_0)\sin(\theta_0-\theta))^2)$,
where $h(\theta_0)=\frac{1}{1+\cos(\theta_0)}$.
For $\theta_0=\pi/3$, $p'$ is at $Y'$ and $\bar{p}$ is at $Y^*$,
and thus $x(Y')-x(Y^*)=g(\pi/3,\theta)$.
From this, $f(\theta)=g(\pi/3, \theta)-g(\theta_0, \theta)$.
Since $p^*$ is on $\gamma_{EF}$ at $\theta=0$, $f(0)=0$.
Thus, it suffices to show that $g(\theta_0, \theta)$ is 
not decreasing for $\theta_0$ increasing from $\theta$ to $\pi/3$
for any fixed $\theta$ with $0<\theta\le \pi/3$.
\begin{eqnarray*}
\frac{\partial g}{\partial \theta_0} &=& \frac{1}{(1+\cos\theta_0)^3}(g_1(\theta_0)+g_2(\theta_0)) >0,\; \text{where}\\
g_1(\theta_0) &=& \sin\theta_0\sin^2(\theta-\theta_0)+(\sin(\theta-\theta_0)+\sin\theta_1\cos(\theta-\theta_0))(1+\cos\theta_0)\\
&=& \sin\theta_0\sin^2(\theta-\theta_0)+(\cos\theta_0\sin(\theta-\theta_0)+\sin\theta_0\cos(\theta-\theta_0))(1+\cos\theta_0)\\
& & +\sin(\theta-\theta_0)(1-\cos\theta_0)(1+\cos\theta_0) \\
&=& \sin\theta_0\sin^2(\theta-\theta_0)+\sin\theta(1+\cos\theta_0) +\sin(\theta-\theta_0)\sin^2\theta_0\\
&=& (\sin(\theta_0-\theta)-\sin\theta_0)\sin\theta_0\sin(\theta_0-\theta)+\sin\theta+ \sin\theta\cos\theta_0,\;\text{and}\\
g_2(\theta_0) &=& \sin(\theta_0-\theta)(1+\cos\theta_0)(\cos(\theta_0-\theta)-\cos\theta_0).
\end{eqnarray*}
Observe that $\sin\theta\cos\theta_0 \ge 0$. Since $0<\sin\theta_0\sin(\theta_0-\theta)<1$, 
it suffices to show that
$(\sin(\theta_0-\theta)-\sin\theta_0)+\sin\theta > 0$ to show $g_1(\theta_0) > 0$.
Let $\bar{g}(\theta_0)=\sin(\theta_0-\theta)-\sin\theta_0+\sin\theta$.
Since $\frac{\partial \bar{g}}{\partial \theta_0} = \cos(\theta_0-\theta)-\cos\theta_0>0$
and $\bar{g}(\theta) = 0$,
we have $\bar{g}(\theta_0) > 0$, implying $g_1(\theta_0) > 0$.
Observe that $1+\cos\theta_0>0$ and $(\cos(\theta_0-\theta)-\cos\theta_0) > 0$,
so we have $g_2(\theta_0) > 0$.
Thus, $\frac{\partial g}{\partial \theta_0} > 0$.
Since $p^*$ is contained in $\gt$ at $\theta=\theta_0$, $p^*$ is contained in $\gt$.

\begin{figure}[t]
	\centering
	\includegraphics[width=\textwidth]{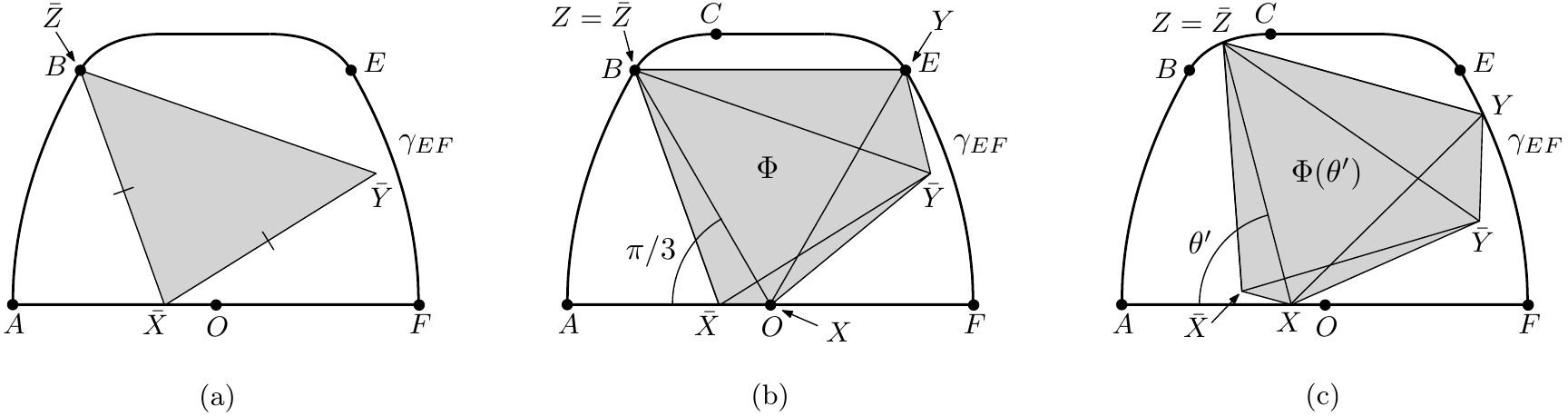}
	\caption{(a) An isosceles triangle $\triangle \bar{X}\bar{Y}\bar{Z}$. 
	(b) Convex hull $\chull$ of $\triangle \bar{X}\bar{Y}\bar{Z}$ and $\triangle XYZ$.
	(c) The rotated copy $\chull(\theta')$ of $\chull$ by $\theta'-\pi/3$.}
	\label{fig:G3-tri-properties_second}
\end{figure}
By Lemma~\ref{lem:G3.curve.length}, any triangle of perimeter $2$
lying in between $\ell_{BE}$ and $\ell_{AF}$ can be translated 
such that it is contained in $\gt$.
Thus, we consider isosceles triangles with one corner lying on 
$\gamma_{FA}$ and one corner lying above $\ell_{BE}$.
Let $\triangle{\bar{X}\bar{Y}\bar{Z}}$ be the isosceles triangle of perimeter $2$ contained in $\gt$
such that its base $\bar{Y}\bar{Z}$ is of length $\ge 2/3$, $\bar{Z}$ is at $B$, and $\bar{X}\in\gamma_{OA}$. 
Observe that $|X\bar{X}| \le 1/3$;
if $|X\bar{X}|>1/3$, $\bar{Y}$ is not contained in $\gt$ because 
$\triangle{\bar{X}\bar{Y}\bar{Z}}$ is an isosceles triangle of perimeter $2$ 
with base length $|\bar{Y}\bar{Z}|\ge 2/3$.
See Figure~\ref{fig:G3-tri-properties_second}(a).

Let $\chull$ be the convex hull of $\triangle{XYZ}$ and $\triangle{\bar{X}\bar{Y}\bar{Z}}$. See Figure~\ref{fig:G3-tri-properties_second}(b).
Imagine that $\chull$ rotates in a clockwise
direction such that $X$ moves along $\gamma_{OA}$ and $Y$ moves 
along $\gamma_{EF}$ until $\bar{Y}$ reaches $\gamma_{FA}$.
Then $Z$ moves along $\gamma_{BC}$, and then moves 
into the interior of $\gt$ during the rotation because $\triangle{XYZ}$ is an equilateral
triangle with $|XZ|=2/3$.
Let $\chull(\theta')$ denote the rotated copy of $\chull$ by $\theta'-\pi/3$,
where $\theta'=\angle{ZXA}$.
Then $\theta'$ increases monotonically from $\pi/3$ during the rotation. 
See Figure~\ref{fig:G3-tri-properties_second}(c).

Consider the rotation for $\theta'$ from $\pi/3$ to $\pi/2$.
Observe that $\bar{Y}$ is contained in $\gt$ during the rotation 
by the argument in the first paragraph of the proof.
Since $Z=\bar{Z}$ and $Z\in\gt$ during the rotation (shown in Section~\ref{sec:g3.gt.construction}),
$\bar{Z}$ is also in $\gt$.

Now we claim that $\bar{X}$ is contained in $\gt$ during the rotation.
For any $\theta'$ with $\pi/3<\theta' \le \pi/2$, $\bar{X}$ lies above 
$\ell_{OA}$, but it lies below $\ell_{BE}$ because $\angle \bar{X}FA\le\pi/6$
for $\theta'\le \pi/2$.
Since $|\bar{X}F|\leq |\bar{X}X|+|XF| \leq |\bar{X}X|+|XY| \le 1$ 
as $|\bar{X}X|\le 1/3$, $\bar{X}$ is contained in $\gt$.
Thus, $\triangle{\bar{X}\bar{Y}\bar{Z}}$ is contained in $\gt$ during the rotation.
Thus the lemma holds.
\end{proof}
\begin{figure}[b]
	\centering
	\includegraphics[width=.6\textwidth]{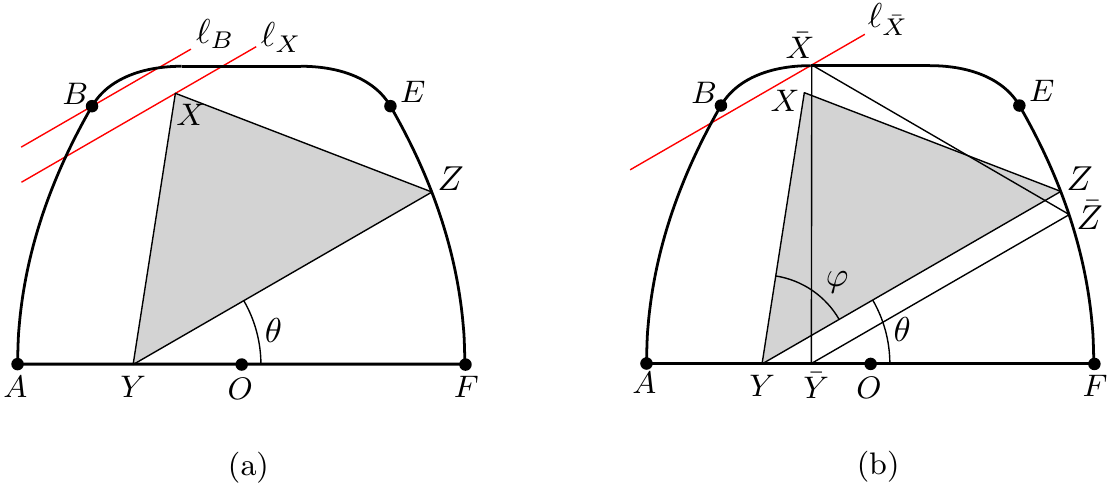}
	\caption{(a) $\ell_X$ lies below $\ell_B$ or $\ell_X=\ell_B$. (b) $X$ lies below or on $\ell_{\bar{X}}$.}
	\label{fig:lem2425}
\end{figure}
\begin{lemma} 
\label{lem:LX.BF.intersect}
Let $\triangle{XYZ}$ be an isosceles triangle of perimeter $2$ 
such that its base $YZ$ is of length $\ge 2/3$, $Y\in\gamma_{OA}$, 
and $Z\in\gamma_{EF}$. The line through $X$ and 
parallel to $YZ$ intersects $\gamma_{OB}$.
\end{lemma}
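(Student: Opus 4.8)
My approach will be an intermediate‑value argument: I will show that the endpoint $O$ of the arc $\gamma_{OB}$ lies strictly on one side of the line $\ell_X$ through $X$ parallel to $YZ$, while the other endpoint $B=(-\tfrac13,\tfrac1{\sqrt3})$ lies on $\ell_X$ or strictly on the other side. Since $\gamma_{OB}$ is a simple arc from $O$ to $B$ (the segment $OA$ followed by the lower‑left parabolic boundary arc $\gamma_{AB}$ of $\gt$), it must then meet $\ell_X$.

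For the setup I will parametrise by $Y=(-u,0)$ with $u\in[0,\tfrac12]$ (so $Y\in\gamma_{OA}$) and $Z=(v,\sqrt{1-2v})$ with $v\in[\tfrac13,\tfrac12]$ (so $Z\in\gamma_{EF}$), and write $s=\sqrt{1-2v}/(u+v)\ge 0$ for the slope of $YZ$. As $\triangle XYZ$ is isosceles of perimeter $2$ with base $YZ$, its legs have length $(2-|YZ|)/2$, so the apex $X$ sits at perpendicular distance $h=\sqrt{1-|YZ|}$ from $\ell_{YZ}$ on the side facing the interior of $\gt$ (equivalently, with positive $y$‑coordinate). Since $Y$ has the same height as $O$ and $s\ge0$, the point $O$ lies weakly below $\ell_{YZ}$, and $\ell_X$ lies a perpendicular distance $h>0$ above $\ell_{YZ}$ on $X$'s side, so $O$ is strictly below $\ell_X$. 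Computing the apex, the height of $\ell_X$ at abscissa $0$ equals $c:=su+h\sqrt{1+s^2}$, so it remains to prove $c\le\tfrac1{\sqrt3}+\tfrac s3$, i.e.\ that $B$ lies on or above $\ell_X$.

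I will reduce this inequality to an elementary one. With $w=u+v$ and $|YZ|=\sqrt{w^2+(1-2v)}$, so $\sqrt{1+s^2}=|YZ|/w$, multiplying through by $w$ turns the target into $\sqrt{1-2v}\,u+|YZ|\sqrt{1-|YZ|}\le\frac{w}{\sqrt3}+\frac{\sqrt{1-2v}}{3}$. The difference of the two sides is non‑increasing in $u$: its $u$‑derivative is $\bigl(\sqrt{1-2v}-\tfrac1{\sqrt3}\bigr)+\frac{w(2-3|YZ|)}{2|YZ|\sqrt{1-|YZ|}}$, and both summands are $\le 0$ because $\sqrt{1-2v}\le\tfrac1{\sqrt3}$ for $v\ge\tfrac13$ and $|YZ|\ge\tfrac23$. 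Hence it suffices to check the inequality at the least admissible $u$ for each $v$, namely $u_{\min}(v)=\sqrt{2v-\tfrac59}-v$, which is the locus $|YZ|=\tfrac23$; there $|YZ|\sqrt{1-|YZ|}=\tfrac2{3\sqrt3}$, and writing $w=\sqrt{2v-\tfrac59}$ (so $\sqrt{1-2v}=\sqrt{\tfrac49-w^2}$ and $v=\tfrac{w^2}2+\tfrac5{18}$) the claim becomes $\sqrt{\tfrac49-w^2}\,\bigl(w-\tfrac{w^2}2-\tfrac{11}{18}\bigr)\le\frac{3w-2}{3\sqrt3}$ for $w\in[\tfrac13,\tfrac23]$. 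Both sides are $\le 0$ on that interval, so I will negate, square (valid between non‑negative quantities), and use $\tfrac49-w^2=(\tfrac23-w)(\tfrac23+w)$ and $2-3w=3(\tfrac23-w)$ to reduce it to $(2+3w)(9w^2-18w+11)^2\ge 108(2-3w)$ on $[\tfrac13,\tfrac23]$; since $9w^2-18w+11=9(w-1)^2+2>0$, the value $w=\tfrac13$ is a double root of the difference and the remaining cubic factor is positive on the interval, so this holds — with equality only at the two degenerate equilateral placements $w=\tfrac13$ and $w=\tfrac23$, in which $\ell_X$ passes through $B$ itself.

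The hardest part is the final inequality, which is tight at both ends of the parameter range; consequently the monotonicity‑in‑$u$ reduction and the ensuing polynomial estimate must be carried out carefully, and in particular the term $\sqrt{1-2v}\,u$ cannot be thrown away. A secondary point is that $X$ must be taken as the apex on the interior side of $\ell_{YZ}$: for the other apex $\ell_X$ need not meet $\gamma_{OB}$, so this choice (which is in any case the one consistent with $\triangle XYZ\subseteq\gt$) is essential.
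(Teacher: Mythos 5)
Your proof is correct, and its skeleton matches the paper's: both arguments come down to showing that $B$ lies on the far side of $\ell_X$ (equivalently, that the apex is no farther from $\ell_{YZ}$ than $B$ is), reduce to the extremal case $|YZ|=2/3$ by a monotonicity argument in one parameter, and then verify a one-variable inequality that is tight at the endpoints. The differences are in the execution, and they are worth noting. The paper parametrizes by the height $y$ of $Z$ and the slope angle $\theta$ of $YZ$, and must first establish convexity of its distance function in $y$ (a second-derivative computation) before checking the sign of $\partial f/\partial y$ on the boundary $|YZ|=2/3$; you instead fix $Z$ and slide $Y$, and your $u$-derivative splits into two terms whose signs are immediate from $v\ge 1/3$ and $|YZ|\ge 2/3$, which is cleaner. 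Your endgame is a quintic polynomial inequality with an explicit double root at $w=1/3$ and a positive cubic cofactor (I checked: the cofactor is $243w^3-648w^2+459w+234$, which equals $324$ at both ends of $[1/3,2/3]$ and has its minimum there), whereas the paper analyzes the roots of a trigonometric expression $h(\theta)$ and invokes continuity plus a sample value. Finally, you make explicit two things the paper leaves terse: the intermediate-value step along the connected arc $\gamma_{OB}$ (using that $O$ lies weakly below $\ell_{YZ}$ and hence strictly below $\ell_X$), and the standing assumption that $X$ is the apex on the interior side of $\ell_{YZ}$, without which the statement can fail. Two trivial imprecisions that do not affect correctness: equality at $w=2/3$ holds in the original inequality but not in the divided polynomial form, and one should note that $u_{\min}(v)=\sqrt{2v-5/9}-v\ge 0$ on $[1/3,1/2]$ so that the $|YZ|\ge 2/3$ constraint, not $u\ge 0$, is the binding one.
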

\begin{proof}
Let $\ell_X$ and $\ell_{B}$ be the lines parallel to $YZ$ such that
$\ell_X$ passes through $X$ and $\ell_{B}$ passes through $B$.
See Figure~\ref{fig:lem2425}(a).
Let $\theta=\angle{ZYF}$ and let $y$ denote the $y$-coordinate of $Z$.
Then $0 \leq y \leq \sqrt{3}/3$ and 
$\sin^{-1}y \leq \theta \leq \sin^{-1}(\frac{3y}{2})$.
Let $f(y, \theta)$ be the distance between $\ell_{B}$ and $X$,
which is $f(y, \theta) =(\frac{\sqrt{3}}{3}-y)\cos\theta+(\frac{5-3y^2}{6})\sin\theta-\sqrt{1-\frac{y}{\sin\theta}}$.

We will show that 
for $y$ with $0 \leq y \leq \sqrt{3}/3$ 
and for $\theta$ with $\sin^{-1}y \leq \theta \leq \sin^{-1}(\frac{3y}{2})$ (or equivalently $\frac{2\sin\theta}{3} \le y \le \min\{\sqrt{3}/3,\sin\theta\}$ and $0\le\theta\le\pi/3$),
\begin{equation*}
\frac{\partial f}{\partial y} = -y\sin\theta-\cos\theta+\frac{1}{2\sqrt{\sin^2\theta-y\sin\theta}}\geq 0.
\end{equation*}

We have
$\frac{\partial^2 f}{\partial y^2} = \sin\theta\left(\frac{1}{4\sin\theta(\sin\theta-y)^{3/2}}-1\right).$
Since $0\le\sin\theta-y \le \frac{\sin\theta}{3}$, we have $0\le4\sin\theta(\sin\theta-y) \le \frac{4\sin^2\theta}{3} \le 1$.
Thus, $0\le4\sin\theta(\sin\theta-y)^{3/2} \le 1$, implying $\frac{\partial^2 f}{\partial y^2} \ge 0$.
So it suffices to show $\frac{\partial f}{\partial y} \ge 0$ for $y=\frac{2\sin\theta}{3}$.
\begin{equation*}
\frac{\partial f}{\partial y}\mid_{y=\frac{2\sin\theta}{3}}=-\frac{2\sin^2\theta}{3}-\cos\theta+\frac{\sqrt{3}}{2\sin\theta} 
= \frac{1}{\sin\theta}\left(-\frac{2\sin^3\theta}{3}-\frac{\sin2\theta}{2}+\frac{\sqrt3}{2}\right)
\end{equation*}

Let $g(\theta)=-\frac{2\sin^3\theta}{3}-\frac{\sin2\theta}{2}+\frac{\sqrt3}{2}$. Then
\begin{eqnarray}
\frac{\partial g}{\partial \theta}&=-\cos2\theta-2\sin^2\theta\cos\theta\\
&=-\cos2\theta+2\cos^3\theta-2.
\end{eqnarray}

For $\theta \le\pi/4$, we have $\frac{\partial g}{\partial \theta} \le 0$ by equation (1).
For $\theta$ with $\pi/4 < \theta \le\pi/3$, we have $2\cos^3\theta \le 1$ and $-\cos2\theta \le 1$ so $\frac{\partial g}{\partial \theta} \le 0$ by equation (2).
Therefore, $g(\theta) \ge g(\pi/3)=0$, implying $\frac{\partial f}{\partial y} \ge 0$.

Now we show that $f(\frac{2\sin\theta}{3}, \theta) \ge 0$ for $\theta$ with $0 \le \theta \le \pi/3$.
Let $h(\theta):= f(\frac{2\sin\theta}{3}, \theta)$. Then
\begin{equation*}
h(\theta)=\frac{5\sin\theta}{6}-\frac{2\sin^3\theta}{9}+\frac{\cos\theta}{\sqrt3}-\frac{2\sin\theta\cos\theta}{3}-\frac{1}{\sqrt3}.
\end{equation*}
Let $t=\sin\theta$. Then $\sqrt{1-t^2}=\cos\theta$ and $0\le t \le \frac{\sqrt3}{2}$.
\begin{eqnarray*}
h(\theta)=0 
&\Rightarrow& -\frac{2}{9}t^3+\frac{5}{6}t-\frac{1}{\sqrt3}=\left(\frac{2}{3}t-\frac{1}{\sqrt3}\right)\sqrt{1-t^2} \\ 
&\Rightarrow & \left(\frac{2}{3}t-\frac{1}{\sqrt3}\right)\left(-\frac{1}{3}t^2-\frac{1}{2\sqrt3}t+1\right)=\left(\frac{2}{3}t-\frac{1}{\sqrt3}\right)\sqrt{1-t^2} \\
&\Rightarrow & \left(-\frac{1}{3}t^2-\frac{1}{2\sqrt3}t+1\right)^2=1-t^2 \;\text{or}\; t=\frac{\sqrt3}{2} \\
&\Rightarrow & \frac{1}{9}t^4+\frac{1}{3\sqrt3}t^3+\frac{5}{12}t^2-\frac{1}{\sqrt3}t=0\; \text{or}\; t=\frac{\sqrt3}{2} \\
&\Rightarrow & t(t-\frac{\sqrt3}{2})(\frac{1}{9}t^2+\frac{1}{2\sqrt3}t+\frac{2}{3})\; \text{or}\; t=\frac{\sqrt3}{2} \\
&\Rightarrow & t= 0\; \text{or}\; t=\frac{\sqrt3}{2} \; \Leftrightarrow \; \theta= 0\; \text{or}\; \theta=\frac{\pi}{3}
\end{eqnarray*}

Since $h(\pi/6)=8/9-\sqrt3/2>0$ and $h$ is continuous with roots only at $0$ and $\pi/3$, we have $h(\theta) \ge 0$ for $\theta$ with $0 \le \theta \le \pi/3$.

Observe that $f(y, \theta)$ is minimized at $y = \frac{2\sin\theta}{3}$ for a fixed $\theta$.
In other words, $f(y, \theta)$ is minimized when $\triangle{XYZ}$ is an equilateral triangle.
Since $f(y, \theta) \ge 0$ for $y = \frac{2\sin\theta}{3}$,
any isosceles triangle $\triangle XYZ$ such that 
$\ell_X$ does not intersect $\gamma_{OB}$ has 
height larger than $\sqrt{3}/3$.
Thus, $\ell_X$ intersects $\gamma_{OB}$.
\end{proof}

\begin{lemma} \label{lem:iso.rotate2} 
Let $\triangle{XYZ}$ be an isosceles triangle of perimeter $2$ 
such that its base $YZ$ is of length $\geq 2/3$, 
$X$ lies above or on $\ell_{YZ}$, $Y\in\gamma_{OA}$, 
and $Z\in\gamma_{EF}$. Then $\triangle{XYZ}$ is contained in $\gt$. 
\end{lemma}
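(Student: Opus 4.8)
\emph{Proof plan.}
The plan is to reduce the lemma to a single containment. Since $\gt$ is closed and convex and the hypotheses put $Y\in\gamma_{OA}$ and $Z\in\gamma_{EF}$, both sub-arcs of $\partial\gt$, we have $Y,Z\in\gt$, so it suffices to prove that the apex $X$ lies in $\gt$; then $\triangle{XYZ}=\mathrm{conv}\{X,Y,Z\}\subseteq\gt$. If $X$ lies on $\ell_{YZ}$ the triangle degenerates to the chord $YZ$ and there is nothing to prove, so assume $X$ is strictly above $\ell_{YZ}$. Put $b=|YZ|$; since the perimeter is $2$ each leg equals $1-b/2\le 2/3$, so $X$ lies at perpendicular distance $h=\sqrt{1-b}\le 1/\sqrt3$ above $\ell_{YZ}$. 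Because $Y$ lies on the horizontal bottom side of $\gt$ and $Z$ on the right side of $\gt$, the line $\ell_{YZ}$ has slope in $[0,\pi/3]$, attaining $\pi/3$ only for $Y=O$, $Z=E$, where one checks $X$ then lands exactly at the corner $B$.

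First I would fence $X$ in from the left and below. By Lemma~\ref{lem:LX.BF.intersect} the line $\ell_X$ through $X$ parallel to $YZ$ meets $\gamma_{OB}$, and the estimates in that proof say moreover that $X$ never crosses the line through $B$ parallel to $YZ$ to the outside of $\gt$. Since $\ell_X$ has nonnegative slope, meets the lower-left arc $\gamma_{OB}=\gamma_{OA}\cup\gamma_{AB}$, and sits at height at most $1/\sqrt3$ above $Y$, this confines $X$ to the inner side of the bottom side and of the left side $\gamma_{AB}$; and since $X$ is above $\ell_{YZ}$ while $Z$ already lies on the right side and $Y$ on the bottom well to its left, nothing can protrude on the right. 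Hence the only way $X\notin\gt$ could occur is that $X$ protrudes across the top, i.e.\ across $\gamma_{BC}\cup\gamma_{CD}\cup\gamma_{DE}$.

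The core of the proof is therefore to bound $X$ from above. I would parametrize $Z$ by its position on $\gamma_{EF}$ (via $y^2=1-2x$) and $Y$ by its position on $\gamma_{OA}$, subject to $|YZ|\ge 2/3$, write $X$ explicitly as the point at distance $\sqrt{1-|YZ|}$ from the midpoint of $YZ$ along the upward unit normal, and split on the height of $X$. When $X_y\le 1/\sqrt3$, the only part of $\partial\gt$ possibly lying over $X$ is an arc of one of the two defining parabolas, and the bound follows from a direct estimate with $y^2=1\mp 2x$. When $X_y>1/\sqrt3$, I would reduce to Lemma~\ref{lem:iso.rotate1}: in the equilateral case the relabelled triangle has its vertices on $\gamma_{OA}$, on $\gamma_{EF}$, and (the potentially offending vertex $X$) on or below $\gamma_{BC}$, which is exactly the configuration swept out by the clockwise rotation of Lemma~\ref{lem:iso.rotate1} — whose third vertex traces $\gamma_{BC}$ and then retreats into the interior — so the equilateral apex is in $\gt$; the general isosceles case on the same base line then follows from the equilateral one by the convex-hull-under-rotation device used in the proof of Lemma~\ref{lem:iso.rotate1}, taking the convex hull of $\triangle{XYZ}$ with the equilateral triangle on that base line and checking that it rotates inside $\gt$.

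The step I expect to be the real obstacle is this last one: controlling $X$ once it has risen above height $1/\sqrt3$, where the boundary of $\gt$ is the implicitly defined shaved curve $\gamma_{BC}$ rather than a conic. The clean route is to identify this configuration with a stage of the Lemma~\ref{lem:iso.rotate1} rotation, but making that identification airtight — verifying that every admissible pair of positions of $Y$ on $\gamma_{OA}$ and $Z$ on $\gamma_{EF}$ with $|YZ|\ge 2/3$ is realized by, or dominated by, that rotation, and that the hypothesis that $X$ lies above $\ell_{YZ}$ rather than below is exactly what makes $\gamma_{BC}$ (and not $\gamma_{DE}$) the binding side — is where the bookkeeping is delicate.
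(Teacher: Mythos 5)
Your skeleton coincides with the paper's: reduce to showing the apex $X\in\gt$, observe that the equilateral apex with vertices on $\gamma_{OA}$ and $\gamma_{EF}$ traces the shaved arc $\gamma_{BC}\subset\partial\gt$ by the very construction of Section~\ref{sec:g3.gt.construction}, and then dominate the general isosceles apex by the equilateral one. The gap is exactly the step you flag as ``delicate'': the mechanism you propose for that domination does not transfer. The convex-hull-under-rotation device in the proof of Lemma~\ref{lem:iso.rotate1} attaches a point $p$ lying \emph{below} $\ell_{YO}$ (the proof even reduces to $p\in\gamma_{EF}$) to a single rigid equilateral triangle and tracks $p$ through one rigid rotation; your offending apex lies \emph{above} the base line, on the opposite side, and the triangles of Lemma~\ref{lem:iso.rotate2} --- with $Y$ and $Z$ ranging independently over $\gamma_{OA}$ and $\gamma_{EF}$ subject only to $|YZ|\ge 2/3$ --- are not the orbit of one convex body under a rotation, so there is no single convex hull to rotate. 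Taking the hull of $\triangle{XYZ}$ with ``the equilateral triangle on that base line'' does not help either: the comparison equilateral triangle has its base parallel to $YZ$ but on a different line with different endpoints, the two triangles share no vertex, and nothing in Lemma~\ref{lem:iso.rotate1} controls such a hull.

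The missing idea is an elementary monotonicity, which is what the paper uses. With $\varphi=\angle{XYZ}$ and $\theta=\angle{ZYF}$ (both in $[0,\pi/3]$), the $y$-coordinate of the apex is $f(\varphi)=\sin(\varphi+\theta)/(1+\cos\varphi)$, and $f'(\varphi)\ge 0$; hence among isosceles triangles of perimeter $2$ with base slope $\theta$ and base endpoints on the two arcs, the apex height is maximized by the equilateral one, whose apex $\bar{X}$ lies \emph{on} $\gamma_{BC}$. Combined with the computation already inside the proof of Lemma~\ref{lem:LX.BF.intersect} (the perpendicular distance from the apex to $\ell_B$ is minimized in the equilateral case, so $X$ lies on or below $\ell_{\bar{X}}$), this pins $X$ under the boundary in one stroke, with no case split at $X_y=1/\sqrt3$ and no second rotation argument. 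If you keep your two-case structure, you must still supply, for the high case, a genuine pointwise comparison between the isosceles apex and the equilateral apex on the same pair of arcs; the monotonicity of $f$ is that comparison, and without it (or an equivalent) the proof is incomplete.
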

\begin{proof} 
Let $\triangle{XYZ}$ an isosceles triangle satisfying the conditions in the lemma
statement with $\varphi=\angle{XYZ}$ and $\theta=\angle{ZYF}$.
Observe that $0 \le \varphi, \theta \le \pi/3$.
Let $f(\varphi)$ be the $y$-coordinate of $X$.
Then $f(\varphi)=\frac{\sin(\varphi+\theta)}{1+\cos\varphi}$.
Since $f'(\varphi) \ge 0$, $f(\varphi)$ is maximized at $\varphi = \pi/3$, 
implying that $f(\varphi)$ is maximized when $\triangle{XYZ}$ is the equilateral triangle.

Let $\triangle{\bar{X}\bar{Y}\bar{Z}}$ be the equilateral triangle such that 
$\bar{Y}\in\gamma_{OA}$, $\bar{Z}\in\gamma_{EF}$, and $\bar{Y}\bar{Z}$ 
is parallel to $YZ$.
Let $\ell_{\bar{X}}$ be the line parallel to $\bar{Y}\bar{Z}$ and passing through $\bar{X}$.
See Figure~\ref{fig:lem2425}(b).
Then $X$ lies below or on $\ell_{\bar{X}}$ by the proof in Lemma~\ref{lem:LX.BF.intersect}, 
and its $y$-coordinate $f(\varphi)$ is smaller than or equal to the 
$y$-coordinate of $\bar{X}$ by the argument in the previous paragraph.
Since $\bar{X}$ is on the boundary of $\gt$, $X$ is contained in $\gt$.
\end{proof}
\begin{figure}[t]
	\centering
	\includegraphics[width=\textwidth]{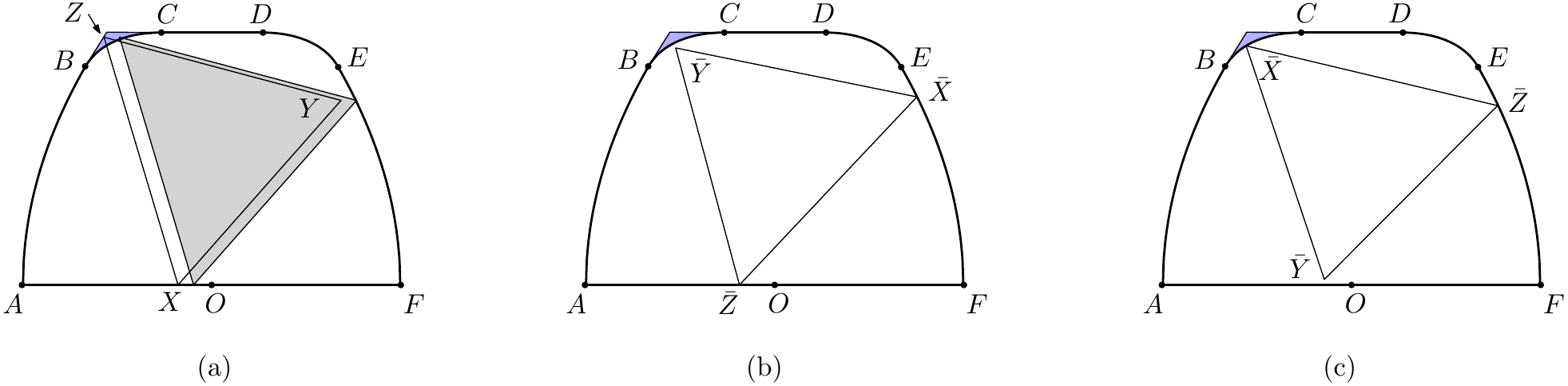}
	\caption{(a) $Z\in R_1$ and $ZX$ is the longest side. 
	(b) A copy $\triangle{\bar{X}\bar{Y}\bar{Z}}$ of $\triangle{XYZ}$ rotated by $2\pi/3$. (c) A copy $\triangle{\bar{X}\bar{Y}\bar{Z}}$ of $\triangle{XYZ}$ rotated by $-2\pi/3$.}
	\label{fig:G3-tri-cases-Z}
\end{figure}

Now we are ready to prove Lemma~\ref{lem:triangle.R1}.
\begin{proof}
Translate $\triangle{XYZ}$
to the right until $Y$ meets $\gamma_{EF}$.
See Figure~\ref{fig:G3-tri-cases-Z}(a). 
If $Z\in\gt$, then $\triangle{XYZ}\subset\gt$ and we are done.

Suppose that $Z\in R_1$.
If $X\in\gamma_{FO}$, $\triangle{XYZ}$ lies in the right of $\ell_{OB}$ (including the line), and by Lemma~\ref{lem:triangle.left.line}, $\gt$ is a $G_3$-\container of
$\triangle{XYZ}$. 
So assume that $X\in\gamma_{OA}$, and thus $|XY| \geq 1/2$. 
There are three cases
that the longest side of $\triangle{XYZ}$ is (1) $XZ$, (2) $YZ$, or (3) $XY$.
For each case, we show that there is a rotated copy of $\triangle{XYZ}$
that is contained in $\gt$.

Consider case (1) that $XZ$ is the longest side. There are two subcases, 
$|XY| \geq |YZ|$ or $|XY| < |YZ|$.
Suppose $|XY| \geq |YZ|$.
Let $\triangle{\bar{X}\bar{Y}\bar{Z}}$ be the copy of $\triangle{XYZ}$ rotated by $2\pi/3$ 
such that $\bar{Z}\in\gamma_{FA}$ and $\bar{X}$ lies on the right side of $\Gamma$. 
Since $\angle{ZXF} > \pi/3$, $\bar{Z}$ is the lowest corner.
So $\bar{X}$ lies on the right side of $\Gamma^+$.
Since $\angle{\bar{Y}\bar{Z}F} \le 2\pi/3$, 
$\bar{Y}$ lies in the right of $\ell_{OB}$ (including the line).
If $\bar{Z}\in\gamma_{FO}$,
$\triangle{\bar{X}\bar{Y}\bar{Z}}$ lies in the right of $\ell_{OB}$ (including the line),
and by Lemma~\ref{lem:triangle.left.line}, $\gt$ is a $G_3$-\container of $\triangle{XYZ}$.
Thus, assume that $\bar{Z}\in\gamma_{OA}$.
If $\bar{X}$ lies in the left of $\ell_{OE}$ (including the line), then 
$\triangle{\bar{X}\bar{Y}\bar{Z}}$ lies in the left of $\ell_{OE}$ (including the line), 
and by Lemma~\ref{lem:triangle.left.line}, $\gt$ is a $G_3$-\container of
$\triangle{XYZ}$.
So assume $\bar{X}\in\gamma_{EF}$.
See Figure~\ref{fig:G3-tri-cases-Z}(b).
Let $Y^{*}$ be a point such that $\triangle{\bar{X}Y^{*}\bar{Z}}$ is a isosceles triangle of perimeter 2 
with base $\bar{Z}\bar{X}$ and lies in the left of $\ell_{\bar{Z}\bar{X}}$ 
(including the line).
Then both $\bar{Y}$ and $Y^{*}$ are on the same ellipse curve with foci $\bar{X}$ and $\bar{Z}$.
By Lemma~\ref{lem:iso.rotate2}, $Y^{*}\in\gt$.
Since $|\bar{X}\bar{Y}| \geq |\bar{Y}\bar{Z}|$, $\bar{Y}$ lies on or below the axis line of the ellipse through $Y^*$.
Since $\bar{Y}$ lies on or below the tangent line of the ellipse at $Y^*$, by Lemma~\ref{lem:LX.BF.intersect}, $\bar{Y}\in\gt$ and thus $\triangle{\bar{X}\bar{Y}\bar{Z}}\subset\gt$.

Suppose now $|XY| < |YZ|$.
Let $\triangle{\bar{X}\bar{Y}\bar{Z}}$ be the copy of $\triangle{XYZ}$ rotated by $-2\pi/3$ 
such that $\bar{X}\in\gamma_{BD}$ and $\bar{Z}\in\gamma_{EF}$.
Since $\angle{ZXF} < 2\pi/3$, $\bar{X}$ is the highest corner.
Let $Y^{*}$ be a point such that $\triangle{\bar{X}Y^{*}\bar{Z}}$ is a isosceles triangle with base $\bar{Z}\bar{X}$ and lies below $\ell_{\bar{Z}\bar{X}}$.
Then both $\bar{Y}$ and $Y^{*}$ are on the same ellipse curve with foci $\bar{X}$ 
and $\bar{Z}$.
By Lemma~\ref{lem:iso.rotate1}, $Y^{*}\in\gt$.
Since $|\bar{X}\bar{Y}| < |\bar{Y}\bar{Z}|$, $\bar{Y}$ lies on or above the axis line of 
the ellipse through $Y^*$.
Since $\bar{Y}$ lies on or above the line tangent to the ellipse at $Y^{*}$, 
$\bar{Y}$  lies above $\ell_{FA}$.
Since $XZ$ is the longest side, $\bar{Y}\in\gt$ and thus $\triangle{\bar{X}\bar{Y}\bar{Z}}\subset\gt$.

Now consider case (2) that $YZ$ is the longest side. Translate $\triangle{XYZ}$ 
such that $Y\in\gamma_{EF}$ and $Z\in\gamma_{BC}$.
If $|XY| \ge |ZX|$, then $X\in\gt$ by the argument for case (1).
So assume that $|XY|< |ZX|$.
Let $\triangle{\bar{X}\bar{Y}\bar{Z}}$ be the copy of $\triangle{XYZ}$ rotated by $-2\pi/3$ 
such that $\bar{Y}\in\ell_{FA}$ and $\bar{Z}$ lies on the right side of 
$\Gamma^+$.
Then $\angle{\bar{Z}\bar{Y}F} \leq \pi/3$. 
If $\triangle{\bar{X}\bar{Y}\bar{Z}}$ lies to the right of  $\ell_{OB}$ or to the left of $\ell_{OE}$ (including the lines), 
$\gt$ is a $G_3$-\container of the triangle by Lemma~\ref{lem:triangle.left.line}.
So assume $\bar{Z}\in\gamma_{EF}$ and $\bar{Y}\in\gamma_{OA}$.
Since $|XY|<|ZX|$, $\bar{X}\in\gt$ by the argument for case (1),
and thus $\triangle{\bar{X}\bar{Y}\bar{Z}}\subset\gt$.

Finally, consider case (3) that $XY$ is the longest side. Translate $\triangle{XYZ}$ 
such that $X\in\gamma_{OA}$ and $Y\in\gamma_{EF}$.
If $|XZ| \leq |YZ|$, then $Z\in\gt$ by the argument for case (1).
So assume that $|XZ|>|YZ|$.
Let $\triangle{\bar{X}\bar{Y}\bar{Z}}$ be the copy of $\triangle{XYZ}$ rotated by $2\pi/3$ 
such that $\bar{X}$ lies on the right side of $\Gamma^+$ and $\bar{Z}\in\gamma_{FA}$.
Since $\angle{ZXF} > \pi/3$, $\bar{Z}$ is the lowest corner and $\angle{\bar{X}\bar{Z}F} \leq \pi/3$. 
If $\triangle{\bar{X}\bar{Y}\bar{Z}}$ lies to the right of $\ell_{OB}$ or to the left of $\ell_{OE}$
(including the lines), 
$\gt$ is a $G_3$-\container of the triangle by Lemma~\ref{lem:triangle.left.line}.
So assume $\bar{X}\in\gamma_{EF}$ and $\bar{Z}\in\gamma_{OA}$.
Since $\bar{Y}\bar{Z}$ is the shortest side, $\bar{Y}$ lies below $\ell_{CD}$. Thus, it belongs to
case (2).
\end{proof}

Combining Lemmas~\ref{lem:triangle.left.line},~\ref{lem:triangle.R2}, and~\ref{lem:triangle.R1},
we have the following result. 
\begin{theorem}  \label{thm:triangle.G3}
$\gt$ is a convex $G_3$-\container of all triangles of perimeter $2$.
\end{theorem}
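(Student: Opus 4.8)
The plan is to show that an arbitrary triangle $\triangle{XYZ}\in\triset$ can be placed inside $\gt$ by a suitable element of $G_3$; since $\gt$ is an explicitly given convex region, containment is all that is required. First I would recall that $\Gamma^+$ is a $T$-\container of $\ccset$, and in particular that, by Lemma~\ref{lem:G3.curve.length}, any triangle of perimeter $2$ fits in $\Gamma^+$ under translation. So I may assume $\triangle{XYZ}\subseteq\Gamma^+$ and, after a vertical slide, that it touches the bottom side of $\gt$. If already $\triangle{XYZ}\subseteq\gt$ there is nothing to prove.

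The core of the argument is a case analysis on where the corners of $\triangle{XYZ}$ sit relative to the two oblique lines $\ell_{OB}$ and $\ell_{OE}$ bounding the two shaved-off regions $R_1$ (left) and $R_2$ (right) of $\gcc\setminus\gt$. If some placement of $\triangle{XYZ}$ inside $\Gamma^+$ lies entirely to the left of $\ell_{OE}$ or entirely to the right of $\ell_{OB}$ (including the line), then Lemma~\ref{lem:triangle.left.line} immediately yields that $\gt$ is a $G_3$-\container of it. Otherwise every placement has a corner strictly to the left of $\ell_{OB}$ and a corner strictly to the right of $\ell_{OE}$; label these $X$ and $Y$. Since $\gt$ is convex and $\triangle{XYZ}\not\subseteq\gt$, the third corner $Z$ lies in $R_1\cup R_2$. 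Sliding $\triangle{XYZ}$ horizontally leftwards until $X$ or $Z$ meets the left side of $\Gamma$ (and appealing once more to Lemma~\ref{lem:triangle.left.line} if the triangle ends up left of $\ell_{OE}$), I reduce to exactly two configurations: either $Z\in R_2$, which forces $X$ to coincide with the corner $A$; or $Z\in R_1$, with $X$ on the arc $\gamma_{OA}$ and $Y$ to the right of $\ell_{OE}$.

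These two configurations are precisely the hypotheses of Lemma~\ref{lem:triangle.R2} and Lemma~\ref{lem:triangle.R1} respectively, each of which produces a rotated copy (by $2\pi/3$ or $-2\pi/3$) of $\triangle{XYZ}$ contained in $\gt$. Since every triangle of perimeter $2$ falls into one of these cases, $\gt$ is a convex $G_3$-\container of $\triset$, which is Theorem~\ref{thm:triangle.G3}.

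The combination step itself is routine bookkeeping; the genuine work has already been absorbed into the two case lemmas. The hardest part is Lemma~\ref{lem:triangle.R1}, whose proof splits further according to which side of $\triangle{XYZ}$ is longest and relies on the auxiliary rotation lemmas (Lemmas~\ref{lem:iso.rotate1}, \ref{lem:LX.BF.intersect}, \ref{lem:iso.rotate2}): one must verify that the rotated copy's corners stay inside the shaved boundary arc $\gamma_{EF}$, which reduces to monotonicity of explicit trigonometric functions of the rotation angle and to the fact that among isosceles triangles of perimeter $2$ with a fixed base direction the equilateral one is extremal. Keeping track of which corner plays which role under the rotation, and confirming that the fall-through to Lemma~\ref{lem:triangle.left.line} is always available whenever a corner escapes into the appropriate halfplane, is where care is needed.
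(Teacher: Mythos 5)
Your proposal follows the paper's proof essentially verbatim: dispose of triangles lying entirely to the left of $\ell_{OE}$ or to the right of $\ell_{OB}$ via Lemma~\ref{lem:triangle.left.line}, then reduce to the two configurations $Z\in R_2$ (with $X$ at $A$) and $Z\in R_1$ (with $X$ on $\gamma_{OA}$), which are exactly the hypotheses of Lemmas~\ref{lem:triangle.R2} and~\ref{lem:triangle.R1}. The only slip is that the initial placement should be taken inside $\gcc$ (available via Theorem~\ref{theorem:curve.2.contain}) rather than merely inside $\Gamma^+$, since the conclusion that the third corner lies in $R_1\cup R_2=\gcc\setminus\gt$ requires the triangle to sit inside $\gcc$, not just below the parabolic arcs.
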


\section{Conclusion}
We considered the smallest-area covering of planar objects of perimeter $2$ 
allowing translation and discrete rotations of multiples of $\pi, \pi/2,$ and $2\pi/3$.
We gave a geometric and elementary proof of the smallest-area convex \containers
for translation and rotation of $\pi$ while showing convex coverings 
for the other discrete rotations. 
We also gave the smallest-area convex \containers of all unit segments 
under translation and discrete rotations $\pi/k$ for all integers $k\ge 3$.
Open problems include the optimality proof of the equilateral triangle
covering for the rotation of multiples of $\pi/2$, 
and the smallest-area coverings allowing other discrete rotations 
with clean mathematical solutions.

\newpage
\section*{Appendix}
We give a proof of Corollary~\ref{cor:t.container.worm} using 
a reflecting argument similar to the proof of Theorem~\ref{thm:G2.main}.
\begin{proof}
Given a worm $\gamma$, let $\gamma'$ be its largest scaled copy contained in $\triangle_1$.
It suffices to show that the length of $\gamma'$ is not shorter than $1$.

Observe that $\gamma'$ touches all three edges of $\triangle_1$, 
possibly touching two edges simultaneously at a vertex, since 
otherwise we can translate $\gamma'$ into the interior of $\triangle_1$
and get a larger scaled copy of $\gamma'$ contained in $\triangle_1$.
If $\gamma'$ touches a vertex of $\triangle_1$, it connects the vertex 
and the edge opposite to the vertex. Thus, its length is not shorter than 
the height $1$ of $\triangle_1$.

Consider the case that $\gamma'$ touches the three edges of $\triangle_1$ at 
three distinct points $a$, $b$, $c$ along $\gamma'$ in the order. 
Then the length of $\gamma'$ is not shorter than the two-leg polyline 
$abc$ with the joint point $b$.  Now, $abc$ has the same length as $abc'$, 
where $c'$ is the reflection of $c$ with respect to the edge $e$ of 
$\triangle_1$ containing $b$.  Consider the union of $\triangle_1$ and the reflected copy of $\triangle_1$ by $e$. The union is a rhombus of 
height $1$ containing $a$ and $c'$ on opposite parallel edges.  
Thus the length of $abc'$ is at least $1$, and we complete the proof.
\end{proof}
\end{document}